\tikzstyle{solid node}=[circle,draw,inner sep=2,fill=black] 
\tikzstyle{hollow node}=[circle,draw,inner sep=1.5] 
\tikzstyle{hollow node1}=[rectangle,draw,inner sep=2] 
\theoremstyle{plain}
\newtheorem{thm}{Theorem}[section]
\newtheorem{prop}[thm]{Proposition}
\newtheorem{definition}[thm]{Definition}
\newtheorem{cor}[thm]{Corollary}
\newtheorem{lemma}[thm]{Lemma}
\newtheorem{assump}{Assumption}[section]
\theoremstyle{remark}
\renewcommand{\P}{\operatorname{\mathbb{P}}}
\newcommand{\T}{\mathcal{T}}
\newcommand{\indep}{\perp\!\!\!\perp}
\newcommand{\notindep}{\not\!\perp\!\!\!\perp}
\newcommand{\pa}{\mathrm{pa}}
\newcommand{\Pa}{\mathrm{Pa}}
\newcommand{\ch}{\mathrm{ch}}
\newcommand{\Ch}{\mathrm{Ch}}
\newcommand{\an}{\mathrm{an}}
\newcommand{\An}{\mathrm{An}}
\newcommand{\desc}{\mathrm{desc}}
\newcommand{\Desc}{\mathrm{Desc}}
\newcommand{\diff}{\mathrm{d}}
\newcommand{\norm}[1]{\|{#1}\|}
\newcommand{\cB}{\mathcal{B}}
\newcommand{\point}{\,\cdot\,}
\newcommand{\inlaw}{\stackrel{d}{\longrightarrow}}
\newcommand{\bari}{\bar{\imath}}
\definecolor{ggr}{HTML}{006633}
\definecolor{bbl}{HTML}{0066CC}
\definecolor{ppn}{HTML}{CC0099}
\definecolor{darkgreen}{rgb}{0, .5, 0}
\definecolor{darkteal}{rgb}{0, .35, .35}
\definecolor{cerulean}{rgb}{0.0, 0.48, 0.65}
\theoremstyle{lema1}
\newtheorem{lema1}{Lemma}[subsection] 
\providecommand{\keywords}[1]
{
	\small	
	\textbf{\textit{Keywords}} --- #1
}
\newcommand{\js}[1]{\textcolor{blue}{\sffamily\footnotesize\upshape [#1]}}
\newcommand{\sa}[1]{\textcolor{purple}{\sffamily\footnotesize [#1]}}
\definecolor{mygreen}{rgb}{0.10, 0.6,0 }
\definecolor{violet}{rgb}{0.6, 0, 0.2}
\newcounter{example}
\newenvironment{example}[1][]{\refstepcounter{example}\par
	\bigskip\noindent\textbf{Example~\theexample{} (#1).} \rmfamily }{\medskip \qedhere\hfill$\diamondsuit$\medskip}
\title{Max-linear graphical models with heavy-tailed factors on trees of transitive tournaments}
\author{Stefka Asenova\thanks{Corresponding author. UCLouvain, LIDAM/ISBA, Voie du Roman Pays 20, 1348 Louvain-la-Neuve, Belgium. E-mail: stefka.asenova@uclouvain.be} \and Johan Segers\thanks{UCLouvain, LIDAM/ISBA, Voie du Roman Pays 20, 1348 Louvain-la-Neuve, Belgium. E-mail: johan.segers@uclouvain.be}}
\date{\today}
\begin{document}
	
	\maketitle

\begin{abstract}
	Graphical models with heavy-tailed factors can be used to model extremal dependence or causality between extreme events. In a Bayesian network, variables are recursively defined in terms of their parents according to a directed acyclic graph (DAG). We focus on max-linear graphical models with respect to a special type of graphs, which we call a \emph{tree of transitive tournaments}. The latter are block graphs combining in a tree-like structure a finite number of transitive tournaments, each of which is a DAG in which every two nodes are connected. We study the limit of the joint tails of the max-linear model conditionally on the event that a given variable exceeds a high threshold. Under a suitable condition, the limiting distribution involves the factorization into independent increments along the shortest trail between two variables, thereby imitating the behavior of a Markov random field. We are also interested in the identifiability of the model parameters in case some variables are latent and only a subvector is observed. It turns out that the parameters are identifiable under a criterion on the nodes carrying the latent variables which is easy and quick to check. 
\end{abstract}

	\keywords{max-linear model, heavy tails, extremal dependence, conditional dependence, probabilistic graphical model, directed acyclic graph, tournaments, extremes}
	
	\section{Introduction}


	Dependence in multivariate linear factor models is determined by a collection of independent random variables, called factors, which are shared by the modelled variables. In extreme value analysis there are the max-linear and the additive factor models with heavy-tailed factors. In \citet{einmahl2012an}, it is shown that both have the same max-domain of attraction. 
	
	In \citet{gissibl2018max}, a link is made between such factor models and  probabilistic graphical models via a max-linear recursively defined structural equation model on a directed acyclic graph (DAG). Each node carries a variable defined as a \emph{weighted maximum} of its parent variables and an independent factor. This leads to a representation of the graphical model as a (max-)factor model as in \citet{einmahl2012an}, the factors relevant for a given variable being limited to the set of its ancestors. More recent is the linear causally structured model in \citet{gnecco2020causal}: each variable is the \emph{weighted sum} of the variables on all its parent nodes plus an independent factor. This leads to a representation where a single variable is a weighted sum of all its ancestral factors. 
	
	In this paper, we study a type of graph that, to the best of our knowledge, is not yet known and which we gave a name  
	that reflects its most important properties: a \emph{tree of transitive tournaments (ttt)}, denoted by $\T$. A tournament is a graph obtained by directing a complete graph, while a tournament is said to be transitive if it has no directed cycles. The name reflects the interpretation of such a graph as a competition where every node is a player and a directed edge points from the winner to the loser.  
	Some examples are hierarchical relations between members of animal and bird societies, brand preferences, and votes between two alternative policies \citep{harari1966the}. A ttt links up several such transitive tournaments in a tree-like structure. It is acyclic by construction. If there is a directed path from one node to another one, there is a unique shortest such path. Moreover, between any pair of nodes, there is a unique shortest undirected path. 
	
	In this paper, we study max-linear graphical models with respect to a ttt as defined in \eqref{eqn:mlgm} below. In particular, for a max-linear random vector $X=(X_v, v\in V)$ with node set $V$, we study the limit in distribution
	\begin{equation} \label{eqn:XvXu}
		\left( X_v/X_u, v\in V\setminus u \mid X_u>t \right)
		\inlaw (A_{uv}, v\in V), 
		\qquad t\rightarrow\infty.
	\end{equation}
	It is not hard to show that the limit distribution in \eqref{eqn:XvXu} is discrete \citep[Example~1]{segers2020one}. We show that if the ttt has a unique node without parents, a so-called source node, the joint distribution of $(A_{uv}, v\in V)$ is determined by products of independent multiplicative increments along the unique shortest undirected paths between the node $u$ at which the high threshold is exceeded on the one hand and the rest of the nodes on the other hand. Such behaviour is analogous to that of Markov random fields on block graphs in \citet{asenova2021extremes} and of Markov trees in \citet[Theorem~1]{segers2020one}. In turn, these results go back to the extensive literature on the additive or multiplicative structure of extremes for Markov chains \citep[e.g.][]{smith1992the, yun1998the, segers2007multivariate, janssen2014markov, resnick2013asymptotics}. 
	
	An underlying reason for the factorization into independent increments is the fact that a max-linear graphical model with respect to a ttt is a Markov random field with respect to the undirected graph associated to the original, directed graph when the ttt has a unique source. A ttt with unique source has no v-structures, that is, no nodes with non-adjacent parents. Both properties, the factorization of the limiting variables and the Markovianity with respect to the undirected graph, are lost if the graph contains v-structures. To show this, we rely on recent theory of conditional independence in max-linear Bayesian networks based on the notion of $*$-connectedness \citep{amendola2021markov, amendola2022conditional}. This theory diverges from classical results on conditional independence in Bayesian networks based on the notion of d-separation \citep{lauritzen1996graphical, koller2009probabilistic}.
	
	
	In our paper the graph is given. A significant line of research in the context of extremal dependence is graph discovery. Given observations on a number of variables represented as nodes in a graph, the task is to estimate the edges. For Bayesian networks we can also talk about causality discovery because directed edges show the direction of influence. 
	A first attempt to identify the DAG in the context of max-linear models is \citet{gissibl2018tail}, followed by several papers focusing on this topic: \citet{kluppelberg2021estimating}, \citet{buck2021recursive}, \citet{gissibl2021identifiability}, \citet{tran2021estimating} and \citet{tran2021causal}. The problems related to identifiability of the true graph and to the estimation of the edge weights are discussed in \citet{kluppelberg2019bayesian}. 
	\citet{gnecco2020causal} study a new metric called causal tail coefficient which is shown to reveal the structure of a linear causal recursive model with heavy-tailed noise. 
	Graph discovery for non-directed graphs is studied in \citet{engelke2020graphical}, \citet{engelke2020structure} and \citet{hu2022modelling}. 
	

	Inspired from practice, and more specifically river network applications \citep{asenova2021inference}, we study a different identifiability problem. If the structure of the graph is known, it may happen that on some nodes the variables are latent, i.e., unobserved. The identifiability problem in this case is whether two different parameter vectors can still generate the same distribution of the observable part of the model. If this is possible then we cannot uniquely identify all tail dependence parameters that characterize the full distribution. Similarly to \citet{asenova2021extremes}, the identifiability criterion involves properties of the nodes with latent variables. The criterion is specific for a ttt with unique source and is easy to check. Our identifiability problem resembles the "method of path coefficients" of Sewall Wright which uses a system of equations involving correlations to solve for the edge coefficients \citep{wright1934themethod}.

	
	The novelty of the paper lies in several directions. First, a new class of graphs is introduced, called a tree of transitive tournaments (ttt), which is the directed acyclic analogue of a block graph. It can be seen as a generalization of a directed tree, where edges are replaced by transitive tournaments. Second, we show that a max-linear graphical model over a ttt with unique source exhibits properties known for other graphical models, namely Markov trees \citep{segers2020one} and Markov block graphs \citep{asenova2021extremes}. In particular, when the ttt has a unique source, the model is Markov with respect to the skeleton of the graph. This property underlies the factorization of the tail limit into independent increments along the unique shortest trails. Finally, we study a problem of identifiability of the edge weights from the angular measure both when all variables are observed and also when some of them are latent. 
	
	The structure of the paper is as follows. In Section~\ref{sec:def} we introduce the ttt, the max-linear model, and its angular measure, which plays a key role in almost all proofs. In Section~\ref{sec:ctc} we discuss the limiting distribution of \eqref{eqn:XvXu} and give four equivalent characterizations of a max-linear graphical model with respect to a ttt with unique source. The identifiability problem is covered in Section~\ref{sec:ident}. The discussion summarizes the main points of the paper. The appendices contain some additional lemmas and the proofs that are not presented in the main text.  
	
	
	
	
	
	
	
	
	
	\section{Notions and definitions} \label{sec:def}
	
	\subsection{Directed graphs}
	Let $\T=(V,E)$ be a directed acyclic graph (DAG) with finite vertex (node) set $V$ and edge set $E\subset V\times V$. An edge $e:=(u,v)\in E$ is directed meaning  $(u,v)\neq (v,u)$; it is outgoing with respect to the parent node $u$ and incoming with respect to the child node $v$. The graph $\T$ excludes loops, i.e., edges of the form $(u, u)$, and as $\T$ is directed, we cannot have both $(u,v)\in E$ and $(v,u)\in E$. Two nodes $u$ and $v$ are adjacent if $(u, v)$ or $(v, u)$ is an edge. A cycle is a sequence of edges $e_1,\ldots,e_n$ with $e_k = (u_k, u_{k+1})$ and $u_1 = u_{n+1}$ for some nodes $u_1,\ldots,u_n$. The property that $\T$ is acyclic means that it does not contain any cycle. The graph $\T$ is assumed connected, i.e., for any two distinct nodes $u$ and $v$ we can find nodes $u_1=u,u_2,\ldots,u_{n+1}=v$ such that $u_k$ and $u_{k+1}$ are adjacent for every $k = 1,\ldots,n$; we call the associated edge sequence an undirected path or a trail between $u$ and $v$. If all edges are directed in the same sense, i.e., $(u_k, u_{k+1}) \in E$ for all $k = 1,\ldots,n$, we talk about a (directed) path from the ancestor $u$ to the descendant $v$. Recall that a path is directed by convention, so when we need non-directed paths this will be indicated explicitly. Between a pair of nodes there may be several paths. The set of all paths between two nodes $u,v\in V$ is denoted by $\pi(u,v)$. An element, say $p$, of $\pi(u,v)$ is a collection of edges, $\{(v_1, v_2), (v_2,v_3),\ldots, (v_{n-1},v_n)\}$ for a path that involves the non-repeating nodes $\{v_1=u, v_2, \ldots, v_{n-1}, v_{n}=v\}$. Note that $\pi(u, u) = \varnothing$ in an acyclic graph. 
	
	A source is a node without parents. If a DAG has a unique source, this node is an ancestor of every other node. This property follows from the following reasoning: let $u_0$ denote the unique source node of the DAG, and let $v$ be any other node different from $u_0$. Then $v$ must have a parent, say $u$. If $u = u_0$, we are done. Otherwise, replace $v$ by $u$ and restart. Since the graph is finite and has no cycles, this chain must stop at some moment at a node without parents. But this node is necessarily equal to $u_0$ by assumption. 

	A graph, directed or not, is complete if there is an edge between any pair of distinct nodes.
	A subgraph of a graph is biconnected if the removal of any of its nodes will not disconnect the subgraph. A maximal biconnected subgraph, also known as a biconnected component, is a subgraph that cannot be extended by adding one adjacent node without violating this principle.
	
	A directed complete graph is called a tournament. A tournament $\tau=(V_\tau, E_\tau)$ is transitive if $(u, v), (v, w) \in E_\tau$ implies $(u, w) \in E_\tau$. A transitive tournament is necessarily acyclic. The graph-theoretic properties of transitive tournaments are studied in \citet{harari1966the}. The property most used here is that the set of out-degrees of the $d$ nodes of a transitive tournament is $\{d-1, d-2, \ldots, 0\}$; the in- and out-degrees of a node are the numbers of incoming and outgoing edges, respectively. 
	
	A subgraph of a graph is a maximal transitive tournament if it is not properly contained in another subgraph which is also a transitive tournament. The set of maximal transitive tournaments that are subgraphs of a DAG $\T$ will be denoted by $\mathbb{T}$. For brevity we will just write tournament when we mean a maximal transitive tournament and denote it by $\tau$. 
	
	\subsection{Tree of transitive tournaments}
	


	A block graph is an undirected graph where every maximal biconnected subgraph is a complete graph \citep{le2010the}. Let $T$ denote the non-directed version of $\T$, also called the skeleton of $\T$. It shares the same node set as $\T$, and for every edge $(u, v)$ in the original graph $\T$, the reverse edge $(v, u)$ is added to form the edge set of the skeleton graph $T$, after which each pair of edges $\{(u, v), (v, u)\}$ is identified with the undirected edge $\{u, v\}$ of $T$.
	
	\begin{definition}[Tree of transitive tournaments (ttt)] \label{def:ttt}
		A tree of transitive tournaments is a connected directed acyclic graph whose skeleton is a block graph.
	\end{definition}

	A ttt enjoys three key properties. They all follow from the link with block graphs, whose characteristics can be found in \citet{le2010the}.
	\begin{lemma}[Properties I]
		For a ttt, the following properties hold:
		\begin{enumerate}[label=(P\arabic*)]
			\item \label{ttt:prop1} two or more maximal transitive tournaments can have at most one common node, referred to as a \emph{separator node};
			\item \label{ttt:prop2} there is no undirected cycle that passes through nodes in different maximal transitive tournaments;
			\item \label{ttt:prop3} between every pair of nodes there is a unique shortest trail (undirected path).
		\end{enumerate}
	\end{lemma}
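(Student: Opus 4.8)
The plan is to reduce all three statements to textbook facts about the block decomposition of the skeleton $T$, using the defining hypothesis that $T$ is a block graph. The bridge between the directed picture (maximal transitive tournaments of $\T$) and the undirected one (maximal biconnected subgraphs, i.e.\ blocks, of $T$) is a correspondence that I would establish first and then exploit three times over. Concretely, I claim that forgetting edge orientations gives a bijection between the set $\mathbb{T}$ of maximal transitive tournaments of $\T$ and the set of blocks of $T$.

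To prove this correspondence, the first observation is that a subset $S\subseteq V$ induces a tournament in $\T$ precisely when it induces a complete subgraph in $T$; and since $\T$ is acyclic, any such tournament is automatically transitive, because a tournament is transitive if and only if it has no directed $3$-cycle, whereas a DAG has no directed cycle at all. Hence the maximal transitive tournaments of $\T$ are exactly the maximal cliques of $T$. It then remains to identify maximal cliques with blocks in a block graph: every clique is biconnected and so sits inside some block, every block is itself complete by the block-graph property, so a maximal clique coincides with the block containing it, and conversely every block is a maximal clique. Degenerate blocks, namely single edges (bridges), correspond to two-node tournaments and are covered by the same argument.

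Granting the correspondence, the three properties are immediate translations of standard block--cut-tree theory. For \ref{ttt:prop1}, two distinct blocks of any graph meet in at most one vertex, which is then a cut vertex; transported through the bijection this says precisely that two maximal transitive tournaments share at most one node, the separator. For \ref{ttt:prop2}, any undirected cycle is a biconnected subgraph and therefore lies inside a single block, so no cycle can visit nodes belonging to two different tournaments. For \ref{ttt:prop3}, the blocks and cut vertices of $T$ form a tree, the block--cut tree, so between any two nodes there is a unique sequence of blocks and separators to traverse; inside each block, which is complete, the unique shortest connection between the two relevant separators is the single edge joining them, and removing a cut vertex disconnects the graph so no alternative routing exists. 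Concatenating these edges yields the unique shortest trail.

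The main obstacle is the correspondence itself, and in particular the clean chain of identifications \emph{maximal transitive tournament $=$ maximal clique $=$ block}: once this is in place, \ref{ttt:prop1}--\ref{ttt:prop3} are nothing more than restatements of classical properties of blocks. The only points that need care are that maximality is genuinely preserved in both directions of the bijection and that the degenerate (bridge) blocks are not overlooked, but no new graph-theoretic content beyond the standard block decomposition is required.
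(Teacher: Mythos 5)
Your proof is correct and follows essentially the same route as the paper's, namely reducing all three properties to standard facts about the block decomposition of the skeleton $T$ (the paper simply cites these facts for block graphs, whereas you re-derive them from the block--cut-tree). The one thing you add that the paper leaves implicit is the explicit identification of maximal transitive tournaments of $\T$ with maximal cliques and hence with blocks of $T$ (using acyclicity of a DAG to get transitivity for free), which is a worthwhile clarification but not a different method.
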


\begin{proof}
	All properties are direct consequences of the fact that removing directions from the ttt we obtain a block graph. In a block graph the minimal separators sets are singletons \citep[Theorem~B]{harari1962acharacterization}; there is a unique shortest path between two nodes \citep[Theorem~1.a)]{behtoei2010a}; and the graph is acyclic up to blocks, a property that follows from the first one.  
\end{proof}

	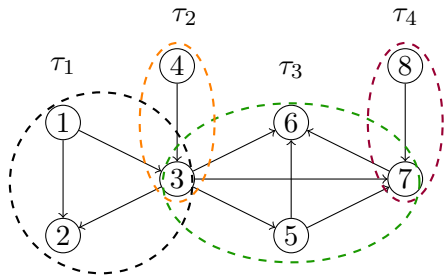
\begin{figure}{} 
		\centering 
		\begin{tikzpicture}
			\node[hollow node] (1) at (0,0) {1};
			\node[hollow node] (2) at (0,-1.5)  {2};
			\node[hollow node] (3) at (1.5,-0.75)  {3};
			\node[hollow node] (4) at (1.5,0.75)  {4};
			\node[hollow node] (7) at (4.5,-0.75)  {7};
			\node[hollow node] (5) at (3,-1.5)  {5};
			\node[hollow node] (6) at (3,0)  {6};
			\node[hollow node] (8) at (4.5,0.75)  {8};
			\path[->] (1) edge (2)  ;
			\path[->] (1) edge (3);
			\path[<-] (2) edge (3)  ;
			\path[->] (3) edge (5);
			\path[<-] (7) edge (3);
			\path[->] (3) edge (6);
			\path[->] (5) edge (6);
			\path[->] (5) edge (7);
			\path[->] (4) edge (3);
			\path[<-] (7) edge (8);
			\path[->] (7) edge (6);
			\node[] (t1) at (0,0.75)  {$\tau_1$};
			\node[] (t2) at (1.6,1.4)  {$\tau_2$};
			\node[] (t3) at (3,0.7)  {$\tau_3$};
			\node[] (t4) at (4.5,1.4)  {$\tau_4$};
			\begin{scope}[dashed, thick]
				\draw[color=black] (0.5,-0.8) circle (1.2cm);
				\draw[rotate=0, color=mygreen] (3,-0.8) ellipse (48pt and 30pt);
				\draw[rotate=0, color=violet] (4.5,0) ellipse (14pt and 30pt);
				\draw[rotate=0, color=orange] (1.5,0) ellipse (14pt and 30pt);
			\end{scope}
		\end{tikzpicture}
		
		\caption{
			A tree of four maximal transitive tournaments: $\tau_1$, $\tau_2$, $\tau_3$, and $\tau_4$. The skeleton graph is the same, but with arrow heads removed. Node $3$ is a separator node between tournaments $\tau_1$, $\tau_2$ and $\tau_3$, while node $7$ is a separator node between tournaments $\tau_3$ and $\tau_4$. Nodes $1$, $4$ and $8$ are source nodes, i.e., have no parents. The subgraph with node set $\{1, 3, 4\}$ is a v-structure: node $3$ has non-adjacent parents $1$ and $4$. Other v-structures within the ttt are the subgraphs with node sets $\{3, 7, 8\}$ and $\{5, 7, 8\}$. Between any pair of distinct nodes there is a unique shortest trail; for instance, nodes $4$ and $8$ are connected by the trail passing through nodes $3$ and $7$. There is no undirected cycle encompassing several tournaments.}
		\label{fig:ttt}
	\end{figure}
	
Similarly to block graphs \citep{le2010the} a ttt can be seen as a tree whose edges are replaced by transitive tournaments.

	In a ttt, if there is at least one (directed) path between distinct nodes $u$ and $v$, there is a \emph{unique shortest path} (see Lemma~\ref{lem:oned-1}-1) between them, which we denote by $p(u,v)$, and which belongs to $\pi(u, v)$. We also set $p(u,u)=\varnothing$ by convention. 
	


A key object in the paper is a ttt with unique source. In Lemma~\ref{lem:oned-1}-2 below it is shown that in this case there are no nodes with parents that are not adjacent or `married', also known as a v-structure \citep{koller2009probabilistic}. 

Consider the ttt in Figure~\ref{fig:ttt}, which presents some of the notions introduced above.  Each tournament is acyclic and we cannot find a cycle passing through different tournaments either. This is why we call such a graph a \emph{tree} of transitive tournaments. There are three v-structures: one on nodes $1,3,4$, one on $3, 7, 8$ and one on $5,7,8$. The main results in this paper require a ttt without v-structures. According to Lemma~\ref{lem:oned-1}, there are no v-structures in a ttt with unique source. This is illustrated in Figure~\ref{fig:ttt_unique}. 

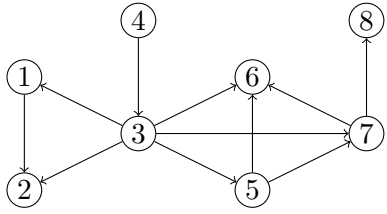
\begin{figure}{} 
	\centering 
	\begin{tikzpicture}
		\node[hollow node] (1) at (0,0) {1};
		\node[hollow node] (2) at (0,-1.5)  {2};
		\node[hollow node] (3) at (1.5,-0.75)  {3};
		\node[hollow node] (4) at (1.5,0.75)  {4};
		\node[hollow node] (7) at (4.5,-0.75)  {7};
		\node[hollow node] (5) at (3,-1.5)  {5};
		\node[hollow node] (6) at (3,0)  {6};
		\node[hollow node] (8) at (4.5,0.75)  {8};
		\path[->] (1) edge (2)  ;
		\path[<-] (1) edge (3);
		\path[<-] (2) edge (3)  ;
		\path[->] (3) edge (5);
		\path[<-] (7) edge (3);
		\path[->] (3) edge (6);
		\path[->] (5) edge (6);
		\path[->] (5) edge (7);
		\path[->] (4) edge (3);
		\path[->] (7) edge (8);
		\path[->] (7) edge (6);
	\end{tikzpicture}
	\caption{
		A tree of four maximal transitive tournaments. The skeleton graph is the same as the one in Figure~\ref{fig:ttt}, but the graph here has a single source, on node $4$. We see that there are no v-structures anymore in the graph.}
	\label{fig:ttt_unique}
\end{figure}


Considered on its own, every tournament in a ttt has a unique source; this follows from the ordering of the out-degrees due to \citet{harari1966the} mentioned earlier. When we talk about a source node, we will always state if we refer to the whole graph or with respect to a particular tournament.  

In a general directed graph $(V, E)$, let $\pa(v)\in V$ denote the set of parents of $v\in V$ and put $\Pa(v)=\pa(v)\cup \{v\}$. In a similar way let $\an(v)$, $\desc(v)$, and $\ch(v)$ denote the sets of ancestors, descendants, and children, respectively, excluding $v$, while $\An(v)$, $\Desc(v)$, and $\Ch(v)$ denote the same sets but including $v$. 

Below we present some additional properties used often in the paper.
\begin{lemma}[Properties II] \label{lem:oned-1}
	Let $\T$ be a tree of transitive tournaments as in Definition~\ref{def:ttt}. We have the following statements:
	\begin{compactenum} 
		\item If there is a path between two nodes, then there is a unique shortest path between them.
		\item The ttt $\T$ has a unique source if and only if it possesses no v-structures.
		\item If $\T$ has a unique source, then for any two nodes $i\neq j$, the sets $\Desc(i)$ and $\Desc(j)$ are either disjoint or one contains the other, that is, $i$ is an ancestor of $j$ or vice versa.
	\end{compactenum}  
\end{lemma}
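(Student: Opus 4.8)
For statement 1, the plan is to use the block-graph structure already established. By property \ref{ttt:prop3}, between any two nodes there is a unique shortest trail in the skeleton $T$. The key observation I would make is that if a \emph{directed} path exists from $u$ to $v$, then the shortest trail between them must itself be directed from $u$ to $v$. This is because distinct maximal transitive tournaments meet only at separator nodes (property \ref{ttt:prop1}), so any trail between $u$ and $v$ decomposes uniquely into segments, one per tournament it traverses, glued at separators. Within a single transitive tournament, acyclicity together with transitivity means the edge orientations between any two nodes are fixed and unambiguous. So the shortest trail, read with orientations, is a well-defined candidate path; I would then argue it is directed whenever \emph{some} directed path exists, and that its uniqueness as a trail forces uniqueness as a shortest directed path. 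I expect this to be routine once the tournament-by-tournament decomposition is set up.

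For statement 2, I would prove the two directions of the equivalence separately. The easier direction is the contrapositive of ``unique source $\Rightarrow$ no v-structure'': a v-structure is a node $w$ with two non-adjacent parents $a$ and $b$. I would trace ancestral chains upward from $a$ and from $b$; since they are non-adjacent they cannot lie in a common tournament through $w$, and using the separator structure I would show their source ancestors are forced to be distinct, contradicting uniqueness of the source. For the converse (``no v-structure $\Rightarrow$ unique source''), I would argue by contradiction: suppose there are two distinct sources $s_1$ and $s_2$. Since $\T$ is connected, take the shortest trail between them; walking along this trail from each source, the orientations start outgoing at each source (sources have no incoming edges), so somewhere along the trail two arrows must meet head-to-head at some node $w$, making $w$'s two trail-neighbours its parents. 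The crux is then to show these two parents are non-adjacent, i.e., that $w$ is genuinely a v-structure. This is where I would lean on property \ref{ttt:prop2} (no undirected cycle spanning several tournaments): if the two parents were adjacent they would close a cycle together with $w$, and the separator structure constrains where this can happen.

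For statement 3, assuming a unique source $u_0$, I would fix $i\neq j$ and suppose $\Desc(i)\cap\Desc(j)\neq\varnothing$, aiming to show one descendant set contains the other. Pick a common descendant $w$. Then there are directed paths from $i$ to $w$ and from $j$ to $w$; by statement 2 there are no v-structures, so the two incoming trails to $w$ cannot meet head-to-head in the forbidden way, which should force $i$ and $j$ to be comparable in the ancestral order (one an ancestor of the other). Concretely, I would consider the unique shortest trail between $i$ and $j$ guaranteed by \ref{ttt:prop3} and show, using the no-v-structure property, that it must be a directed path in one direction, giving $i\in\an(j)$ or $j\in\an(i)$; descendant-set containment then follows immediately.

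The main obstacle I anticipate is the converse direction of statement 2 --- producing the head-to-head meeting on the trail between two putative sources and then certifying non-adjacency of the two parents. Getting the orientations to propagate correctly along a trail that may pass through several tournaments, and ruling out the adjacency that would spoil the v-structure, is the step that most needs the block-graph properties \ref{ttt:prop1}--\ref{ttt:prop2} rather than a purely local argument.
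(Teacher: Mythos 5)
Your overall architecture for statements 2 and 3 matches the paper's, and for statement 1 you take a genuinely different and arguably cleaner route: you orient the unique shortest trail of \ref{ttt:prop3} by decomposing it along the separator nodes and using transitivity within each tournament (a directed path inside a transitive tournament forces the direct edge), whereas the paper runs a hands-on contradiction argument on two putative shortest directed paths, showing step by step that their second, third, \ldots{} nodes must coincide. Your version buys brevity at the price of invoking the block--cut decomposition of the skeleton; both work. However, two steps in your plan are justified incorrectly as written.

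First, in the converse direction of statement 2, at the head-to-head node on the trail between the two sources you must show that its two trail-neighbours are non-adjacent. You propose to get this from \ref{ttt:prop2}, saying that if they were adjacent they "would close a cycle together with $w$". But that triangle would consist of three pairwise-adjacent nodes and hence lie inside a single tournament, so \ref{ttt:prop2} (which only forbids undirected cycles spanning \emph{several} tournaments) does not apply. The correct reason, and the one the paper uses, is minimality of the trail: if the two neighbours were adjacent you could skip the middle node and obtain a strictly shorter trail between the two sources. Second, in statement 3 the step "the two incoming trails cannot meet head-to-head \ldots{} which should force $i$ and $j$ to be comparable" hides the real work. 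Absence of v-structures only excludes head-to-head meetings on the shortest trail between $i$ and $j$; it does not exclude the tail-to-tail configuration $i \leftarrow \cdots \leftarrow w \rightarrow \cdots \rightarrow j$ with a common ancestor $w$, which is fully compatible with having no v-structures. To rule that out you must use the common descendant: gluing the two directed paths from $i$ and from $j$ down to it onto the trail through $w$ produces an undirected cycle visiting several tournaments, contradicting \ref{ttt:prop2}. The paper implements exactly this idea by walking the two paths $p(i,v)$ and $p(j,v)$ upward to the unique source $u_0$ and exhibiting two distinct $u_0$-to-$v$ paths through different tournaments.
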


\begin{lemma}[Properties III] \label{lem:addition_to_oned-1}
	Consider a ttt $\T=(V,E)$ as in Definition~\ref{def:ttt} with unique source.
	\begin{compactenum}
		\item If $\{v_1, v_2, \ldots, v_n\}$ is the node sequence of a unique shortest path between nodes $v_1$ and $v_n$, then all nodes except for possibly $v_1$ and $v_n$ are the source node of the tournament shared with the next node in the sequence.  
		\item Between any two distinct nodes $u,v$ in $\T$ the unique shortest trail between them is either $p(u,v)$ or $p(v,u)$ or there exists a node $w \in V \setminus \{u, v\}$ such that the trail is composed of the two shortest paths $p(w,u)$ and $p(w,v)$.
	\end{compactenum}
\end{lemma}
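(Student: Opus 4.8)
The plan is to prove the two statements separately, leaning heavily on the structural facts already established: that the skeleton of $\T$ is a block graph (Definition~\ref{def:ttt}), that a ttt with unique source has no v-structures (Lemma~\ref{lem:oned-1}-2), that every tournament has a unique source of its own (by the Harary out-degree ordering), and that between any pair of nodes there is a unique shortest trail (P3).

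For the first statement, let $\{v_1,\ldots,v_n\}$ be the node sequence of the unique shortest directed path $p(v_1,v_n)$. First I would observe that consecutive nodes $v_k, v_{k+1}$ must lie in a common tournament: since $(v_k,v_{k+1})$ is an edge and edges only exist within tournaments, each pair shares some $\tau_k\in\mathbb{T}$. The crux is to show that each intermediate node $v_k$ (for $2\le k\le n-1$) is the \emph{source} of the tournament $\tau_k$ it shares with its successor $v_{k+1}$. The argument I would use is by contradiction against shortest-path minimality combined with transitivity. If $v_k$ were not the source of $\tau_k$, then the source $s$ of $\tau_k$ would be an ancestor of both $v_k$ and $v_{k+1}$ within $\tau_k$ (by the out-degree ordering, the source dominates all other nodes of the tournament, and by transitivity it has a direct edge to each). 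One then has to relate $s$ to $v_{k-1}$: since $v_{k-1}\to v_k$ is an edge and $v_k$ is not the source of $\tau_k$, I would argue that $v_{k-1}$ lies in $\tau_k$ as well (otherwise $v_k$ is the separator node between $\tau_{k-1}$ and $\tau_k$, and a separator node that is not the tournament source would create a v-structure at $v_k$, contradicting Lemma~\ref{lem:oned-1}-2). Once $v_{k-1}\in\tau_k$, transitivity gives the edge $(v_{k-1},v_{k+1})$, producing a strictly shorter path from $v_1$ to $v_n$ that skips $v_k$, contradicting that $p(v_1,v_n)$ is shortest. Hence every intermediate $v_k$ is the source of $\tau_k$, as claimed.

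For the second statement, fix distinct $u,v$ and let $S$ be the unique shortest trail between them (P3). I would walk along $S$ and track the direction of its edges. Because $S$ is a trail in the block-graph skeleton and the graph has no v-structure, two incident edges of $S$ that meet at an interior node $w$ cannot both point \emph{into} $w$: if $(x,w)$ and $(y,w)$ were both edges of $S$ with $x\neq y$ non-adjacent, that is exactly a v-structure at $w$, forbidden by Lemma~\ref{lem:oned-1}-2; and $x,y$ are non-adjacent since the trail is shortest (adjacency would give a shortcut). Consequently, reading the orientations along $S$, there is at most one interior node at which the direction reverses, i.e. the orientation pattern is either uniformly forward, uniformly backward, or forward-then-backward with a single "peak" node $w$. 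These three cases correspond exactly to the trail being $p(u,v)$, being $p(v,u)$, or splitting at a common ancestor $w$ into $p(w,u)$ followed by $p(w,v)$. To finish I would identify, in the peak case, the two directed sub-trails as shortest directed paths: each half is a directed path, and by statement~1 (or by uniqueness of shortest paths, Lemma~\ref{lem:oned-1}-1) each half must be the unique shortest path $p(w,u)$ respectively $p(w,v)$, since any shorter directed path together with the other half would yield a shorter trail than $S$.

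The main obstacle I anticipate is the careful bookkeeping in statement~1 around the separator nodes: one must rule out that an interior node $v_k$ is a separator between two distinct tournaments while failing to be the source of the outgoing tournament, and the cleanest way to dispatch this is to phrase it as "not being the source forces a v-structure," invoking Lemma~\ref{lem:oned-1}-2. The rest is a matter of correctly combining transitivity (to create shortcut edges) with shortest-path minimality (to derive the contradiction), and making sure the edge $v_{k-1}\to v_{k+1}$ really lies in the graph rather than merely being inferred abstractly.
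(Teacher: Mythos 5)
Your proposal is correct and follows essentially the same route as the paper: for part~1, the absence of v-structures (Lemma~\ref{lem:oned-1}-2) forces $v_{k-1}$ into the tournament $\tau_k$, after which transitivity/acyclicity yields the shortcut edge $(v_{k-1},v_{k+1})$ contradicting minimality; for part~2, the same no-v-structure argument combined with shortest-trail minimality rules out any interior node with both incident trail edges pointing into it, leaving exactly the three orientation patterns. The only quibble is terminological: the mixed case is ``backward-then-forward'' when read from $u$ to $v$ (edges pointing away from the peak $w$), not ``forward-then-backward'', but your identification of $w$ as a common ancestor with halves $p(w,u)$ and $p(w,v)$ makes clear you mean the right configuration.
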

	

	\subsection{Max-linear structural equation model on a ttt}
	
	
Consider a directed graph, $(V,E)$. To each edge $e = (i, j) \in E$ we associate a weight $c_e = c_{ij} \in [0, \infty)$. 
	The product of the edge parameters over a directed path $p = \{e_1,\ldots,e_m\}$ is denoted by
	\[
	c_p=\prod_{r=1}^m c_{e_r}. 
	\] 
	When the product is over the unique shortest path from $u$ to $v$, we write $c_{p(u,v)}$. 
	The product over the empty set being one by convention, we have $c_{p(i,i)} = 1$.

	Let $(Z_i, i \in V)$ be a vector of independent unit-Fréchet random variables, i.e., $\P(Z_i \le z) = \exp(-1/z)$ for $z > 0$. In the spirit of \citep{gissibl2018max}, a recursive max-linear model on a ttt, $\T$, 
	is defined by
	\begin{equation} \label{eqn:mlgm}
		X_v=\bigvee_{i\in \pa(v)}c_{iv}X_i\vee c_{vv}Z_v, \qquad v\in V.
	\end{equation}
where the parameters $c_e$, for $e\in E$, and $c_{vv}$, for $v\in V$, are positive. We interpret this constraint as follows: if $c_{ij} = 0$, the variable $X_j$ cannot be influenced by $X_i$ through edge $(i,j)$, and the edge could be removed from the graph. If $c_{vv} = 0$, the factor variable $Z_v$ does not influence $X_v$. We don't want to deal with such border cases, so we assume that all parameters in the model definition~\eqref{eqn:mlgm} are positive.
According to \citet[Theorem~2.2]{gissibl2018max} the expression in \eqref{eqn:mlgm} is equal also to 
\begin{equation}
	\label{eq:Xv}
	X_v = \bigvee_{i \in V} b_{vi} Z_i,
\end{equation}
with
\begin{equation} \label{eq:bvi}
	b_{vi} =
	\begin{cases}
		0 & \text{if $i \not\in \An(v)$,} \\
		c_{vv} & \text{if $i = v$,} \\
		c_{ii} \max_{p \in \pi(i, v)} c_p
		& \text{if $i \in \an(v)$.}
	\end{cases}
\end{equation}
The cumulative distribution function (cdf) of $X_v$ is $\P(X_v \le x) = \exp(-\sum_{i \in V} b_{vi} / x)$ for $x > 0$. 	We assume that $(X_v, v\in V)$ are unit-Fréchet, yielding the constraint
\begin{equation}
	\label{eq:sumibvi}
	\sum_{i \in V} b_{vi} = \sum_{i \in \An(v)} b_{vi} = 1,
	\qquad \forall v \in V,
\end{equation}
since $b_{vi} = 0$ whenever $i \notin \An(v)$. It is thus necessary and sufficient to have
\begin{equation}
	\label{eq:cvv}
	c_{vv} = 
	1 - \sum_{i \in \an(v)} c_{ii} \max_{p \in \pi(i,v)} c_p,
\end{equation}
with $c_{vv}=1$ if $\an(v)=\varnothing$.
By~\eqref{eq:cvv}, the coefficients $c_{vv}$ for $v \in V$ are determined recursively by the edge weights $c_{e}$ for $e \in E$.
If $c_{iv} \ge 1$ for some $(i, v) \in E$, then \eqref{eqn:mlgm} implies that $X_v \ge X_i \vee c_{vv} Z_v$, and the constraint that $X_i$ and $X_v$ are unit-Fréchet distributed implies that $c_{vv} = 0$, a case we want to exclude, as explained above.
This is why we impose $0 < c_{e} < 1$ for all $e \in E$ from the start, yielding the parameter space
\begin{equation*}
	\mathring{\Theta}	
	= \left\{ 
	\theta = (c_e, e \in E) \in (0,1)^{E} : \ 
	\forall v \in V, \, c_{vv} > 0
	\right\}.
\end{equation*}

The notion of \emph{criticality} is important for max-linear structural equation models. We refer to \citet*{gissibl2018max}, \citet*{amendola2022conditional}, \citet*{gissibl2021identifiability} and \citet*{kluppelberg2019bayesian}  
for examples where different conditional independence relations arise depending on which path is critical, or for illustrations in the context of graph learning. 
According to \citet[Definition~3.1]{gissibl2018max}, a path $p \in \pi(i, v)$ is \emph{max-weighted} under $\theta \in \Theta$ if it realizes the maximum $\max_{p' \in \pi(i, v)} c_{p'}$, where $p'$ is any path in $\pi(i,v)$.  In \citet{amendola2022conditional} the term \emph{critical} is preferred.

If there is a (directed) path between two nodes, there is a unique shortest (directed) path between them (Lemma~\ref{lem:oned-1}-1). This is crucial for our parametric model. We define the \emph{critical parameter space} $\Theta_* \subset (0,1)^{E}$ as the set of parameters $\theta =(c_e, e \in E)$, 
such that for every $v \in V$ and every $i \in \an(v)$, the unique shortest directed path from $i$ to $v$ is the only critical path. Therefore we have $c_{p(i,v)} > c_{p}$, with strict inequality for any $p \in \pi(i,v)$ different from $p(i, v)$. Formally,
\begin{equation*}
	\label{eq:Theta_star}
	\Theta_*
	= \left\{ 
	\theta \in (0,1)^{E} : \ 
		\forall v \in V, \, 
		\forall i \in \an(v), \, 
		\forall p \in \pi(i, v) \setminus \{p(i, v)\}, \; 
		c_{p(i, v)} > c_p  
	\right\}.
\end{equation*}
	Next, we consider the intersection of the two spaces as an appropriate parameter space for our max-linear structural equation model:
	\begin{equation} \label{eqn:theta_ring_star}
		\mathring{\Theta}_*=\mathring{\Theta}\cap \Theta_*.
	\end{equation}
	For $\theta \in \mathring{\Theta}_*$, every element of the max-linear coefficient matrix $B_{\theta}=(b_{vi})_{v,i\in V}$ can be rewritten using an edge weight product over the unique shortest path $p(i,v)$ via
	\begin{equation}
		\label{eq:bvi-rev1}
		b_{vi} =
		\begin{cases}
			0 & \text{if $i \not\in \An(v)$,} \\
			c_{vv} & \text{if $i = v$,} \\
			c_{ii}c_{p(i,v)}
			& \text{if $i \in \an(v)$},
		\end{cases}
		\qquad \text{and} \qquad
		c_{vv} = 
			1 - \sum_{i \in \an(v)} c_{ii}c_{p(i,v)}.
	\end{equation}
	Also, note that $b_{ii}=c_{ii}$, leading to the frequently used expression
	\[
		b_{vi} = c_{p(i,v)} b_{ii}, \qquad i \in \an(v).
	\] 
	
\begin{example}[Criticality]
	The following example shows what happens if the assumption that all shortest paths are critical is omitted.	Consider a max-linear model on three nodes $\{1,2,3\}$ and three edges $\{(1,2), (2,3),(1,3)\}$. The corresponding edge weights are $c_{12}, c_{23}, c_{13}$. We have
	\begin{align*}
		X_1=c_{11}Z_1, \qquad 
		X_2=c_{12}X_1\vee c_{22}Z_2, \qquad 
		X_3=c_{13}X_1\vee c_{23}X_2 \vee c_{33}Z_3.
	\end{align*}  
	The coefficient matrix $B=\{b_{iv}\}$ from \eqref{eq:bvi} together with \eqref{eq:sumibvi} and \eqref{eq:cvv} is
	\begin{equation*}
		B=\begin{bmatrix}
			1&0&0\\
			c_{12}&(1-c_{12})&0\\
			c_{12}c_{23}\vee c_{13}&c_{23}(1-c_{12})
			&1-c_{12}c_{23}\vee c_{13}-c_{23}(1-c_{12})
		\end{bmatrix}.
	\end{equation*}
	If the shortest path $p=\{(1,3)\}$ from node~$1$ to node~$3$ is not critical then we have $b_{31}=c_{12}c_{23}$ and also $b_{33}=1-c_{23}$. In this way the coefficient $c_{13}$ has completely left the model. When considering the identifiability problem, we cannot hope to identify a coefficient from some marginal distribution if it is not even identifiable from the full one.
\end{example}
	
	Now, all elements are in place to describe our main object of interest.
	
	\begin{assump}[Max-linear structural equation model on a ttt]
		\label{ass:max-mod}
		The random vector $X = (X_v, v \in V)$ has the max-linear representation in \eqref{eq:Xv} and \eqref{eq:bvi-rev1} with respect to the ttt $\T = (V, E)$ (Definition~\ref{def:ttt}) where $(Z_v, v \in V)$ is a vector of independent unit-Fréchet random variables and the edge weight vector $\theta = (c_e, e \in E)$ belongs to $\mathring{\Theta}_*$ in \eqref{eqn:theta_ring_star}.
	\end{assump}

The following identity for nodes with a unique parent will be useful:
\begin{equation}
\label{eqn:unique_parent} 
	\pa(v) = \{i\} \implies b_{vv}=1-c_{iv}.
\end{equation}
Indeed, if $i$ is the only parent of $v$, then $X_v = c_{iv} X_i \vee c_{vv} Z_v$ by \eqref{eqn:mlgm}. The variables $X_v, X_i, Z_v$ are unit-Fréchet distributed and $X_i$ is independent of $Z_v$, since $X_i$ is a function of $(Z_u, u \in \An(i))$ and $v \not\in \An(i)$. Hence $ c_{iv} + c_{vv} = 1 $, and because  $c_{vv} = b_{vv}$, Eq.~\eqref{eqn:unique_parent} follows.

A notational convention: in case of double subscripts, we may also write $x_{i_1,i_2}$ instead of $x_{i_1i_2}$.


	

\subsection{The angular measure}
	
Let $X$ follow a max-linear model with parameter vector $\theta$ as in Assumption~\ref{ass:max-mod}. The joint distribution $P_\theta$ of $X$ on $[0, \infty)^V$ is max-stable and has unit-Fréchet margins. It is determined by
	\[
		P_\theta([0, z]) = \P(X \le z)  
	= \exp\left( - l_\theta\left((1/z_v)_{v \in V}\right)\right), 
	\qquad z \in (0, \infty]^V,
	\]
	where the stable tail dependence function (stdf) $l_{\theta} : [0, \infty)^V \to [0, \infty)$ is
	\begin{equation} \label{eqn:ltheta}
		l_\theta(x) 
		= \sum_{i \in V} \max_{v \in V} \left( b_{vi} x_v \right)
	\end{equation}
	for $x = (x_v)_{v \in V} \in [0, \infty)^V$ \citep{einmahl2012an}.
	
	Let $H_{\theta}$ be the angular measure on the unit simplex $\Delta_V = \{ a \in [0, 1]^V : \sum_{v \in V} a^{(v)} = 1\}$ corresponding to the stdf $l_{\theta}$. The link between the stdf and the angular measure is detailed in \citet{haan2007extreme} for the bivariate case and in \citet[Chapter~5]{resnick1987extreme} and \citet[Chapters~7--8]{beirlant2004statistics} for higher dimensions: we have
	\begin{equation*} 
		l_\theta(x) = \int_{\Delta_V} \max_{v \in V} {( a^{(v)} x_v )} \, \diff H_\theta(a).
	\end{equation*}
	In view of the expression of $l_{\theta}$ in \eqref{eqn:ltheta}, the angular measure is discrete and satisfies
	\begin{equation}
		\label{eq:Hth}
		H_\theta = \sum_{i \in V} m_i \delta_{a_i},
	\end{equation}
	with masses $m_i = \sum_{v \in V} b_{vi}$ and atoms $a_i = (b_{vi} / m_i)_{v \in V} \in \Delta_V$ for $i \in V$ \citep*[page~1779]{einmahl2012an}. The notation $\delta_{x}$ refers to a unit point mass at $x$. 
	
	If $X$ follows a max-linear model, the angular measure of $X$ is identifiable from its distribution $P_\theta$ via the limit relation
	\begin{equation*}
		\label{eq:tPth2Hth}
		t \P \left(
		\frac{1}{\norm{X}_1} X \in \point, \, 
		\norm{X}_1 > t
		\right)
		\xrightarrow{w} H_{\theta}(\point), \qquad t \to \infty,
	\end{equation*}
	where $\norm{x}_1 = \sum_i |x_i|$ for a vector $x$ in Euclidean space, while the arrow $\xrightarrow{w}$ denotes weak convergence of finite Borel measures, in this case on $\Delta_V$.

	When we discuss latent variables and identifiability in Section~\ref{sec:ident}, we have to deal with the angular measure of a subvector of $X$, say $X_U = (X_v)_{v \in U}$, for non-empty $U \subset V$. Its stdf $l_{\theta, U}$ arises from $l_\theta$ by setting $x_v = 0$ for all $v \not\in U$: for $x \in [0, \infty)^U$ we have
	\begin{equation*}
		l_{\theta,U}(x) 
		= \sum_{i \in V} \max_{v \in U} {(b_{vi} x_v)}
		= \int_{\Delta_U} \max_{v \in U} {(a^{(v)} x_v)} \, \diff H_{\theta,U}(a).
	\end{equation*}
	The distribution of $X_U$ is max-linear too, so that its angular measure $H_{\theta,U}$ on $\Delta_U$ has a similar form as the one of $X$:
	\begin{equation}
		\label{eq:HthU}
		H_{\theta,U} = \sum_{i \in V} m_{i,U} \delta_{a_{i,U}},
	\end{equation}
	with masses $m_{i,U} = \sum_{v \in U} b_{vi}$ and atoms $a_{i,U} = (b_{vi} / m_{i,U})_{v \in U} \in \Delta_U$ for $i \in V$.

	
	\section{Conditional tail limit and the ttt with unique source} \label{sec:ctc}

	Here we study the limit distribution of 
	\begin{equation} \label{eqn:XuXv1}
		\left(\frac{X_v}{X_u}, v\in V \mathrel{\Big|} X_u>t \right), \qquad t\rightarrow\infty,
	\end{equation}
	when $X$ is a max-linear model with respect to a ttt $\T = (V, E)$ as in Assumption~\ref{ass:max-mod}. In particular, we are interested to know whether the elements of the limiting vector of \eqref{eqn:XuXv1} can be factorized into products of independent increments, similarly to other models with this property as in \citet{segers2020conditional} and \citet{asenova2021extremes}. 
	In Proposition~\ref{prop:factorize} below, we show that the limit variables factorize according to the unique shortest trails under the condition that the ttt has a unique source (node without parents). Moreover, by Proposition~\ref{prop:markov_ml}, the latter criterion is necessary and sufficient for $X$ to satisfy the global Markov property with respect to the skeleton graph associated to $\T$, i.e., the undirected counterpart of $\T$.

		Even though Proposition~\ref{prop:factorize} below looks similar to Theorem~3.5 in \cite{asenova2021extremes}, it does not follow from it. The reason is that we have not been able to verify Assumptions~3.1 and~3.4 in that article for the recursive max-linear model. In these assumptions, the conditioning event involves equality, i.e., $\{X_u=t\}$, and calculating the conditional distributions and their limits is not easy. This is why we have opted here for a different route: in~\eqref{eqn:XuXv1}, the conditioning event is $\{X_u>t\}$ and the limit conditional distribution as $t \to \infty$ is found from \citet[Example~1]{segers2020one}.
	
		According to property~\ref{ttt:prop3}, any pair of distinct nodes in a ttt is connected by a unique shortest trail. Let $t(u,v)$ denote the set of edges along the unique shortest trail between two distinct nodes $u$ and $v$. Consider for instance the shortest trail between nodes 2 and 8 on Figure~\ref{fig:ttt_unique}: $t(2,8)=\{(3,7), (7,8), (3,2)\}$. In contrast, let $t_{u}(u,v)$ be the set of edges incident to the same node set but directed from $u$ to $v$, irrespective of their original directions, e.g., $t_2(2,8)=\{(2,3), (3,7), (7, 8)\}$.

	For a given node $u \in V$, let $E_u$ be the set of all edges in such unique shortest paths directed away from $u$, that is,
	\begin{equation} \label{eqn:Eu}
		E_u=\bigcup_{v\in V\setminus u} t_u(u,v).
	\end{equation}
	Recall from Section~\ref{sec:def} that $\mathbb{T}$ denotes the set of tournaments within the ttt $\T$.
	For fixed $u \in V$ there is for every tournament $\tau = (V_\tau, E_\tau) \in \mathbb{T}$ a node, say $w_{u,\tau}$, which is the unique node in $V_\tau$ such that the trail $t(u, w_{u, \tau})$ is the shortest one among all trails between $u$ and a node $v$ in $V_\tau$. As an example, consider Figure~\ref{fig:conv_on_ttt}: starting from node $u = 8$, the closest node from the node set $V_{\tau_1}=\{1,2,3\}$ is $3$, hence $w_{8,\tau_1}=3$.  

	With these definitions we are ready to state the condition under which the limiting variables factorize into independent increments.
	
	\begin{prop}[Factorization in max-linear model] \label{prop:factorize}
		Let $(X_v, v\in V)$ follow a max-linear model as in Assumption~\ref{ass:max-mod}. Fix $u\in V$. Let $E_u$ be as in \eqref{eqn:Eu} and let $(M_e, e\in E_u)$ be a random vector composed of mutually independent subvectors $M^{(u,\tau)}=\bigl( M_{w_{u,\tau},j}: j \in V_\tau, (w_{u,\tau},j)\in E_u \bigr)$, one for every transitive tournament $\tau\in \mathbb{T}$, and with marginal distribution as in Lemma~\ref{lem:Muv-rev1}.
		
		
		The following statements are equivalent:
		\begin{enumerate}[label=(\roman*)]
			\item $\T$ has a unique source.
			\item For every $u \in V$, we have, as $t \to \infty$, the weak convergence P5Ttk2JNYkwUnwBcR6bu
			\begin{equation} \label{eqn:LXvXu}
				\mathcal{L}(X_v/X_u, v\in V\mid X_u>t)\inlaw
				\mathcal{L}(A^{(u)})
				=\mathcal{L}(A_{uv}, v\in V)
			\end{equation}
			with 
			\begin{equation} \label{eqn:Auvprod}
				A_{uv}=\prod_{e\in t_u(u,v)}M_e, \qquad v\in V.
			\end{equation}
			\item There exists $u\in V$ such that the limit in \eqref{eqn:LXvXu} and \eqref{eqn:Auvprod} holds.
		\end{enumerate}
	\end{prop}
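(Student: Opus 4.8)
The plan is to prove the cycle (i)$\Rightarrow$(ii)$\Rightarrow$(iii)$\Rightarrow$(i), the middle implication being immediate since a statement holding for \emph{every} $u$ trivially holds for \emph{some} $u$. The starting point for (i)$\Rightarrow$(ii) is the explicit form of the conditional tail limit, which by \citet[Example~1]{segers2020one} exists and is discrete. Spelling it out in our notation: as $t\to\infty$, the vector $(X_v/X_u)_{v\in V}$ given $\{X_u>t\}$ converges to the law placing mass $b_{ui}$ on the atom $(b_{vi}/b_{ui})_{v\in V}$, for $i\in\An(u)$, the masses summing to one by \eqref{eq:sumibvi}. Probabilistically this is the \emph{single-cause} decomposition: a large value of $X_u$ is asymptotically due to a single dominating factor $Z_i$ with $i\in\An(u)$, chosen with probability $b_{ui}$.

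First I would simplify the atoms. Using $\theta\in\mathring\Theta_*$ and \eqref{eq:bvi-rev1}, each nonzero coordinate becomes a ratio of shortest-path weights, $b_{vi}/b_{ui}=c_{p(i,v)}/c_{p(i,u)}$ when $i\in\An(v)\cap\An(u)$, and $b_{vi}/b_{ui}=0$ when $i\in\An(u)\setminus\An(v)$. Invoking the unique-source lemmas then gives the trail structure: the nesting of descendant sets (Lemma~\ref{lem:oned-1}-3) makes $\An(u)$ a chain and endows each pair $u,v$ with a well-defined nearest common ancestor $w_{uv}$, which coincides with the meeting node of the shortest trail (Lemma~\ref{lem:addition_to_oned-1}-2); the separator-as-source property (Lemma~\ref{lem:addition_to_oned-1}-1) lets me split $p(i,u)=p(i,w_{uv})\cup p(w_{uv},u)$ and $p(i,v)=p(i,w_{uv})\cup p(w_{uv},v)$ for any common ancestor $i$. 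Cancelling the shared segment $c_{p(i,w_{uv})}$ yields $A_{uv}=c_{p(w_{uv},v)}/c_{p(w_{uv},u)}$ whenever the cause satisfies $i\preceq w_{uv}$, and $A_{uv}=0$ otherwise. This is exactly $\prod_{e\in t_u(u,v)}$ of the weights $c_e$ on the forward (toward $v$) segment and the reciprocals $1/c_e$ on the reversed (toward $u$) segment, switched off by the threshold event $\{i\preceq w_{uv}\}$.

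The core step is to recognise this single-cause law as the law of $\bigl(\prod_{e\in t_u(u,v)}M_e\bigr)_{v}$ with the subvectors $M^{(u,\tau)}$ independent across tournaments and distributed as in Lemma~\ref{lem:Muv-rev1}. I would organise the increments by tournament along the shortest-path tree $E_u$. On trail segments aligned with the edge directions (the descent into the descendants of $u$ and the descent from a branch point down to a cousin) the entry node $w_{u,\tau}$ is the within-tournament source, the threshold is automatically satisfied, and the increment is deterministic, equal to the relevant $c_{p(\cdot)}$. All randomness is therefore carried by the tournaments crossed while moving from $u$ \emph{up} towards the root, where the entry node is the descendant-most node of $\tau$ and the increment is switched off precisely on the complement of the threshold event. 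Since separator nodes are cut vertices (Properties~\ref{ttt:prop1}--\ref{ttt:prop2}), the activation events $\{i\preceq w_{uv}\}$ form a \emph{nested} family indexed by the chain of separators between $u$ and the root; the chain rule factorises their probabilities into a product of conditional crossing probabilities, one per tournament, which are exactly the (renormalised) marginals supplied by Lemma~\ref{lem:Muv-rev1}. Assembling these with the deterministic downstream factors realises $\mathcal L(A^{(u)})$ as the announced independent-increment product, proving (ii); and (ii)$\Rightarrow$(iii) is trivial.

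For (iii)$\Rightarrow$(i) I would argue by contraposition: if $\T$ has no unique source then, by Lemma~\ref{lem:oned-1}-2, it contains a v-structure $j_1\to k\leftarrow j_2$ with $j_1,j_2$ non-adjacent, and I would show the product form fails for \emph{every} conditioning node $u$. The obstruction is that $j_1$ and $j_2$ descend from distinct sources and hence admit no common ancestor strictly below $k$, so the single-cause law couples the events $\{A_{u,j_1}>0\}$ and $\{A_{u,j_2}>0\}$ through the collider $k$ in a manner that no product of increments along the two trails $t_u(u,j_1)$ and $t_u(u,j_2)$ — which split at the separator nearest $u$ — can reproduce. Concretely, I would exhibit a mismatch between $\P(A_{u,j_1}>0,\,A_{u,j_2}>0)$ and the value forced by independence of the corresponding subvectors, tying this to the failure of the global Markov property anticipated in Proposition~\ref{prop:markov_ml} and to the $*$-connectedness characterisation of conditional independence in max-linear Bayesian networks \citep{amendola2022conditional}. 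I expect this converse to be the main obstacle: the forward direction is a systematic assembly once the structural lemmas and Lemma~\ref{lem:Muv-rev1} are in hand, whereas ruling out the product form at a v-structure, uniformly in $u$, requires isolating the precise collider-induced dependence and verifying that it cannot be matched by any independent-increment law supported on the skeleton.
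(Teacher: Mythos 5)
Your implication (i)$\Rightarrow$(ii) follows the paper's route: both start from the discrete single-cause limit of \citet[Example~1]{segers2020one}, rewrite the atoms as ratios $c_{p(j,v)}/c_{p(j,u)}$ using the critical parameter space, and match these against products of increments organised tournament by tournament along the unique shortest trails, using Lemma~\ref{lem:addition_to_oned-1} to split each trail at the meeting node and Lemma~\ref{lem:Muv-rev1} for the increment laws. The paper carries this out by an explicit case analysis (the three trail types of Lemma~\ref{lem:addition_to_oned-1}-2, each subdivided according to whether the relevant node is the source of its tournament, via the auxiliary Lemma~\ref{lem:Auv_rev1}); your ``nested activation events plus chain rule'' phrasing is a compressed version of the same computation rather than a different argument, though as stated it only accounts for the positivity events and not for the values of the positive atoms, which also have to match.

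The genuine gap is in (iii)$\Rightarrow$(i). You correctly reduce to a v-structure $j_1\to k\leftarrow j_2$ and propose to exhibit a mismatch between $\P(A_{u,j_1}>0,\,A_{u,j_2}>0)$ and the value forced by independence, but that is precisely the computation that carries the proof and you leave it as a promissory note; moreover, the collider-coupling obstruction you describe does not cover all configurations of $u$ relative to the v-structure. The paper needs three distinct mechanisms: (a) when the whole v-structure sits on a single trail $t(u,j_2)$, one shows $\An(u)\cap\An(j_2)=\varnothing$, so $A_{u,j_2}$ is degenerate at $0$ while a product of increments from Lemma~\ref{lem:Muv-rev1} can never be degenerate at $0$ --- no joint probability is involved here; (b) when the trail from $k$ to $u$ is a directed path all of whose intermediate nodes are sources of the next tournament, the true value $\P(A_{u,j_1}=0,A_{u,j_2}=0)=1-c_{p(k,u)}(c_{j_1k}+c_{j_2k})$ must be compared with the product value $1-c_{p(k,u)}(c_{j_1k}+c_{j_2k}-c_{j_1k}c_{j_2k})$, which differ by $c_{p(k,u)}c_{j_1k}c_{j_2k}>0$; (c) when some intermediate node is not the source of the next tournament, the obstruction changes again: by the criticality inequality the product acquires at least two distinct positive atoms, whereas $A_{u,j_1}$ has only one. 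There is also the subcase where some edge on that trail points from $u$ towards $k$, in which both $A_{u,j_1}$ and $A_{u,j_2}$ degenerate at $0$. Without identifying and verifying a concrete mismatch of one of these types for every choice of $u$, the converse is not established.
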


The following lemma provides the distribution of $M^{(u,\tau)}$ in Proposition~\ref{prop:factorize}.

	\begin{lemma} \label{lem:Muv-rev1}
	Let $(X_v, v\in V)$ follow a max-linear model as in Assumption~\ref{ass:max-mod}.
	Let $\tau \in \mathbb{T}$ be a transitive tournament on nodes $V_{\tau}$.
	Then for $u\in V_{\tau}$, we have 
	\begin{align} \label{eqn:LXLM}
		\mathcal{L}\left(\frac{X_v}{X_u}, v\in V_{\tau} \mathrel{\Big|} X_u>t\right)
		\inlaw
		\mathcal{L}(M^{(u,\tau)})
		=\mathcal{L}(M_{uv}, v\in V_{\tau})
		=\sum_{j\in \An(u)}b_{uj}
		\delta_{\left\{\frac{c_{p(j,v)}}{c_{p(j,u)}}, v\in V_{\tau}\right\}}.
	\end{align}
	The vector $M^{(u,\tau)}=(M_{uv}, v\in V_{\tau})$ has dependent variables and the distribution of a single element is as follows.
	\begin{compactenum}
		\item The distribution of $M_{uv}$ when $(u,v)\in E$.
		\begin{compactenum}
			\item If $u$ is the source node of $\tau$, the distribution is given by $\mathcal{L}(M_{uv})=\delta_{\{c_{uv}\}}$.
			
			\item If $u$ is not the source node of $\tau$, the distribution is given by 
			\[
			\mathcal{L}(M_{uv})=\sum_{j\in \An(u)}b_{uj}
			\delta_{\left\{\frac{c_{p(j,v)}}{c_{p(j,u)}}\right\}}.
			\]
		\end{compactenum}
		\item The distribution of $M_{uv}$ when $(v,u)\in E$.
		\begin{compactenum}
			\item If $v$ is the source node of $\tau$, the distribution is given by 
			\[
			\mathcal{L}(M_{uv})=c_{vu}\delta_{\{1/c_{vu}\}}
			+(1-c_{vu})\delta_{\{0\}}.
			\]
			\item If $v$ is not the source node of $\tau$, the distribution is given by
			\[
			\mathcal{L}(M_{uv})
			=\sum_{j\in \An(v)}
			b_{uj}\delta_{\left\{\frac{c_{p(j,v)}}{c_{p(j,u)}}\right\}}
			+\sum_{j\in \An(u)\setminus \An(v)}
			b_{uj}\delta_{\{0\}}.
			\]
		\end{compactenum}
	\end{compactenum}
\end{lemma}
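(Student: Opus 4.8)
The plan is to derive the joint limit \eqref{eqn:LXLM} first and then read off each one-dimensional marginal by projection, simplifying through the block structure of the ttt. For the joint limit I would start from the max-linear representation $X_w=\bigvee_{i\in V}b_{wi}Z_i$ and condition on $\{X_u>t\}$. As $t\to\infty$ this event is asymptotically caused by a single dominant factor: writing $J$ for the index achieving the maximum in $X_u=\bigvee_i b_{ui}Z_i$, one has $\P(J=j\mid X_u>t)\to b_{uj}$ (using $\sum_i b_{ui}=1$ from \eqref{eq:sumibvi}), and on that event $X_w=b_{wj}Z_j(1+o(1))$ simultaneously for all $w$, uniformly in probability, so that $X_w/X_u\to b_{wj}/b_{uj}$. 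This is exactly the conditional spectral computation of \citet[Example~1]{segers2020one}; alternatively it follows directly from the exponent measure $\mu=\sum_{i\in V}\int_0^\infty \delta_{r(b_{wi})_w}\,r^{-2}\,\diff r$ evaluated on the cone $\{x_u>1,\ x_w\le y_w x_u\}$. Restricting the coordinate $w$ to $V_\tau$ and rewriting $b_{wj}/b_{uj}=c_{p(j,w)}/c_{p(j,u)}$ via \eqref{eq:bvi-rev1} yields the atomic limit $\sum_{j\in\An(u)}b_{uj}\,\delta_{\{c_{p(j,v)}/c_{p(j,u)},\,v\in V_\tau\}}$, with the convention $c_{p(j,v)}/c_{p(j,u)}=0$ whenever $j\notin\An(v)$ (so $b_{vj}=0$).

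The marginal law $\mathcal{L}(M_{uv})$ is then the pushforward of this joint limit onto coordinate $v$, namely $\sum_{j\in\An(u)}b_{uj}\,\delta_{\{c_{p(j,v)}/c_{p(j,u)}\}}$, and the four cases amount to simplifying this expression. The structural input I would isolate as a preliminary claim is that the source $s$ of a tournament $\tau$ is a cut vertex separating $V_\tau$ from every ancestor of $s$, all of which lie outside $\tau$. Indeed, no node of $V_\tau\setminus\{s\}$ can be an ancestor of $s$, since $s\to w$ for every $w\in V_\tau$ and a directed path $w\leadsto s$ would close a cycle; hence any path reaching $s$ enters $\tau$ at $s$, and by property~\ref{ttt:prop1} this makes $s$ the unique gateway, so every path from an external ancestor $j$ to a node $w\in V_\tau$ passes through $s$. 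By the criticality assumption $\theta\in\mathring{\Theta}_*$ the shortest-path weights then factor multiplicatively, $c_{p(j,w)}=c_{p(j,s)}\,c_{p(s,w)}=c_{p(j,s)}\,c_{sw}$, the shortest path $s\to w$ being the edge itself.

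With this claim the cases follow. In case 1a ($u$ the source of $\tau$, $(u,v)\in E$) every $j\in\An(u)$ satisfies $c_{p(j,v)}/c_{p(j,u)}=c_{uv}$, so all mass collapses onto $c_{uv}$ because $\sum_{j\in\An(u)}b_{uj}=1$. In case 2a ($v$ the source, $(v,u)\in E$) I would partition $\An(u)=\An(v)\sqcup(\An(u)\setminus\An(v))$: for $j\in\An(v)$ the path factors through $v$, giving $c_{p(j,u)}=c_{p(j,v)}c_{vu}$ and ratio $1/c_{vu}$, while for $j\in\An(u)\setminus\An(v)$ there is no $j\to v$ path and the ratio is $0$; the mass at $1/c_{vu}$ equals $\sum_{j\in\An(v)}b_{uj}=c_{vu}\sum_{j\in\An(v)}b_{vj}=c_{vu}$, using $b_{uj}=c_{vu}b_{vj}$ (from the same factorization) and $\sum_{j\in\An(v)}b_{vj}=1$. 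Cases 1b and 2b require no collapsing: 1b is the bare projection formula, and 2b is that formula after the partition $\An(u)=\An(v)\sqcup(\An(u)\setminus\An(v))$, noting $c_{p(j,v)}=0$ on the second block.

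The main obstacle, I expect, is the first step: rigorously justifying the single-dominant-factor description of $\{X_u>t\}$ and the simultaneous asymptotics $X_w/X_u\to b_{wj}/b_{uj}$, i.e. controlling the subdominant factors uniformly in $w$. Once the joint limit is secured the remainder is bookkeeping; the only genuinely ttt-specific ingredient is the cut-vertex claim together with the multiplicative factorization of shortest-path weights, which is where properties~\ref{ttt:prop1}--\ref{ttt:prop3} and the criticality of $\theta\in\mathring{\Theta}_*$ enter. The telescoping mass computation in case 2a is the one place where the normalization $\sum_{j\in\An(w)}b_{wj}=1$ must be combined with the path factorization, and it is worth recording as a small separate identity.
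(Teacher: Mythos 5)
The derivation of the joint limit \eqref{eqn:LXLM} and of the four marginal laws follows essentially the same route as the paper: both invoke \citet[Example~1]{segers2020one} for the limit $\sum_{j} b_{uj}\,\delta_{\{b_{vj}/b_{uj},\,v\in V_\tau\}}$, rewrite $b_{vj}/b_{uj}$ through the shortest-path factorization \eqref{eq:bvi-rev1}, and then collapse or partition the atoms case by case. Your structural observation that every path from an external ancestor enters the tournament through its source, and your telescoping mass computations in cases 1.(a) and 2.(a) (in particular $b_{uj}=c_{vu}b_{vj}$ and $\sum_{j\in\An(v)}b_{uj}=c_{vu}$), coincide with the paper's.

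There is, however, a genuine omission: the lemma also asserts that the vector $M^{(u,\tau)}=(M_{uv},\,v\in V_\tau)$ has \emph{dependent} components, and your proposal never addresses this claim, to which the paper devotes roughly half of its proof. The assertion is not automatic from the joint formula. The paper's argument exhibits a coordinate vector whose probability under the joint limit law is zero while the product of the corresponding marginal probabilities is positive. Concretely, labelling the nodes of $\tau$ by out-degree so that node $1$ is its source and taking $u$ not equal to the source, one splices the first coordinate $1/c_{1u}$ of the atom arising from $j=1$ into the atom $(0,\ldots,0,1;c_{uv},v>u)$ arising from $j=u$; the criticality assumption ($c_{1v}>c_{1u}c_{uv}$ with \emph{strict} inequality) guarantees $c_{1v}/c_{1u}\neq c_{uv}$, so the spliced vector is not an atom of the joint law and carries probability zero, whereas each marginal coordinate carries probability at least $b_{u1}>0$ or $b_{uu}>0$. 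Without some such argument the dependence claim --- which is precisely what distinguishes the within-tournament increments from the mutually independent across-tournament subvectors in Proposition~\ref{prop:factorize} --- remains unproven. (The degenerate case where $u$ is the source of $\tau$, in which $M^{(u,\tau)}$ is almost surely constant and hence simultaneously independent and completely dependent, also deserves an explicit remark.)
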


	According to Proposition~\ref{prop:factorize}, the factorization property \eqref{eqn:Auvprod} holds either for all nodes or for no node at all, a necessary and sufficient condition being that the ttt has a unique source.	
	The principle of \eqref{eqn:Auvprod} is illustrated in Figure~\ref{fig:conv_on_ttt} for $u = 8$. 
	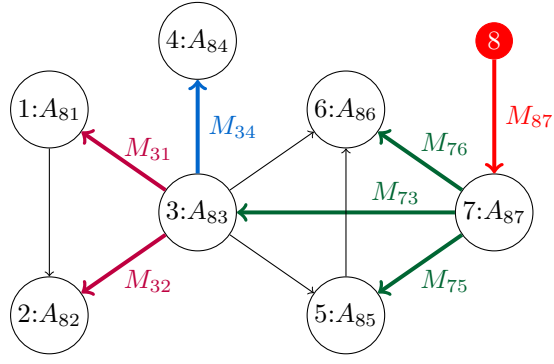
\begin{figure}[]
		\centering
		\small 
		\begin{tikzpicture}[scale=1.3]
			\node[hollow node] (1) at (0,0.3) {1:$A_{81}$};
			\node[hollow node] (2) at (0,-1.8)  {2:$A_{82}$};
			\node[hollow node] (3) at (1.5,-0.75)  {3:$A_{83}$};
			\node[hollow node] (4) at (1.5,1)  {4:$A_{84}$};
			\node[hollow node] (7) at (4.5,-0.75)  {7:$A_{87}$};
			\node[hollow node] (5) at (3,-1.8)  {5:$A_{85}$};
			\node[hollow node] (6) at (3,0.3)  {6:$A_{86}$};
			\node[solid node, red] (8) at (4.5,1)  {\textcolor{white}{8}};
			\path[->] (1) edge (2)  ;
			\path[<-, color=purple, line width=1.5] (1) edge (3);
			\node[color=purple] (m1) at (1,-0.1)  {$M_{31}$};
			\path[<-, color=purple, line width=1.5] (2) edge (3)  ;
			\path[->] (3) edge (5);
			\path[->, color=ggr, line width=1.5] (7) edge (3);
			\node[color=ggr] (m3) at (3.5,-0.55)  {$M_{73}$};
			\node[color=ggr] (m5) at (4,-1.45)  {$M_{75}$};
			\path[->] (3) edge (6);
			\path[->] (5) edge (6);
			\path[<-, color=ggr, line width=1.5] (5) edge (7);
			\node[color=ggr] (m6) at (4,-0.05)  {$M_{76}$};
			\node[color=purple] (m2) at (1,-1.45)  {$M_{32}$};
			\path[<-, color=bbl, line width=1.5] (4) edge node[anchor=west] {$M_{34}$} (3);
			\path[<-, color=red, line width=1.5] (7) edge node[anchor=west] {$M_{87}$} (8);
			\path[->, color=ggr, line width=1.5] (7) edge (6);
		\end{tikzpicture}
		\centering
		\caption{A ttt on four tournaments: $\tau_1$ on node set $\{1,2,3\}$, $\tau_2$ on $\{3,4\}$, $\tau_3$ on $\{3,5,6,7\}$ and $\tau_4$ on $\{8,7\}$. The variable exceeding a high threshold is at node 8. The set $E_u$ is composed of the coloured edges which do not necessarily have the same directions in the original graph. On the nodes we have $A^{(8)}=(A_{8i}, i=1, \ldots, 7)$ and on the edges we have the multiplicative increments $(M_{e}, e\in E_u)$. Increments in different colours are mutually independent, while those in the same color are dependent according to Lemma~\ref{lem:Muv-rev1}.} 
		\label{fig:conv_on_ttt}
	\end{figure}
	The limit $A^{(u)}=(A_{uv}, v\in V\setminus u)$ is given by
	\begin{align*}
		A_{87}&=\textcolor{red}{M_{87}}, 		&A_{83}&=\textcolor{red}{M_{87}}\textcolor{ggr}{M_{73}},
		&A_{82}&=\textcolor{red}{M_{87}}\textcolor{ggr}{M_{73}}\textcolor{purple}{M_{32}},\\
		&&A_{86}&=\textcolor{red}{M_{87}}\textcolor{ggr}{M_{76}},
		&A_{81}&=\textcolor{red}{M_{87}}\textcolor{ggr}{M_{73}}\textcolor{purple}{M_{31}},\\
		&&A_{85}&=\textcolor{red}{M_{87}}\textcolor{ggr}{M_{75}},
		&A_{84}&=\textcolor{red}{M_{87}}\textcolor{ggr}{M_{73}}\textcolor{bbl}{M_{34}},
	\end{align*}
	where $M^{(8,\tau_4)}=\textcolor{red}{M_{87}}$, $M^{(8,\tau_3)}=(\textcolor{ggr}{M_{76}}, \textcolor{ggr}{M_{73}}, \textcolor{ggr}{M_{75}})$, $M^{(8,\tau_2)}=\textcolor{bbl}{M_{34}}$ and $M^{(8,\tau_1)}=(\textcolor{purple}{M_{31}}, \textcolor{purple}{M_{32}})$ are independent sub-vectors by construction.
	
	What underlies the link between the factorization of the limiting variables from Proposition~\ref{prop:factorize} on the one hand and the uniqueness of the source of the ttt on the other hand is the Markovianity of $X$ with respect to the skeleton graph $T$. The Markov property states that for any three non-empty and disjoint sets $A,B,C\subset V$ such that in the graph $T$ the nodes in $A$ are separated from the nodes in $B$ by the nodes in $C$, the vector $X_A=(X_v, v\in A)$ is conditionally independent from $X_B$ given $X_C$ \citep{lauritzen1996graphical}. Another equivalence condition can be added to the list in Proposition~\ref{prop:factorize}.
	
	
	\begin{prop} \label{prop:markov_ml}
		Let $X$ follow a max-linear model with respect to the ttt $\T$ as in Assumption~\ref{ass:max-mod}. 
Then $X$ satisfies the global Markov property with respect to the skeleton graph $T$ if and only if $\T$ has a unique source. 
	\end{prop}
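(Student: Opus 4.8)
The plan is to prove the two implications separately, pivoting on Lemma~\ref{lem:oned-1}-2, which states that $\T$ has a unique source if and only if it contains no v-structures. The observation used in both directions is that the recursive model \eqref{eqn:mlgm} is a structural equation model whose innovations $(Z_v)_{v\in V}$ are mutually independent: we have $X_v=g_v(X_{\pa(v)},Z_v)$ with $Z_v$ independent of $(Z_u)_{u\neq v}$, and a short topological-ordering argument then yields the ordered local, hence the global, \emph{directed} Markov property of $X$ with respect to $\T$. This step is purely about conditional-independence statements and requires no densities, so it is unaffected by the fact that the law of $X$ is singular.

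For the implication ``unique source $\Rightarrow$ Markov with respect to $T$'' I would argue by moralization. Moralizing $\T$ only adds edges between non-adjacent parents of a common child, that is, precisely across v-structures; when $\T$ has a unique source there are none (Lemma~\ref{lem:oned-1}-2), so the moral graph of $\T$ coincides with the skeleton $T$. Since $X$ is directed-Markov with respect to $\T$, the standard moralization result \citep{lauritzen1996graphical} gives that $X$ obeys the undirected global Markov property with respect to its moral graph, namely $T$. This settles one direction without any appeal to the max-linear $*$-calculus.

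For the converse I would use the contrapositive: if $\T$ has no unique source it carries a v-structure, i.e.\ non-adjacent nodes $i,j$ that are both parents of some $k$. By property~\ref{ttt:prop1}, $i$ and $j$ lie in different tournaments meeting only at the separator $k$, so $k$ is a cut vertex of $T$ and $\{k\}$ separates $\{i\}$ from $\{j\}$; the global Markov property with respect to $T$ would then force $X_i\indep X_j\mid X_k$, and it suffices to show this fails. First note that $i$ and $j$ have no common ancestor, since one would create a trail between them avoiding the cut vertex $k$, or force a directed cycle through $k$ (as $i\to k$ and $j\to k$); hence $X_i$ and $X_j$ are built from disjoint blocks of innovations and are independent marginally. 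Conditioning on $X_k$ reintroduces dependence: in $X_k=\bigvee_{h\in\An(k)}b_{kh}Z_h$ the maximum can be realized through the critical path from $i$ or from $j$, and under $\theta\in\mathring{\Theta}_*$ these critical weights are strictly positive and strictly dominant, so the conditional law of $X_j$ given $X_k$ carries an atom that is annihilated once one further conditions on $X_i$ attaining its own critical atom. This is exactly the assertion that the trail $i\to k\leftarrow j$, with the collider $k$ in the conditioning set, is $*$-connecting in the sense of \citet{amendola2022conditional}, whence $X_i\nindep X_j\mid X_k$.

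The main obstacle is the converse. In the Gaussian world a collider in the conditioning set immediately creates dependence through covariances, but here the law is atomic and, worse, for max-linear models d-connectedness does \emph{not} in general imply dependence: genuine conditional independences can appear through critical-path cancellations invisible to d-separation. The delicate point is therefore to certify that no such cancellation occurs, which is where the strict criticality encoded in $\mathring{\Theta}_*$ is essential, and to phrase the collider-induced dependence in the correct language of $*$-connectedness rather than classical d-separation. Verifying that the relevant atom truly survives, respectively disappears, under the extra conditioning is the computational heart of the argument, and it is here that \citet{amendola2022conditional} does the real work.
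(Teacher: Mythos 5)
Your sufficiency argument takes a genuinely different and more elementary route than the paper's. The paper proves that direction by a case analysis of the five $*$-connecting configurations in the conditional reachability DAG $\mathcal{D}_S^*$ of \citet{amendola2022conditional}, ruling each out when $S$ separates $A$ and $B$ in $T$ and there are no v-structures. You instead observe that the recursive system \eqref{eqn:mlgm} with independent innovations satisfies the local, hence global, directed Markov property with respect to $\T$ (an implication valid for arbitrary distributions, so the atomicity of the law is harmless), and that with no v-structures the moral graph of $\T$ coincides with the skeleton $T$, so the classical moralization implication of \citet{lauritzen1996graphical} delivers the undirected global Markov property at once. This is correct and shorter; what the paper's route buys is uniformity, since the same $*$-calculus is then reused for the converse.

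For the necessity direction you follow essentially the same path as the paper --- locate a v-structure $i \to k \leftarrow j$, note that $\{k\}$ separates $\{i\}$ from $\{j\}$ in $T$, and appeal to the completeness theory of \citet{amendola2022conditional} --- but you stop exactly where the work happens. Exhibiting a $*$-connecting path $i \to k \leftarrow j$ is not enough: their Theorem~6.18 requires the path to be \emph{effective}, and your proposal only asserts in words that ``the relevant atom is annihilated'' without certifying effectiveness; you yourself flag this as the computational heart but then attribute it to the cited reference, which only supplies the criterion, not the verification. The paper closes the gap by computing the substitution matrices $\Xi^{ki}_S=\Xi^{kj}_S=0$ (automatic because $S=\{k\}$ is a singleton, so all diagonal entries vanish), hence $\Xi_S=0$ and $\Gamma_{SS}=0$, so the tropical eigenvalue of $\max(\Gamma_{SS},\Xi_S)$ is $0<1$ and the path is effective. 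That verification is three lines, but it is the step that converts your heuristic about surviving atoms into a proof of $X_i \notindep X_j \mid X_k$; as written, your argument leaves it open.
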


		Even though in Proposition~\ref{prop:markov_ml} we consider the undirected graph $T$ associated to the ttt $\T$, the recursive max-linear specification of $X$ is still with respect to the directed graph $\T$ itself. Indeed, the latter edges' directions are intrinsically determined by the recursive max-linear model specification of $X = (X_v)_{v \in V}$. When we also consider the associated skeleton graph $T$, i.e., without directions, it is because, in view of the factorization property in Proposition~\ref{prop:factorize}, we are interested in whether the global Markov property holds, a property which is most easily described in terms of the skeleton graph $T$.
	
	The proof of Proposition~\ref{prop:markov_ml} is based on notions and results from \citet{amendola2022conditional} which provides an extensive study of conditional independence properties of max-linear models. In particular, the notion of \emph{$*$-connecting} path between two nodes in a DAG is introduced, a notion which is similar to the one of an \emph{active} path \citep[Definition~3.6]{koller2009probabilistic} between two nodes.

\section{Latent variables and parameter identifiability} 
\label{sec:ident}
	
In practice, it is possible that on some of the nodes, the variables of interest are not observed (latent). Examples from the literature are water heights on certain locations on the river networks of the Danube in \citet{asadi2015extremes} and the Seine in \citet{asenova2021inference}.
We look at the problem of recovering all parameters of the distribution of the complete vector, based on the distribution of the observed variables only. If this is possible, we can study the parametric model as if all variables were observed: in particular, 
we are able to compute measures of tail dependence for sets including the unobserved variables. The latter is important as it may be the only possible way to quantify tail dependence, because non-parametric estimates are not available when dealing with unobserved variables. 
	
Consider for instance the network in Figure~\ref{fig:ident_ml}. The max-linear model on $\T=(V,E)$ has eight variables and eleven parameters $\theta=(c_e, e\in E)$. By Proposition~\ref{prop:ident_block} below, the parameter $\theta \in \mathring{\Theta}_*$ can be uniquely identified in case $X_1, X_3, X_7$ are not observed on the basis of the joint distribution of the remaining five variables, $X_U=(X_{2}, X_4,X_5,X_6,X_8)$.

The problem of parameter identifiability will be formalized on the level of the angular measure $H_{\theta}$ and is presented in detail in the next two subsections.

\subsection{Graph-induced characteristics of the angular measure} \label{ssec:ang_compl}
	
In this subsection, we argue that the condition $\theta=(c_e, e\in E)\in \mathring{\Theta}_*$ guarantees that all edge weights in $\theta$ are uniquely identifiable from the angular measure $H_{\theta}$ of $X = (X_v, v\in V)$ and thus from the distribution $P_\theta$ of $X$.
Recall from \eqref{eq:Hth} that $H_{\theta}$ is discrete with atoms $a_i = (a_{vi})_{v \in V} \in \Delta_V$ and masses $m_i > 0$.
	
	

	Thanks to the assumption $\theta\in \mathring{\Theta}_*$, 
we have
\begin{equation}
	\label{eq:equivs}
	a_{vi} > 0 \iff b_{vi} > 0 \iff i \in \An(v) \iff v \in \Desc(i).
\end{equation}
For any DAG, all nodes have a different set of descendants, i.e.,
\begin{equation}
	\label{eq:Descdistinct}
	\forall i, j \in V : i \ne j \implies \Desc(i) \neq \Desc(j).
\end{equation}
Indeed, if $i \ne j$ and $\Desc(i) \subseteq \Desc(j)$, then $i \in \desc(j)$ and hence $j \not\in \desc(i)$, so that $\Desc(j) \not\subseteq \Desc(i)$.

\begin{lemma} 
	\label{lem:Hth}
	Let $(X_v, v \in V)$ follow a max-linear model as in Assumption~\ref{ass:max-mod}, with parameter vector $\theta \in \mathring{\Theta}_*$ and induced coefficient matrix $(b_{vi})_{i,v \in V}$. Let $H_\theta = \sum_{i \in V} m_i \delta_{a_i}$ in \eqref{eq:Hth} be its angular measure. Then
	\begin{enumerate}[label=(\arabic*)]
	\item $m_i > 0$ for all $i \in V$; 
	\item for any atom $a_i = (a_{vi})_{v \in V}$, we have $a_{vi} > 0$ if and only if $v \in \Desc(i)$. Specifically, all $|V|$ vectors $a_i$ are different and every atom can be matched uniquely to a node in $V$;
	\item for each edge $(i, v) \in E$, we have $c_{iv} = b_{vi} / b_{ii} = a_{vi} / a_{ii}$.
	\end{enumerate}
	In particular, $\theta \in \mathring{\Theta}_*$ is identifiable from $H_\theta$ and thus from $P_\theta$, i.e., for $\theta_1 \neq \theta_2\in \mathring{\Theta}_*$ we have $H_{\theta_1} \neq H_{\theta_2}$ and thus $P_{\theta_1} \neq P_{\theta_2}$.
\end{lemma}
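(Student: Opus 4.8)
The plan is to extract all three numbered claims directly from the closed form of the coefficients $b_{vi}$ in \eqref{eq:bvi-rev1}, leaning on two facts already established just above the lemma: the chain of equivalences \eqref{eq:equivs} and the distinctness of descendant sets \eqref{eq:Descdistinct}. Once the three items are in hand, I would prove identifiability by decoding $H_\theta$ in two stages — using item~(2) to attach each atom to a node of $\T$, and then item~(3) to read each edge weight off the coordinates of the matching atom.

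First I would dispose of~(1): every summand of $m_i = \sum_{v \in V} b_{vi}$ is non-negative and the diagonal term is $b_{ii} = c_{ii} > 0$ because $\theta \in \mathring{\Theta}$, so $m_i \ge c_{ii} > 0$. For~(2), positivity of $m_i$ makes $a_{vi} = b_{vi}/m_i$ inherit the sign pattern of $b_{vi}$, so the equivalence $a_{vi} > 0 \iff v \in \Desc(i)$ is precisely \eqref{eq:equivs}; hence the support of the atom $a_i$ equals $\Desc(i)$. I would then invoke \eqref{eq:Descdistinct} to conclude that these supports are pairwise distinct, so the $|V|$ atoms are distinct vectors, and each atom is matched to the unique node whose descendant set equals its support.

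For~(3) I would fix an edge $(i,v) \in E$. Then $i$ is a parent, hence an ancestor, of $v$, and the single edge $(i,v)$ is the unique shortest directed path $p(i,v)$: no directed path can use fewer than one edge, and Lemma~\ref{lem:oned-1}-1 rules out any competing shortest path. Thus $c_{p(i,v)} = c_{iv}$, and the third case of \eqref{eq:bvi-rev1} gives $b_{vi} = c_{ii} c_{iv} = b_{ii} c_{iv}$, so $c_{iv} = b_{vi}/b_{ii}$; dividing numerator and denominator by $m_i$ turns this into $c_{iv} = a_{vi}/a_{ii}$, with $a_{ii} > 0$ since $i \in \Desc(i)$.

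To finish with identifiability, I would take $\theta_1, \theta_2 \in \mathring{\Theta}_*$ on the same ttt with $H_{\theta_1} = H_{\theta_2}$. By~(2) the atom with support $\Desc(i)$ is unambiguously the one indexed by $i$ in either measure, so equality of the measures forces the atom indexed by $i$ to agree across $\theta_1$ and $\theta_2$ for every $i$; applying~(3) to each edge then gives $c_{iv}$ the same value under both, i.e.\ $\theta_1 = \theta_2$. Since the weak-convergence limit stated just before the lemma exhibits $H_\theta$ as a functional of $P_\theta$, the contrapositive yields $\theta_1 \ne \theta_2 \Rightarrow H_{\theta_1} \ne H_{\theta_2} \Rightarrow P_{\theta_1} \ne P_{\theta_2}$. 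I expect the only genuinely delicate point to be the atom-to-node bookkeeping in~(2): the entire decoding of $H_\theta$ rests on \eqref{eq:Descdistinct} ensuring that distinct nodes leave distinct footprints in the supports of their atoms, so that reading $\Desc(i)$ off a support pins down $i$ uniquely.
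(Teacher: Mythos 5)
Your proposal is correct and follows essentially the same route as the paper's proof: positivity of $m_i$ via the diagonal term $c_{ii}>0$, distinctness of atoms via \eqref{eq:equivs} combined with \eqref{eq:Descdistinct}, recovery of $c_{iv}$ from $b_{vi}=b_{ii}c_{iv}$ using that each edge is the unique shortest (hence critical) path between its endpoints, and the same two-stage decoding of $H_\theta$ for identifiability. The only cosmetic difference is that you bound $m_i \ge c_{ii} > 0$ directly where the paper argues $m_i = 0$ iff $c_{ii} = 0$; the substance is identical.
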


In Lemma~\ref{lem:Hth}, if the edge $(i, v)$ is not critical, then there is another path, say $p'$, from $i$ to $v$ with path product $c_{p'} \ge c_{iv}$, and then we can further lower the value of $c_{iv}$ without changing the coefficients in \eqref{eq:bvi}, because they involve $c_{p'}$ rather than $c_{iv}$, thus yielding the same measure $H_\theta$. This shows that without the criticality assumption, some edge weights may not be identifiable from $H_\theta$.

	\begin{example}[Unique zero patterns] \label{ex:ident} 
		In dimension $d = 3$, consider an angular measure given by the following atoms and masses:
		\begin{equation*}
			\omega_1=\frac{1}{2.2}\begin{bmatrix}0.8\\1\\0.4\end{bmatrix}, \; \mu_1=2.2, 
			\qquad 
			\omega_2=\frac{1}{0.5}\begin{bmatrix}0\\0\\0.5\end{bmatrix}, \;
			\mu_2= 0.5,
			\qquad
			\omega_3= \frac{1}{0.3}\begin{bmatrix}0.2\\0\\0.1\end{bmatrix}, \;
			\mu_3= 0.3.
		\end{equation*}	
		Consider the vectors $\beta_j = \mu_j \omega_j$ for $j \in \{1,2,3\}$.
		By Lemma~\ref{lem:Hth}, the unordered collection $\{ \beta_1, \beta_2, \beta_3 \} = \{(0.8, 1, 0.4)^\top, (0, 0, 0.5)^\top, (0.2, 0, 0.1)^\top\}$ permits to recover the values of the coefficients in the max-linear model
		\begin{align*}
			X_1=c_{11}Z\vee c_{21}c_{22}Y,\quad
			X_2=c_{22}Y,\quad
			X_3=c_{13}c_{11}Z\vee c_{13}c_{21}c_{22}Y\vee c_{33}T.
		\end{align*}
		with (known) edge set $E = \{(2, 1), (1, 3)\}$, and this due the presence of zeroes in the vectors. For the current example, argue as follows. The angular measure $H_{\theta}$ of $(X_1,X_2, X_3)$ has three atoms: atom $a_Z=b_Z/m_Z$ with $b_{Z}=(c_{11}, 0, c_{13}c_{11})^\top$, atom $a_Y=b_Y/m_Y$ with $b_{Y}=(c_{21}c_{22},\, c_{22},\, c_{13}c_{21}c_{22})^\top$, and atom $a_T=b_T/m_T$ with $b_{T}=(0,0,c_{33})^\top$.
As unordered sets, $\{\beta_1, \beta_2, \beta_3\}$ and $\{b_Z, b_Y, b_T\}$ are equal, but the question is which vector $\beta_j$ corresponds to which vector $b_{*}$.
From an inspection of the zero entries of the vectors, it is easily seen that the only possibility to identify the three coefficient vectors $\beta_1, \beta_2, \beta_3$ with the vectors $b_Z, b_Y, b_T$ of the angular measure $H_\theta$ is 
		\begin{align*}
			\beta_1=\begin{bmatrix}0.8\\1\\0.4\end{bmatrix} &= 
			\begin{bmatrix}c_{21}c_{22}\\ c_{22}\\ c_{13}c_{21}c_{22}\end{bmatrix} = b_Y, &
			\beta_2=\begin{bmatrix}0\\0\\0.5\end{bmatrix} &=
			\begin{bmatrix} 0\\0\\c_{33}\end{bmatrix} = b_T, &
			\beta_3=\begin{bmatrix}0.2\\0\\0.1\end{bmatrix} &=
			\begin{bmatrix} c_{11}\\ 0\\ c_{13}c_{11}\end{bmatrix} = b_Z.
		\end{align*}
		Solving the equations yields $(c_{11}, c_{21}, c_{22}, c_{13}, c_{33})=(0.2, 0.8, 1, 0.5, 0.5)$.
	\end{example}
	
	
	\subsection{Identifiability issues with the angular measure of a subvector}
	
	When we deal with latent variables, we know the distribution of the observable variables only, $X_U=(X_v, v\in U)$ for non-empty $U \subset V$. 
	The angular measure, say  $H_{\theta,U}$, of $X_U$ in \eqref{eq:HthU} is discrete and takes the form
	\begin{equation} \label{eqn:Hmuomega}
		H_{\theta, U} = \sum_{r=1}^s \mu_r \delta_{\omega_r},
	\end{equation}
	with masses $\mu_r > 0$ and $s$ distinct atoms $\omega_r \in \Delta_U$. Combining \eqref{eq:HthU} and \eqref{eqn:Hmuomega}, we should have 
	\begin{equation} \label{eqn:muom_ma}
		\sum_{r=1}^s \mu_r \delta_{\omega_r}
		=
		\sum_{i \in V} m_{i,U} \delta_{a_{i,U}},
	\end{equation}
	which means that, as sets, we should have $\{ \omega_1, \ldots, \omega_s \}
	=
	\{ a_{i,U} : i \in V \}$. In contrast to the situation in Lemma~\ref{lem:Hth}, the subvectors $a_{i,U}$ for $i \in V$ are not necessarily all different. Any atom $\omega_r$ of $H_{\theta, U}$ is of the form $a_{i,U} = (b_{vi}/m_{i,U})_{v \in U}$ for one or possibly several indices $i \in V$. For $r=1,\ldots, s$ and $i \in V$ such that $\omega_r = a_{i, U}$, we know from \eqref{eq:equivs} that
	\begin{equation}
		\label{eq:omegaiU}
		\{ v \in U : \omega_{r,v} > 0 \} = \Desc(i) \cap U.
	\end{equation} 
	The (unordered) collection of vectors $\{(b_{vi})_{v\in U} : i \in V\}$ will be denoted by $\cB_{\theta,U}$.
	
	With unobservable variables, there are several issues with the angular measure and its expression on the right hand-side of \eqref{eqn:muom_ma}.
	\begin{itemize}
		\item \emph{Zero masses.} We have $m_{i,U}=\sum_{v\in U}b_{vi}$, so that if all components of $(b_{vi})_{v\in U}$ are zero, then $m_{i,U}=0$. This happens when $\Desc(i)\cap U=\varnothing$. In this case, we have $s<|V|$, i.e., $H_{\theta,U}$ has less atoms than $H_\theta$.
		\item \emph{Equal atoms.} We may have $a_{i,U}=a_{j,U}$ for some indices $i,j\in V$ and $i\neq j$. In this case, the terms $i$ and $j$ in \eqref{eq:HthU} are to be aggregated and again, $H_{\theta,U}$ has less than $|V|$ atoms, $s<|V|$. This happens when the vectors $(b_{vi}, v\in U)$ and $(b_{vj}, v\in U)$ are proportional for some distinct $i,j\in V$.
		\item \emph{Zeroes on the same positions}.  A more subtle problem occurs when for two distinct vectors $b,b'\in \cB_{\theta,U}$, the supports $\{v\in U:b_v>0\}$ and $\{v\in U:b'_v>0\}$ are equal. Such a situation arises when two distinct nodes $i,j\in V$ satisfy $\Desc(i)\cap U=\Desc(j)\cap U$. The latter equality is only possible in the presence of latent variables and is to be contrasted with property~\eqref{eq:Descdistinct} when all variables are observable.  
	\end{itemize}

	\subsection{Identifiability criterion}

	For a max-linear model with respect to a ttt $\T=(V,E)$ with unique source, we need conditions that ensure that the minimal representation of the angular measure of $X_U$ is the one in \eqref{eq:HthU}. Consider the following two conditions for the set of nodes $\bar{U} = V \setminus U$ carrying latent variables:
	\begin{enumerate}[label=(I\arabic*)]
		\item \label{ident1} any $u \in \bar{U}$ has at least two children;
		\item \label{ident2} any $u \in \bar{U}$ is the source of some tournament in $\T$.
	\end{enumerate}
	\begin{prop} \label{prop:ident_block}
		Let $X$ follow a max-linear model as in Assumption~\ref{ass:max-mod} with respect to a ttt $\T=(V,E)$ with unique source. 
		For a non-empty node set $U \subset V$, the parameter $\theta\in \mathring{\Theta}_*$ is uniquely identifiable from the distribution of $(X_v, v \in U)$ if and only if conditions~\ref{ident1} and~\ref{ident2} are satisfied.
	\end{prop}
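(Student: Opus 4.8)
The plan is to carry out the whole argument at the level of the angular measure. Since $H_{\theta,U}$ is recoverable from the law of $X_U$ through the limit relation recalled in Section~\ref{sec:def}, and conversely a max-stable law is determined by its angular measure, identifiability of $\theta$ from the distribution of $(X_v,v\in U)$ is equivalent to injectivity of the map $\theta\mapsto H_{\theta,U}$ on $\mathring{\Theta}_*$, for the fixed ttt with unique source. Writing $H_{\theta,U}=\sum_{i\in V}m_{i,U}\delta_{a_{i,U}}$ as in \eqref{eq:HthU}, I would keep in mind that the support of $a_{i,U}$ equals $\Desc(i)\cap U$ by \eqref{eq:omegaiU}, and that the three ways the minimal representation of $H_{\theta,U}$ can degenerate, listed after \eqref{eq:omegaiU} (an atom of zero mass, two coinciding atoms, two atoms with identical support), are precisely what can obstruct reading off $\theta$ node by node as in Lemma~\ref{lem:Hth}.

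For sufficiency I would assume \ref{ident1} and \ref{ident2} and argue in three steps. First, \ref{ident1} forces every sink to be observed, since a latent sink has zero children; hence every node has an observed descendant and no factor produces a zero-mass atom. Second, I would show that all $|V|$ atoms $a_{i,U}$ are distinct, and that each can be matched to its node, using Lemma~\ref{lem:oned-1} together with criticality. If $\Desc(i)$ and $\Desc(j)$ are disjoint, their intersections with $U$ are disjoint and nonempty, so the supports differ; \ref{ident1} is what guarantees, by the branching of children into observed sinks, that a nested pair with identical support cannot arise from proportional vectors. When two atoms do share a support (a node that is an ancestor of the other within a common tournament), the source of that tournament has, by \ref{ident2}, a direct edge to a lower node, and in $\Theta_*$ this edge is critical, so the shortest path bypasses the intermediate node and the two direction vectors are non-proportional. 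Third, I would recover $\theta$ by induction along the topological order from the unique source: the root atom is the one of full support $U$, its outgoing products follow from coordinate ratios, and when the induction reaches a latent node $u$, \ref{ident2} makes $u$ a separator (or the root), so that the influence of its already-processed parent on $u$'s observed descendants is channelled through $u$ and exposes the incoming weight inside coefficients already identified; the outgoing weights inside the tournament sourced at $u$ are then read off from the children's atoms together with the unique-parent identity \eqref{eqn:unique_parent}, with \ref{ident1} supplying the observed descendants needed at each step.

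For necessity I would prove the contrapositive, exhibiting in each case a one-parameter family inside $\mathring{\Theta}_*$ with the same $H_{\theta,U}$. If some latent $u$ has at most one child, then either $u$ is a latent sink, whose incoming weights never enter an observed coefficient, or $u$ has a single child $w$, in which case every path out of $u$ begins with the edge $(u,w)$, so $a_{u,U}$ and $a_{w,U}$ are proportional and merge; in both situations only a product of the form $c_{uu}c_{uw}$ is visible, and I can redistribute weight between the incoming and outgoing edges of $u$ while holding that product fixed. If some latent $u$ is the source of no tournament, then in the tournament it shares with a common-tournament ancestor that ancestor has direct critical edges that bypass $u$, so the only quantities involving $u$ that enter the observed coefficients are the products $b_{vu}=c_{uu}c_{p(u,v)}$ for $v\in U$; rescaling the incoming edge of $u$ against its outgoing edges keeps all these products, hence all observed coefficients and $H_{\theta,U}$, unchanged, which yields the desired nontrivial family.

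The main obstacle will be the sufficiency direction, and within it the two intertwined points of the second and third steps: proving that the atoms are not merely distinct but correctly assignable to nodes when several share a common support (a totally ordered chain inside one tournament), and then disentangling the incoming from the outgoing edge weights at a latent node. Both rest on the separator structure of a ttt with unique source (Lemma~\ref{lem:oned-1} and Lemma~\ref{lem:addition_to_oned-1}), on the criticality of shortest paths in $\Theta_*$, and on \eqref{eqn:unique_parent}, and they are most cleanly organized through the topological induction above. The necessity direction is comparatively routine once the degenerate families associated with a one-child latent node and with a buried latent node have been identified.
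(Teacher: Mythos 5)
Your sufficiency plan is essentially the paper's: reduce to the angular measure \eqref{eq:HthU}, show all $|V|$ atoms have positive mass and are pairwise distinct, disambiguate atoms sharing a support via the criticality inequality $c_{iu}>c_{ij}c_{ju}$, then peel off edge weights along the unique latent-to-observed paths using \eqref{eqn:unique_parent}. Be aware that the step you pass over quickly --- that $\Desc(i)\cap U=\Desc(j)\cap U$ for $i\neq j$ forces $i$ to be the \emph{unique} parent of $j$, forces a common child $u$ of $i$ and $j$, and excludes a third node with the same trace --- is where most of the work sits (Lemma~\ref{prop:properties2-1}); without it the ``totally ordered chain inside one tournament'' picture is an unproved assumption. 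On necessity you diverge from the paper: for a latent node that is the source of no tournament the paper argues topologically (injectivity of $\theta\mapsto b$ plus Invariance of Domain, Lemma~\ref{lem:vtau-max:2}, then an induction over tournaments via the global Markov property of Proposition~\ref{prop:markov_ml}), whereas you propose an explicit family. That can be made to work, because the observed coefficients depend on the edges at such a $u$ only through the products $c_{uu}c_{ui}$, $i\in\ch(u)$, and there is one more free weight than there are such products --- but ``rescaling incoming against outgoing'' must mean: perturb the incoming weights, recompute $c_{uu}'$ from \eqref{eq:cvv}, and multiply every outgoing weight by $c_{uu}/c_{uu}'$, since $c_{uu}$ is affine, not linear, in the incoming weights.

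The genuine gap is in the other necessity case, a latent $u$ with single child $w$ that \emph{is} the source of its (two-node) tournament. You correctly note that $a_{u,U}$ and $a_{w,U}$ are proportional and merge, but then conclude that ``only a product of the form $c_{uu}c_{uw}$ is visible'' and propose to vary the weights holding that product fixed. This is the wrong invariant, and the proposed family does not preserve the law of $X_U$. What is visible are the coefficients $b_{vj}=c_{jj}\,c_{p(j,u)}\,c_{uw}\,c_{p(w,v)}$ for $j\in\an(u)$ and $v\in\Desc(w)\cap U$, i.e.\ the products $c_{j'u}c_{uw}$ for $j'\in\pa(u)$; the quantity $c_{uu}c_{uw}=b_{wu}$ is precisely what is \emph{not} visible, because after the atoms merge only the sum $b_{vu}+b_{vw}=\bigl(1-\sum_{j\in\an(u)}b_{wj}\bigr)c_{p(w,v)}$ enters the distribution. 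Any perturbation of the incoming weights that holds $c_{uu}c_{uw}$ fixed necessarily changes some $c_{j'u}c_{uw}$ (one checks that preserving both forces the perturbation to be trivial, since $c_{uu}=1-\sum_{j}c_{jj}c_{p(j,u)}$), hence changes an observed $b_{wj}$ and the distribution of $X_U$. In the sub-case where $u$ is the global source, your prescription degenerates entirely: there are no incoming weights, $c_{uu}=1$, and ``holding $c_{uu}c_{uw}$ fixed'' fixes $c_{uw}$, producing no counterexample at all, whereas $c_{uw}$ is in fact completely free. The correct family is the paper's: $c_{ju}\mapsto\lambda c_{ju}$ for $j\in\pa(u)$ and $c_{uw}\mapsto\lambda^{-1}c_{uw}$, which preserves every $c_{j'u}c_{uw}$ and lets $b_{wu}$ and $b_{ww}$ vary with their sum constant --- i.e.\ your merging observation is the key, but it must be pushed to the conclusion that the mass is reallocated between the two merged factors rather than held fixed on each.
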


	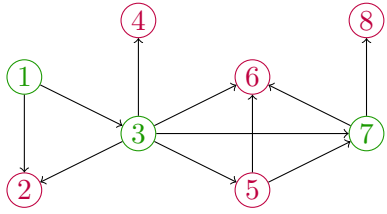
\begin{figure} 
		\centering 
		\begin{tikzpicture}
			\node[hollow node, color=mygreen] (1) at (0,0) {1};
			\node[hollow node, color=purple] (2) at (0,-1.5)  {2};
			\node[hollow node, color=mygreen] (3) at (1.5,-0.75)  {3};
			\node[hollow node, color=purple] (4) at (1.5,0.75)  {4};
			\node[hollow node, color=mygreen] (7) at (4.5,-0.75)  {7};
			\node[hollow node, purple] (5) at (3,-1.5)  {5};
			\node[color=purple, hollow node] (6) at (3,0)  {6};
			\node[hollow node, purple] (8) at (4.5,0.75)  {8};
			\path[->] (1) edge (2)  ;
			\path[->] (1) edge (3);
			\path[<-] (2) edge (3)  ;
			\path[->] (3) edge (5);
			\path[<-] (7) edge (3);
			\path[->] (3) edge (6);
			\path[->] (5) edge (6);
			\path[->] (5) edge (7);
			\path[<-] (4) edge (3);
			\path[->] (7) edge (8);
			\path[->] (7) edge (6);
		\end{tikzpicture}
		\caption{In the following ttt, the nodes that are allowed to contain a latent variable while the edge parameters remain identifiable are $1,3,7$. These are the only nodes where each of them satisfies both \ref{ident1} and \ref{ident2}. For instance, if node 2 has unobserved variables, the parameters attached to edges $(1,2),(3,2)$ are not identifiable. This is because the edge weights $c_{12},c_{32}$ take part only in products over paths ending at $2$. But if $2\in\bar{U}$ these coefficients disappear from the atoms of the angular measure $a_{i,U}=(b_{vi}/m_{i,U}, v\in U)$ and accordingly from the collection of vectors $\cB_{\theta,U}$.
} 
		\label{fig:ident_ml}
	\end{figure}
	
	Figure~\ref{fig:ident_ml} illustrates the identifiability criterion. 

	\section{Discussion}

	In this paper we have considered a Bayesian max-linear network over a special type of graph which we called a tree of transitive tournaments (ttt). It is a graph which collects in an acyclic manner transitive tournaments which are themselves complete DAGs. The max-linear model is defined on a particular parameter space which ensures that the impact from one variable to another takes place along the shortest path, a consideration that has been defined in the literature as the path's criticality. It turns out that a ttt with unique source leads to a graph without v-structures, that is, no node has non-adjacent parents. The limit of the scaled random vector, conditional on the event that a high threshold is exceeded at a particular node, is shown to be factorizable in independent multiplicative increments if and only if the ttt has a unique source. This result is analogous to that for Markov trees in \citet{segers2020one} and for Markov random fields on undirected block graphs in \citet{asenova2021extremes}. The property that the Bayesian max-linear model on a ttt with unique source shares with these two other models is that it satisfies the global Markov property with respect to the undirected counterpart or skeleton graph of the ttt.
	
		In addition, we have provided a simple necessary and sufficient criterion guaranteeing the identifiability of the edge coefficients in case some variables are latent. As suggested by a Reviewer, it may be possible to extend the criterion to partial identifiability of some edge weights in case the criterion is fulfilled only locally.
	
	Upon appropriate modifications, we expect the results presented in this paper to hold equally for the linear additive causal model introduced in \citet{gnecco2020causal}. One of the reasons is that the max-domain of attraction of a linear model with heavy-tailed factors is the same as that of a max-linear one \citep{einmahl2012an}. However, the relation between the edge weights $\theta = (c_e)_{e \in E}$ and the coefficient matrix $B_\theta = (b_{ij})_{i,j \in V}$ is different between the max-linear and additive linear versions, and this may ask for different approaches in showing the same properties for the additive version. 
	
	\appendix
	
	
	\section{Trees of transitive tournaments} \label{app:ttt}
	
	Recall that in a directed acyclic graph, a v-structure refers to a node with parents that are not adjacent, see Figure~\ref{fig:ttt}. 
	

	\subsection{Proof of Lemma~\ref{lem:oned-1}}
	\begin{proof}
		1. Let $a,b\in V$. If $a$ and $b$ share the same tournament, they must be connected by an arrow, which is then the unique shortest path between them, since all other possible paths have length larger than one.
		
		Let $a,b$ be nonadjacent. 
		If there is a unique directed path between $a$ and $b$ then this is the unique shortest path. Suppose now there are two shortest paths: $p_1, p_2\in \pi(a,b)$. Let the path $p_1$ be along the vertices $\{v_1=a, v_2, \ldots, v_n=b\}$ and the path $p_2$ on along the vertices $\{u_1=a, u_2, \ldots, u_n=b\}$. 
		
		We will proceed by contradiction. Assume $v_2\neq u_2$. If $v_2$ and $u_2$ belong to two different tournaments, then there exists a non-directed cycle through nodes in different tournaments, namely $\{a,v_2, \ldots, b,\ldots, u_2,a\}$. But this is impossible by property~\ref{ttt:prop2} of a ttt. 
		 Hence, $v_2$ and $u_2$ must belong to the same tournament, say $\tau_a$, because $a$ is part of the same tournament too. Now consider $u_3$ and $v_3$. Then either $u_3=v_3$ or they share a tournament, say $\tau_3$, because otherwise there exists a non-directed cycle through nodes in different tournaments. Since $(v_2, v_3)\in E$ and $(u_2, u_3)\in E$ and by the assumption that $v_2\neq u_2$, all four nodes $\{a, v_2, u_2, v_3=u_3\}$ or all five nodes $\{a, v_2, u_2, v_3, u_3\}$ belong to $\tau_a$. This is because by property~\ref{ttt:prop1}, two tournaments can share only one node, hence it is impossible to have $\tau_3\cap \tau_a=\{v_2, u_2\}$. Because all four or five nodes belong to the same tournaments and since $(a,v_2), (v_2,v_3), (a,u_2), (u_2, u_3)\in E$ we must have $(a,v_3)\in E$ and $(a,u_3)\in E$ to avoid inter-tournament undirected cycles. Hence the paths $\{a=v_1, v_3, \ldots, v_n=b\}$ and $\{u_1=a, u_3, \ldots, u_n=b\}$ are shorter then $p_1$ and $p_2$, a contradiction. Hence we must have $v_2 = u_2$. 
		
		We apply the same strategy to the nodes $v_3, u_3$ and $v_4, u_4$ to find that $v_3 = u_3$. Proceeding recursively, we conclude that $p_1 = p_2$. \smallskip
		
		2. First we show that if the ttt has a unique source, there cannot be a v-structure. We proceed by contraposition. Assume that there is a node, $v$, with parents in two different tournaments $\tau_a$ and $\tau_b$. Let $a$ and $b$ be the sources of $\tau_a$ and $\tau_b$ respectively \citep[Corollary~5a]{harari1966the}. Note that we definitely have $v\neq a$ and $v\neq b$. 
		From node $v$ go to node $a$. If $a$ doesn't have a parent from another tournament we have found one node with zero in-degree with respect to the whole graph. If $a$ has parent(s) from another tournament, say $\tau_{a}'$, then go to the node that within $\tau_a'$ has in-degree zero, say node $a'$. Keep on going until you find a node with in-degree zero within the whole graph---such a node must exist because the graph is finite. Repeat the same for $\tau_b$, yielding two different nodes having zero in-degree with respect to whole graph. These nodes must be different because of the definition of $\T$: since we have started in two different tournaments $\tau_a$ and $\tau_b$ we cannot end up in the same node, or otherwise there would be a non-directed cycle passing through $v$ and that node. Hence we have found two nodes with zero in-degree, hence $\T$ does not have a unique source node. 
		
		Next we show that if $\T$ has two or more source nodes, $u$ and $v$, then there is a v-structure. Because $u$ and $v$ are sources they have in-degree zero, so that they cannot belong to the same tournament, and thus they belong to two different tournaments. Consider the unique shortest trail between $u,v$ on a sequence of nodes $\{u=v_1, v_2, \ldots, v_n=v\}$. Such a trail exists as, by definition of a ttt, the skeleton of $\T$ is a block graph and the fact that in a block graph there is a unique shortest path between every two nodes \citep[Theorem~1]{behtoei2010a}. For every two consecutive nodes in the shortest path, $v_i,v_{i+1}$,  we have either $(v_i,v_{i+1})\in E$ or $(v_{i+1}, v_i)\in E$. Because $u$ and $v$ are sources of $\T$, we have $(u,v_2)\in E$ and $(v,v_{n-1})\in E$. Note that $n \ge 3$, since $u$ and $v$ cannot be adjacent. We need three nodes $v_i, v_{i+1}, v_{i+2}$ such that $(v_i,v_{i+1})\in E$ and $(v_{i+2}, v_{i+1})\in E$. If $n = 3$, then the triple $(u,v_2,v)$ already fulfils the requirement. If $n\geq 4$, then continue from $v_2$ as follows. 
		Let $i = \max \{ j = 1,\ldots,n-2 :(v_{j}, v_{j+1}) \in E\}$; then $(v_i,v_{i+1}) \in E$ and $(v_{i+2},v_{i+1}) \in E$, as required. Because this is the shortest trail, $v_i$ and $v_{i+2}$ cannot belong to the same tournament, since otherwise there would exist a shorter trail passing only through $v_i$ and $v_{i+2}$. \smallskip
		
		
		3. Suppose that $v \in \Desc(i) \cap \Desc(j)$ but also both $i \not\in \an(j)$ and $j \not\in \an(i)$; in particular, $i$ and $j$ do not belong to the same tournament.
		Consider the paths $p(i,v)$ and $p(j,v)$. Along each path, continue walking upwards considering successive parents. 
		Since the graph is finite, this walk must end for both paths to a node without parents.
		By assumption, this must be the same unique source node of the ttt, say $u_0$.
		We will thus have found two different paths from $u_0$ to $v$, one passing via $i$ and the other one via $j$. However, as $i$ and $j$ do not belong to the same tournament, this is in contradiction to property~\ref{ttt:prop2} of a ttt.
		%
		%
	\end{proof}

	\subsection{Proof of Lemma~\ref{lem:addition_to_oned-1}}
	\begin{proof}
		
		
		1. 
		Suppose that there is a node $v_r$, for $r \in \{2, \ldots, n-1\}$, which is not the source node in the tournament shared with $v_{r+1}$, say $\tau$. Let $\bar{v}$ be a parent of $v_r$ in $\tau$. Note that $\bar{v}$ must be a parent of $v_{r+1}$ too, because of the out-degree ordering in a tournament. Because $v_{r-1}$ is a parent of $v_r$ too, both $v_{r-1}$ and $\bar{v}$ must belong to $\tau$, since otherwise $v_r$ would have parents from different tournaments, which is impossible according to Lemma~\ref{lem:oned-1}-2. Hence $v_{r-1}, v_r, \bar{v}, v_{r+1}$ all belong to the same tournament, i.e., to $\tau$. Necessarily $v_{r-1}$ is a parent of $v_{r+1}$, because otherwise there would be a directed cycle $\{v_{r-1},v_r, v_{r+1}, v_{r-1}\}$. But then $\{v_1, \ldots, v_{r-1}, v_{r+1}, \ldots, v_n\}$ is a shorter path between $v_1$ and $v_n$, in contradiction to the hypothesis. \smallskip

		
		2. Let the shortest trail between $u$ and $v$ be the one along the node sequence $\{v_1=u, \ldots, v_n=v\}$. It is sufficient to show that there cannot exist a node $v_{r}$ for $r \in \{2, \ldots, n-1\}$ such that $(v_{r-1}, v_r)\in E$ and $(v_{r+1},v_r)\in E$. Suppose indeed that the converse were true, i.e., there exists $r \in \{2, \ldots, n-1\}$ such that both $v_{r-1}$ and $v_{r+1}$ are parents of $v_r$. Then $v_{r-1}$ and $v_{r+1}$ must be adjacent because v-structures are excluded by statement~2 of Lemma~\ref{lem:oned-1}. But then $\{v_1, \ldots, v_{r-1}, v_{r+1}, \ldots, v_n\}$ is a shorter trail between $u$ and $v$, yielding a contradiction.  
	\end{proof}

	\section{Proofs and additional results for Section~\ref{sec:ctc}}

	\begin{proof}
		From \citet[Example~1]{segers2020one} we have the limit
		\[
		\sum_{j\in V} b_{uj}
		\delta_{\left\{\frac{b_{vj}}{b_{uj}}, v\in V_{\tau}\right\}}.
		\]
		Adapting this representation to a model where we have $b_{uj}=0$ for $j\notin \An(u)$ and $b_{ij}=c_{p(j,i)}b_{jj}$ for $j\in \An(i)$  we obtain
		\[
		\sum_{j\in \An(u)} b_{uj}
		\delta_{\left\{\frac{c_{p(j,v)}b_{jj}}{c_{p(j,u)}b_{jj}}, v\in V_{\tau}\right\}}.
		\]
		Recall that $c_{p(i,i)}=1$ and $c_{p(i,j)}=0$ if $i\notin \An(j)$.
		
		Next we show that $(M_{uv}, v\in V_{\tau})$ are mutually dependent. When $u$ is the source of $\tau$ then for every $j\in \An(u)$ the atom
		\[
		\left(\frac{c_{p(j,v)}}{c_{p(j,u)}}, v\in V_{\tau}\right)
		=\left(\frac{c_{p(j,u)}c_{uv}}{c_{p(j,u)}}, v\in V_{\tau}\right)
		=(1; c_{uv}, v\in V_{\tau}\setminus u)
		\]
		gets probability $\sum_{j\in \An(u)}b_{uj}=1$. Hence $(M_{uv}, v\in V_{\tau})$ are at the same time perfectly dependent and independent. 
		
		For $u$ which is not the source node the general idea is to take a collection of coordinates with joint probability zero, 
		and positive product of the marginal probabilities, thus showing that the joint probability does not equal the product of marginal probabilities for selected possible value of the random vector.
		
		Let for brevity $V_{\tau}=\{1,2,\ldots,m \}$: the nodes are labelled according to their order of out-degrees within $\tau$: the source node of $\tau$ has $m-1$ (largest) out-degree and is labelled by $1$, the node with out-degree $m-2$ is labelled as $2$, etc.   
		
		Consider $u$ being the node $2$. We have, thanks to the no-cycle property within a tournament $\An(2)=\An(1)\cup \{2\}$. 
		For all $j\in \An(1)$ we have 
		\begin{equation} \label{eqn:jvju}
			\left(\frac{c_{p(j,v)}}{c_{p(j,2)}}, v=1, \ldots, m \right)
			=\left(
			\frac{1}{c_{12}}; 1; \frac{c_{p(j,1)}c_{1v}}{c_{p(j,1)}c_{12}},
			v = 3, \ldots, m
			\right),
		\end{equation}
		which is an atom of $(M_{2v},v=1,\ldots, m)$ with mass $\sum_{j\in \An(1)}b_{2j}$. This means that for the marginal distribution of $M_{21}$ we have $\P(M_{21}=1/c_{12})\geq\sum_{j\in \An(1)}b_{2j}$.
		For $j=2$ we have an atom $(0, 1, c_{23}, \ldots, c_{2m})$ with mass $b_{22}$. This means that for the marginal probabilities of $(M_{23}, \ldots, M_{2m})$ we have $\P(M_{2v}=c_{2v})\geq b_{22}$ for all $v=3, \ldots, m$. Take a vector of coordinates $(1/c_{12}, 1,  c_{23}, \ldots, c_{2m})$. Note that this vector cannot be the same as the one in \eqref{eqn:jvju}. For any $v=3, \ldots, m$ we cannot have $c_{1v}/c_{12}=c_{2v}$ because of the criticality assumption, according to which $c_{1v} > c_{12}c_{2v}$ for any $v = 3,\ldots,m$. The joint probability of this vector of coordinates is 
		\[
		\P(M_{21}=1/c_{12}, M_{22}=1, 
		M_{23}=c_{23},\ldots , M_{2m}=c_{2m})
		=0.
		\]	 
		However the product of marginal probabilities is positive:
		\[
		\P(M_{21}=1/c_{12})
		\P( M_{22}=1)
		\prod_{v=3}^m\P( M_{2v}=c_{2v})
		\geq \sum_{j\in \An(1)}b_{2j}\times b_{22}^{m-1} 
		>0.
		\]
		
		Now let $u\geq 3$. Take the vector of coordinates in \eqref{eqn:LXLM} corresponding to $j=1$ which is equal to $(1/c_{1u},c_{12}/c_{1u}, \ldots, c_{1m}/c_{1u} )$ and has probability at least $b_{u1}$. Consider also the vector of coordinates for $j=u$ which is $(0, \ldots, 0, 1; c_{uv}, v=u+1, \ldots, m)$ with mass at least $b_{uu}$. Replace the first coordinate by $1/c_{1u}$. The vector obtained in this way has joint probability zero. For every $j\in \pa(u)$ we have $b_{vj}/b_{uj}=0$ when $v$ is not child of $j$ or equivalently, given the order in the node labelling, when $v<j$. So for fixed $u\geq3$, for $j=1$ the vector $(b_{vj}/b_{uj}, v=1, \ldots, m)$ has no zeros. For $j=2$ the vector $(b_{vj}/b_{uj}, v=1, \ldots, m)$ has one zero, namely $(0;b_{vj}/b_{uj}, v=2, \ldots, m)$, for $j=3$ the vector $(b_{vj}/b_{uj}, v=1, \ldots, m)$ has two zeros, namely  $(0,0;b_{vj}/b_{uj}, v=3, \ldots, m)$ and so on until $j=u$ with the corresponding vector $(b_{vj}/b_{uj}, v=1, \ldots, m)=(0,\ldots, 0;b_{vj}/b_{uj}, v=u, \ldots, m)$. By replacing the first coordinate by a non-zero value in this vector we get an impossible value for the random vector $(M_{uv}, v=1, \ldots, m)$ or a value with probability zero. 
		Considering the univariate marginal distributions of $(M_{uv}, v=1, \ldots, m)$ we obtain for the product of marginal probabilities a positive value:
		\[
		\P(M_{u1}=1/c_{1u})
		\left[\prod_{v=2}^{u-1}\P( M_{uv}=0)\right]
		\P(M_{uu}=1)
		\prod_{v=u+1}^m\P( M_{uv}=c_{uv})
		\geq b_{u1}\times b_{uu}^{m-1} 
		>0
		\]
		This shows that for any $u\in V_{\tau}$ the vector $(M_{u1}, \ldots, M_{um})$ has jointly dependent elements.
		
		
		\paragraph{} Next we show the distribution of a single element $M_{uv}, v\in V_{\tau}\setminus u$.
		
		1. Consider first when $u$ is the source node in $\tau$. Since $(u, v) \in E$, we have $\An(u) \subset \An(v)$ and thus $\An(v) \cap \An(u) = \An(u)$. We have $b_{vj}>0, j\in \An(u)$, hence zero is not a possible value of $M_{uv}$. For $j\in \An(u)$
		\[
		\frac{b_{vj}}{b_{uj}}=\frac{c_{p(j,u)}c_{uv}b_{jj}}{c_{p(j,u)}b_{jj}}=c_{uv},
		\] 
		and since $\sum_{j\in \An(u)}b_{uj}=1$ we obtain the desired result under~1.(a).
		
		When $u$ is not the source node in $\tau$ not all shortest paths to $v$ pass through $u$ hence for $j\in\An(u)$ we have 
		\[
		\frac{b_{vj}}{b_{uj}}=\frac{c_{p(j,v)}b_{jj}}{c_{p(j,u)}b_{jj}}
		=\frac{c_{p(j,v)}}{c_{p(j,u)}} >0
		\] 
		with mass $b_{uj}$. Hence the result in~1.(b). Note that zero is not possible value as we still have $\An(u)\subset \An(v)$. Also $c_{p(u,u)}=1$ by convention.
		
		2. Let us have now $(v,u)\in E$. In this case $\An(u)\setminus\An(v)$ is not empty because it contains at least the node $u$, so zero is a possible value of $M_{uv}$. We need to distinguish only the zero atoms from the non-zero ones. 
		When $v$ is a source node in $\tau$, we have, for $j\in \An(v)$ 
		\[
		\frac{b_{vj}}{b_{uj}}
		=\frac{c_{p(j,v)}b_{jj}}{c_{p(j,u)}b_{jj}}
		=\frac{c_{p(j,v)}}{c_{p(j,v)}c_{vu}}
		=\frac{1}{c_{vu}} >0,
		\] 
		which is an atom with probability 
		\[
		\sum_{j\in \An(v)}b_{uj}
		=\sum_{j\in \An(v)}c_{p(j,v)}c_{vu}b_{jj}
		=c_{vu}\sum_{j \in \An(v)} c_{p(j,v)} b_{jj}
		=c_{vu}\sum_{j \in \An(v)} b_{vj} = c_{vu}.
		\] 
		The probability of the zero atom is $\sum_{j\in \An(u)\setminus \An(v)}b_{uj}=\sum_{j\in \An(u)}b_{uj}-\sum_{j\in \An(v)}b_{uj}=1-c_{vu}$. This shows~2.(a).
		
		When $v$ is not a source node of $\tau$ we have for $j\in \An(v)$
		\[
		\frac{b_{vj}}{b_{uj}}
		=\frac{c_{p(j,v)}b_{jj}}{c_{p(j,u)}b_{jj}}
		=\frac{c_{p(j,v)}}{c_{p(j,u)}}>0,
		\] 
		an atom with mass $b_{uj}$ and zero atom with probability $\sum_{j\in \An(u)\setminus \An(v)}b_{uj}$. This shows~2.(b).
	\end{proof}
	
	\paragraph{Remark.} From the results in Lemma~\ref{lem:Muv-rev1} we see that a multiplicative increment does not have a degenerate distribution at zero, so that a product of several such multiplicative increments cannot be degenerate at zero either. This is an important observation that we will use in further proofs.

	\begin{lemma} \label{lem:Auv_rev1}
		Let $(X_v, v\in V)$ follow a max-linear model as in Assumption~\ref{ass:max-mod}.
		Let $\T$ have a unique source. For any $u\in V$ we have 
		\begin{align} \label{eqn:LXLA}
			\mathcal{L}\left(\frac{X_v}{X_u}, v\in V \mathrel{\Big|} X_u>t\right)
			\inlaw
			\mathcal{L}(A_{uv}, v\in V)
			=\sum_{j\in \An(u)}b_{uj}
			\delta_{\left\{\frac{c_{p(j,v)}}{c_{p(j,u)}}, v\in V\right\}}.
		\end{align}
		The distribution of $A_{uv}$ depends on the three types of possible trails according to Lemma~\ref{lem:addition_to_oned-1}-2. In what follows we assume $(u,v)\notin E$. For the case $(u,v)\in E$ see Lemma~\ref{lem:Muv-rev1}.
		\begin{compactenum}
			\item Distribution of $A_{uv}$ on a path $\{u=v_1, r=v_2, \ldots, v=v_n\}$ with $u,r\in \tau$, one of the tournaments of $\T$.
			\begin{compactenum}
				\item If $u$ is a source node in $\tau$ then $\mathcal{L}(A_{uv})=\delta_{\{c_{p(u,v)}\}}$.
				
				\item If $u$ is not a source node in $\tau$ we have 
				\[
				\mathcal{L}(A_{uv})
				=\sum_{j\in \An(u)} b_{uj}\delta_{\left\{\frac{c_{p(j,r)}}{c_{p(j,u)}}c_{p(r,v)}\right\}}.
				\]			
			\end{compactenum}
			
			\item  Distribution of $A_{uv}$ on a path $\{v=v_1, r=v_2, \ldots, u=v_n\}$ with $v,r\in \tau$.
			\begin{compactenum}
				\item If $v$ is a source node in $\tau$ then
				\[
				\mathcal{L}(A_{uv})
				=c_{p(v,u)}\delta_{\left\{\frac{1}{c_{p(v,u)}}\right\}}
				+(1-c_{p(v,u)})\delta_{\{0\}}.
				\]
				\item If $v$ is not a source node in $\tau$ then
				\[
				\mathcal{L}(A_{uv})
				=\sum_{j\in \An(v)} c_{p(r,u)}b_{rj}\delta_{\left\{
					\frac{c_{p(j,v)}}{c_{p(j,r)}c_{p(r,u)}}\right\}}
				+\sum_{j\in \An(u)\setminus \An(v)}b_{uj}\delta_{\{0\}}.
				\]	
			\end{compactenum}
			\item The distribution of $A_{uv}$ on a trail composed of two paths $p(r,u)$ and $p(r,v)$. Let the trail be on nodes $\{u, \ldots, m,r, n, \ldots, v\}$. Let also $\tau_m, \tau_n$ be two tournaments with $r,m\in \tau_m$ and $r,n\in \tau_n$.
			\begin{compactenum}
				\item If $r$ is source in both $\tau_m$ and $\tau_n$, then
				\[
				\mathcal{L}(A_{uv})=c_{p(r,u)}
				\delta_{\left\{\frac{c_{p(r,v)}}{c_{p(r,u)}}\right\}}
				+(1-c_{p(r,u)})\delta_{\{0\}}.
				\]
				
				\item If $r$ is source in $\tau_m$, but not in $\tau_n$, then
				\[
				\mathcal{L}(A_{uv})=\sum_{j\in\An(r)}c_{p(r,u)}b_{rj}
				\delta_{\left\{\frac{c_{p(j,n)}c_{p(n,v)}}{c_{p(j,r)}c_{p(r,u)}}\right\}}
				+\sum_{j\in \An(u)\setminus \An(r)}b_{uj}\delta_{\{0\}}.
				\]
				
				\item If $r$ is source in $\tau_n$, but not in $\tau_m$, then
				\[
				\mathcal{L}(A_{uv})=\sum_{j\in\An(r)}c_{p(m,u)}b_{mj}
				\delta_{\left\{\frac{c_{p(j,r)}c_{p(r,v)}}{c_{p(j,m)}c_{p(m,u)}}\right\}}
				+\sum_{j\in \An(u)\setminus \An(r)}b_{uj}\delta_{\{0\}}.
				\]
			\end{compactenum}
		\end{compactenum}
	\end{lemma}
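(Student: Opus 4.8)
The plan is to obtain the joint limit law \eqref{eqn:LXLA} first and then read off each marginal $A_{uv}$ by projecting the atoms onto the $v$-coordinate. For the joint law I would repeat verbatim the opening computation in the proof of Lemma~\ref{lem:Muv-rev1}, but applied to the whole index set $V$ rather than a single tournament: \citet[Example~1]{segers2020one} gives the weak limit $\sum_{j\in V} b_{uj}\,\delta_{\{b_{vj}/b_{uj},\,v\in V\}}$, and substituting $b_{uj}=0$ for $j\notin\An(u)$ together with $b_{ij}=c_{p(j,i)}b_{jj}$ for $j\in\An(i)$ collapses the atoms to $(c_{p(j,v)}/c_{p(j,u)})_{v\in V}$ with mass $b_{uj}$, $j\in\An(u)$. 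Since the $v$-th coordinate of the $j$-th atom is $c_{p(j,v)}/c_{p(j,u)}$, the marginal of $A_{uv}$ is $\sum_{j\in\An(u)} b_{uj}\,\delta_{\{c_{p(j,v)}/c_{p(j,u)}\}}$, so the whole task reduces to simplifying this ratio and aggregating atoms that share a common value.

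The single tool needed for the simplification is the factorization of a shortest-path product across a separator node: whenever a node $w$ lies on the unique shortest path $p(j,x)$, one has $c_{p(j,x)}=c_{p(j,w)}\,c_{p(w,x)}$, and correspondingly $b_{xj}=c_{p(w,x)}b_{wj}$. I would invoke this across the three trail types of Lemma~\ref{lem:addition_to_oned-1}-2. In case~1 (a directed path $u\rightsquigarrow v$) the relevant separator is the child $r$ of $u$ inside the first tournament $\tau$; if $u$ is itself the source of $\tau$ then $u$ lies on every $p(j,v)$, $j\in\An(u)$, so $c_{p(j,v)}/c_{p(j,u)}=c_{p(u,v)}$ is constant and, because $\sum_{j\in\An(u)}b_{uj}=1$ by \eqref{eq:sumibvi}, the law degenerates to $\delta_{\{c_{p(u,v)}\}}$; otherwise one pulls out $c_{p(r,v)}$, leaving $(c_{p(j,r)}/c_{p(j,u)})c_{p(r,v)}$. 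In case~2 (a directed path $v\rightsquigarrow u$) I would split $\An(u)=\An(v)\sqcup(\An(u)\setminus\An(v))$; the complement contributes a point mass at $0$ since $c_{p(j,v)}=0$ when $j\notin\An(v)$, while on $\An(v)$ factorization through $v$ (when $v$ is the source of $\tau$) gives the constant $1/c_{p(v,u)}$ with mass $\sum_{j\in\An(v)}b_{uj}=c_{p(v,u)}\sum_{j\in\An(v)}b_{vj}=c_{p(v,u)}$, and otherwise factorization through the separator $r$ yields the formula in item~2.(b) after rewriting $b_{uj}=c_{p(r,u)}b_{rj}$. In case~3 (two directed paths issuing from a common node $r$) I would split $\An(u)=\An(r)\sqcup(\An(u)\setminus\An(r))$, the complement again producing the zero atom, and on $\An(r)$ factor $c_{p(j,u)}$ through $r$ or through $m$ according to whether $r$ is the source of $\tau_m$, and likewise $c_{p(j,v)}$ through $r$ or through $n$ according to whether $r$ is the source of $\tau_n$, yielding the three displayed subcases; the mass rewritings $\sum_{j\in\An(r)}b_{uj}=c_{p(r,u)}$ in~3.(a) and $b_{uj}=c_{p(r,u)}b_{rj}$ (resp.\ $c_{p(m,u)}b_{mj}$) in~3.(b,c) are the same manipulation as in case~2.

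The main obstacle is justifying that the chosen node really is a separator, i.e.\ that it lies on the relevant unique shortest path so that the product factorizes. This rests on the block structure of the ttt: uniqueness of shortest paths (Lemma~\ref{lem:oned-1}-1), the fact that intermediate nodes of a shortest path are source nodes of the tournament they share with the next node (Lemma~\ref{lem:addition_to_oned-1}-1), and the nesting of descendant sets (Lemma~\ref{lem:oned-1}-3). The last of these is what guarantees that an ancestor $j$ of $u$ is an ancestor of $v$ precisely when $j$ is an ancestor of the meeting node $r$, which pins down exactly which $j$ contribute the zero coordinate in cases~2 and~3. A second point specific to case~3, which I would flag explicitly, is that the three listed subcases are exhaustive: $\tau_m\neq\tau_n$ (otherwise $m$ and $n$ would be adjacent and the trail could be shortened by bypassing $r$), and $r$ must be the source of at least one of $\tau_m,\tau_n$, for if it were the source of neither it would have parents in two distinct tournaments, i.e.\ a v-structure, which Lemma~\ref{lem:oned-1}-2 forbids in a ttt with unique source.
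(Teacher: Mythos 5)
Your proposal follows essentially the same route as the paper's own proof: the joint limit from \citet[Example~1]{segers2020one} specialized via $b_{ij}=c_{p(j,i)}b_{jj}$, the marginal read off coordinatewise, the case split along the three trail types of Lemma~\ref{lem:addition_to_oned-1}-2, factorization of path products through the separator node justified by property~\ref{ttt:prop2}, and the same mass computations such as $\sum_{j\in\An(v)}b_{uj}=c_{p(v,u)}$. The exhaustiveness argument for case~3 (that $r$ must be the source of at least one of $\tau_m,\tau_n$ to avoid a v-structure) is also exactly the observation the paper makes, so the proposal is correct and matches the paper's proof.
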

	
	\begin{proof}
		We have already seen that from \citet[Example~1]{segers2020one} we have the limit
		\[
		\mathcal{L}\left(\frac{X_v}{X_u}, v\in V\mid X_u>t\right)
		\inlaw
		\sum_{j\in V} b_{uj}
		\delta_{\left\{\frac{b_{vj}}{b_{uj}}, v\in V\right\}}.
		\]
		Adapting this representation to a model where we have $b_{uj}=0$ for $j\notin \An(u)$ and $b_{ij}=c_{p(j,i)}b_{jj}$ for $j\in \An(i)$  we obtain
		\[
		\sum_{j\in \An(u)} b_{uj}
		\delta_{\left\{\frac{c_{p(j,v)}}{c_{p(j,u)}}, v\in V\right\}}.
		\]
		Recall that $c_{p(i,i)}=1$ and $c_{p(i,j)}=0$ if $i\notin \An(j)$. For a single $v\in V\setminus u$ we have the marginal distribution
		\begin{equation} \label{eqn:onlyAuv}
			\mathcal{L}(A_{uv})=\sum_{j\in\An(u)}b_{uj}
			\delta_{\left\{\frac{b_{vj}}{b_{uj}}\right\}}.
		\end{equation}
		
		
		The distribution of $A_{uv}$ depends deterministically on properties of the ttt. When $\T$ has a unique source, according to Lemma~\ref{lem:addition_to_oned-1}-2 there are three possible shortest trails between two nodes. In addition we have also the property under Lemma~\ref{lem:addition_to_oned-1}-1. We look at the different distributions of $A_{uv}$ that arise due to these two properties of the ttt.  
		
		First we deal with~1.(a).
		Since $\An(u)\subset \An(v)$ all atoms in \eqref{eqn:onlyAuv} are positive and zero is not a possible value of $A_{uv}$. All paths from $\An(u)$ to $v$ pass through $u$ because $u$ is source in $\tau$ and because by property~\ref{ttt:prop2} of a ttt no cycle involving several tournaments is allowed. The case is illustrated by the graph below.
		
		\begin{center} 
			\begin{tikzpicture}
				\node[] (u) at (0,0)  {$u$};
				\node[] (m) at (1.5,0)  {$r$};
				\node[] (n) at (3,0)  {$\cdots$};
				\node[] (v) at (4.5,0)  {$v$};
				\path[->] (u) edge (m)  ;
				\path[->] (m) edge (n)  ;
				\path[->] (n) edge (v)  ;
				\node[] (tau) at (-0.2,0.7)  {$\tau$};
				\begin{scope}[dashed]
					\draw[color=black] (0.75,0) circle (0.9cm);
				\end{scope}
			\end{tikzpicture}
		\end{center}	
		Hence for all $j\in \An(u)$ we have 
		\[
		\frac{b_{vj}}{b_{uj}}
		=\frac{c_{p(j,v)}b_{jj}}{c_{p(j,u)}b_{jj}}
		=\frac{c_{p(j,u)}c_{p(u,v)}}{c_{p(j,u)}}
		=c_{p(u,v)}>0,
		\]
		with mass $\sum_{j\in \An(u)}b_{uj}=1$. 
		
		Next we show~1.(b). Because $\An(u)\subset \An(v)$, zero is not possible value of $A_{uv}$. Not all shortest paths from $\An(u)$ to $v$ pass through $u$ because $u$ is not source in $\tau$.  However all paths from $\An(u)$ to $v$ pass through $r$, as shown in the picture. Paths from $\An(u)$ to $v$ other than these passing through $u$ or $r$ are impossible because of the property~\ref{ttt:prop2} of a ttt.  
		
		\begin{center} 
			\begin{tikzpicture}
				\node[] (u) at (0,0)  {$u$};
				\node[] (m) at (1.5,0)  {$r$};
				\node[] (n) at (3,0)  {$\cdots$};
				\node[] (v) at (4.5,0)  {$v$};
				\node[] (p) at (0.75,1)  {$\cdots$};
				\path[->] (u) edge (m)  ;
				\path[->] (m) edge (n)  ;
				\path[->] (n) edge (v)  ;
				\path[->] (p) edge (m)  ;
				\path[->] (p) edge (u)  ;
				\node[] (tau) at (-0.3,1)  {$\tau$};
				\begin{scope}[dashed]
					\draw[color=black] (0.75,0.4) circle (1cm);
				\end{scope}
			\end{tikzpicture}
		\end{center}
		We have for $j\in \An(u)$
		\[
		\frac{b_{vj}}{b_{uj}}
		=\frac{c_{p(j,v)}b_{jj}}{c_{p(j,u)}b_{jj}}
		=\frac{c_{p(j,r)}}{c_{p(j,u)}}c_{p(r,v)}>0,
		\]
		with mass $b_{uj}$, hence the expression in~1.(b).

		Next we show~2.(a). 
		When the directed path is from $v$ to $u$ the set $\An(u)\setminus \An(v)$ contains at least $u$ hence we have $b_{vj}=0$ for all $j\in \An(u)\setminus \An(v)$. This means that zero is a possible value of $A_{uv}$. All shortest paths from $j\in \An(v)$ to $u$ pass through $v$ as $v$ is source in $\tau$. Otherwise, there would be cycle encompassing multiple tournaments, which is not allowed under property~\ref{ttt:prop2} of a ttt.
		\begin{center} 
			\begin{tikzpicture}
				\node[] (u) at (0,0)  {$v$};
				\node[] (m) at (1.5,0)  {$r$};
				\node[] (n) at (3,0)  {$\cdots$};
				\node[] (v) at (4.5,0)  {$u$};
				\path[->] (u) edge (m)  ;
				\path[->] (m) edge (n)  ;
				\path[->] (n) edge (v)  ;
				\node[] (tau) at (-0.2,0.7)  {$\tau$};
				\begin{scope}[dashed]
					\draw[color=black] (0.75,0) circle (0.9cm);
				\end{scope}
			\end{tikzpicture}
		\end{center}		
		For $j\in \An(v)$ the non-zero atom is given by
		\[
		\frac{b_{vj}}{b_{uj}}
		=\frac{c_{p(j,v)}b_{jj}}{c_{p(j,u)}b_{jj}}
		=\frac{c_{p(j,v)}}{c_{p(j,v)}c_{p(v,u)}}=\frac{1}{c_{p(v,u)}}>0, 
		\qquad j\in\An(v),
		\]
		with mass 
		\[
		\sum_{j\in \An(v)}b_{uj}
		=\sum_{j\in \An(v)}c_{p(j,v)}c_{p(v,u)}b_{jj}
		=c_{p(v,u)}\sum_{j\in \An(v)}c_{p(j,v)}b_{jj}
		=c_{p(v,u)}\sum_{j\in \An(v)}b_{vj}
		=c_{p(v,u)}.
		\] 
		For the zero atom we have probability $\sum_{j\in\An(u)\setminus \An(v)}b_{uj}=\sum_{j\in \An(u)}b_{uj}-\sum_{j\in \An(v)}b_{uj}=1-c_{p(v,u)}$. This shows~2.(a). 
		
		To show~2.(b) we note that when $v$ is not a source node of $\tau$ not all shortest paths from $j\in \An(v)$ to $u$ pass through $v$. However all paths from $j\in \An(v)$ to $u$ pass through $r$, as it can be seen from the figure here.
		\begin{center} 
			\begin{tikzpicture}
				\node[] (u) at (0,0)  {$v$};
				\node[] (m) at (1.5,0)  {$r$};
				\node[] (n) at (3,0)  {$\cdots$};
				\node[] (v) at (4.5,0)  {$u$};
				\node[] (p) at (0.75,1)  {$\cdots$};
				\path[->] (u) edge (m)  ;
				\path[->] (m) edge (n)  ;
				\path[->] (n) edge (v)  ;
				\path[->] (p) edge (m)  ;
				\path[->] (p) edge (u)  ;
				\node[] (tau) at (-0.3,1)  {$\tau$};
				\begin{scope}[dashed]
					\draw[color=black] (0.75,0.4) circle (1cm);
				\end{scope}
			\end{tikzpicture}
		\end{center}	
		Hence for $j\in \An(v)$ we have
		\[
		\frac{b_{vj}}{b_{uj}}
		=\frac{c_{p(j,v)}b_{jj}}{c_{p(j,u)}b_{jj}}
		=\frac{c_{p(j,v)}}{c_{p(j,r)}c_{p(r,u)}}>0,
		\]
		which is an atom with mass $b_{uj}=c_{p(j,r)}c_{p(r,u)}b_{jj}=c_{p(r,u)}b_{rj}$. The zero atom comes from the fact that $b_{vj}=0$ for all $j\in \An(u)\setminus \An(v)$, and it has probability $\sum_{j\in\An(u)\setminus \An(v)}b_{uj}$. This shows the distribution under~2.(b).

		By Lemma~\ref{lem:addition_to_oned-1}-1 the node $r$, as part of the paths $p(r,u)$ is allowed not be a source node in $\tau_m$. Similarly considering the path $p(r,v)$. However when we combine $p(r,u)$ and $p(r,v)$ in one trail $t(u,v)$ the node $r$ should be a source in at least one of $\tau_m$ and $\tau_n$. If $r$ is not source of both $\tau_m$ and $\tau_n$ then there would be indeed a v-structure. However, Lemma~\ref{lem:oned-1}-2 excludes v-structures when $\T$ has a unique source, hence node $r$ should be source in at least one tournament, $\tau_m$ and/or $\tau_n$.
		
		To show~3.(a) we note that all paths from $j\in \An(r)$ to $u$ and to $v$ pass through $r$, as $r$ is source in both $\tau_n$ and $\tau_m$. The case is depicted in the following picture.
		
		\begin{center}
			\begin{tikzpicture}
				\node[] (r) at (0,0)  {$r$};
				\node[] (n) at (1.5,0)  {$n$};
				\node[] (nn) at (3,0)  {$\cdots$};
				\node[] (v) at (4.5,0)  {$v$};
				\node[] (m) at (-1.5,0)  {$m$};
				\node[] (nnn) at (-3,0)  {$\cdots$};
				\node[] (u) at (-4.5,0)  {$u$};
				\path[->] (r) edge (n)  ;
				\path[->] (n) edge (nn)  ;
				\path[->] (nn) edge (v)  ;
				\path[->] (r) edge (m)  ;
				\path[->] (m) edge (nnn)  ;
				\path[->] (nnn) edge (u)  ;
				\node[] (tau) at (1.8,0.9)  {$\tau_n$};
				\node[] (taum) at (-1.8,0.9)  {$\tau_m$};
				\begin{scope}[dashed]
					\draw[color=black] (-0.75,0) circle (1cm);
					\draw[color=black] (0.75,0) circle (1cm);
				\end{scope}
			\end{tikzpicture}
		\end{center}
		Also we have $b_{vj}=0$ for all $j\in \An(u)\setminus \An(r)$. For $j\in \An(r)$ we have
		\[
		\frac{b_{vj}}{b_{uj}}
		=\frac{c_{p(j,r)}c_{p(r,v)}b_{jj}}{c_{p(j,r)}c_{p(r,u)}b_{jj}}
		=\frac{c_{p(r,v)}}{c_{p(r,u)}}>0,
		\]
		with probability 
		\[
		\sum_{j\in \An(r)}b_{uj}
		=\sum_{j\in \An(r)}c_{p(j,r)}c_{p(r,u)}b_{jj}
		=c_{p(r,u)}\sum_{j\in \An(r)}c_{p(j,r)}b_{jj}
		=c_{p(r,u)}\sum_{j\in \An(r)}b_{rj}
		=c_{p(r,u)}.
		\]
		The probability of the zero atom is $\sum_{j\in \An(u)\setminus \An(r)}b_{uj}=\sum_{j\in \An(u)}b_{uj}-\sum_{j\in\An(r)}b_{uj}=1-c_{p(r,u)}$. 
		
		Next we show~3.(b). Because $r$ is not a source in $\tau_n$ not all paths from $\An(r)$ to $v$ pass through $r$, but they do all pass through $n$. Also all paths from $\An(r)$ to $u$ pass through $r$ because $r$ is source in $\tau_m$. 
		
		\begin{center}
			\begin{tikzpicture}
				\node[] (r) at (0,0)  {$r$};
				\node[] (n) at (1.5,0)  {$n$};
				\node[] (nn) at (3,0)  {$\cdots$};
				\node[] (v) at (4.5,0)  {$v$};
				\node[] (p) at (0.75,1)  {$\cdots$};
				\node[] (m) at (-1.5,0)  {$m$};
				\node[] (nnn) at (-3,0)  {$\cdots$};
				\node[] (u) at (-4.5,0)  {$u$};
				\path[->] (r) edge (n)  ;
				\path[->] (n) edge (nn)  ;
				\path[->] (nn) edge (v)  ;
				\path[->] (p) edge (r)  ;
				\path[->] (p) edge (n)  ;
				\path[->] (r) edge (m)  ;
				\path[->] (m) edge (nnn)  ;
				\path[->] (nnn) edge (u)  ;
				\node[] (tau) at (2,1)  {$\tau_n$};
				\begin{scope}[dashed]
					\draw[color=black] (0.75,0.4) circle (1cm);
				\end{scope}
			\end{tikzpicture}
		\end{center}
		Hence for $j\in \An(r)$
		\[
		\frac{b_{vj}}{b_{uj}}
		=\frac{c_{p(j,n)}c_{p(n,v)}b_{jj}}{c_{p(j,r)}c_{p(r,u)}b_{jj}}
		=\frac{c_{p(j,n)}c_{p(n,v)}}{c_{p(j,r)}c_{p(r,u)}}>0,
		\]
		which is an atom with mass $b_{uj}=c_{p(j,r)}c_{p(r,u)}b_{jj}=b_{rj}c_{p(r,u)}$. The zero atom has probability equal to $\sum_{j\in \An(u)\setminus \An(r)}b_{uj}$. 
		
		Next we show~3.(c). When $r$ is source in $\tau_n$ it means that all paths from $\An(r)$ to $v$ pass through $r$. Because $r$ is not source in $\tau_m$ not all paths from $\An(r)$ to $u$ pass through $r$, but they do all pass through $m$. 
		
		\begin{center}
			\begin{tikzpicture}
				\node[] (r) at (0,0)  {$r$};
				\node[] (n) at (1.5,0)  {$n$};
				\node[] (nn) at (3,0)  {$\cdots$};
				\node[] (v) at (4.5,0)  {$v$};
				\node[] (m) at (-1.5,0)  {$m$};
				\node[] (nnn) at (-3,0)  {$\cdots$};
				\node[] (u) at (-4.5,0)  {$u$};
				\node[] (pp) at (-0.75,1)  {$\cdots$};
				\path[->] (r) edge (n)  ;
				\path[->] (n) edge (nn)  ;
				\path[->] (nn) edge (v)  ;
				\path[->] (r) edge (m)  ;
				\path[->] (m) edge (nnn)  ;
				\path[->] (nnn) edge (u)  ;
				\path[->] (pp) edge (r)  ;
				\path[->] (pp) edge (m)  ;
				\node[] (tau) at (-2,1)  {$\tau_m$};
				\begin{scope}[dashed]
					\draw[color=black] (-0.75,0.4) circle (1cm);
				\end{scope}
			\end{tikzpicture}
		\end{center}
		For $j\in \An(r)$ we have
		\[
		\frac{b_{vj}}{b_{uj}}
		=\frac{c_{p(j,r)}c_{p(r,v)}b_{jj}}{c_{p(j,m)}c_{p(m,u)}b_{jj}}
		=\frac{c_{p(j,r)}c_{p(r,v)}}{c_{p(j,m)}c_{p(m,u)}}>0,
		\]
		which is an atom with mass $b_{uj}=c_{p(j,m)}c_{p(m,u)}b_{jj}=b_{mj}c_{p(m,u)}$. The zero atom comes from $b_{uj}=0$ for all $j\in\An(u)\setminus \An(r)$. It gets probability $\sum_{j\in \An(u)\setminus \An(r)}b_{uj}$. 
	\end{proof}
	
	\subsubsection*{Proof of Proposition~\ref{prop:factorize}}
	
	\begin{proof}
		First we prove that (i) implies (ii). Assume $\T$ has a unique source. We have to prove that for any $u\in V$ an element from the limiting vector in \eqref{eqn:LXvXu} is given by \eqref{eqn:Auvprod}. 
		
		In Lemma~\ref{lem:Auv_rev1} we have seen a number of cases for the distribution of $A_{uv}$ depending on deterministic properties of the trail between $u$ and $v$. Below we consider each of these cases again.
		\smallskip
		
		\underline{\emph{Case 1.}} Let the unique shortest trail between $u$ and $v$ be a path on node sequence $\{u=v_1, r=v_2, \ldots, v_n=v \}$. Let $\tau$ be the tournament containing $u,r$.
		\smallskip
		
		\emph{Case~1.(a).} -- Let $u$ be source in $\tau$. From Lemma~\ref{lem:Auv_rev1}-1.(a) we have $P(A_{uv}=c_{p(u,v)})=1$. Consider the variables $(M_e, e\in p(u,v))$ which are by construction independent between each other because they belong to different tournaments. 
		Note that in this case all nodes $v_1,\ldots, v_{n-1}$ are source nodes in the tournament containing that node and the next one in the sequence. This follows from Lemma~\ref{lem:addition_to_oned-1}-1. Then according to Lemma~\ref{lem:Muv-rev1}~1.(a) for every $M_e, e\in p(u,v)$ we have $\P(M_e=c_e)=1$ and hence
		\[
		\P\left(\prod_{e\in p(u,v)}M_e=c_{p(u,v)}\right) =
		\prod_{e\in p(u,v)}\P(M_{e}=c_e)=1,
		\]
		which shows $A_{uv}=\prod_{e\in p(u,v)}M_e$. 
		\smallskip

		\emph{Case~1.(b).} -- If $u$ is not the source in $\tau$, the distribution of $M_{ur}$ is as in Lemma~\ref{lem:Muv-rev1}-1.(b). As in the case~1.(a)  all nodes $r=v_2,v_3,\ldots, v_{n-1}$ are source nodes in the tournament containing that node and the next one in the sequence. The variables $M_e, e\in p(r,v)$ are degenerate at $c_e$.
		As the case~1.(a) above the variables $(M_e, e\in p(u,v))$ are by construction independent between each other because they are indexed by edges which belong to different tournaments. Then we have  
		\begin{equation} \label{eqn:case1b}
			\begin{split} 
				\mathcal{L}\left(\prod_{e\in p(u,v)}M_e\right)
				=\mathcal{L}\left(M_{ur}\prod_{e\in p(r,v)}M_e\right)
				&=\left(\sum_{j\in \An(u)} b_{uj}\delta_{\left\{\frac{c_{p(j,r)}}{c_{p(j,u)}}\right\}}\right)
				\otimes 
				\delta_{\{c_{p(r,v)}\}}
				\\&=\sum_{j\in \An(u)} b_{uj}\delta_{\left\{\frac{c_{p(j,r)}}{c_{p(j,u)}}c_{p(r,v)}\right\}}.
			\end{split}
		\end{equation}
		The sign $\otimes$ denotes multiplication between two discrete probability measures, say $\mu$ and $\nu$ of two independent variables, say $\xi_1, \xi_2$ respectively. For two possible values $a_1,a_2$ of $\xi_1, \xi_2$ respectively we have $\mu(\{a_1\})\nu(\{a_1\})$ as a measure of the event $\{\xi_1\xi_2=a_1a_2\}=\{\xi_1=a_1, \xi_2=a_2\}$. 
		The last one expression in~\eqref{eqn:case1b} is the distribution of $A_{uv}$ in Lemma~\ref{lem:Auv_rev1}-1.(b).
		\smallskip
		
		\underline{\emph{Case 2.}} Let the unique shortest trail between $u$ and $v$ be a path from $v$ to $u$ on the node sequence $\{v=v_1, r=v_2, \ldots, v_n=u \}$.
		Let $\tau$ be the tournament containing $v,r$.
		\smallskip
		
		\emph{Case~2.(a).} -- Let $v$ be source in $\tau$.  Consider the random variables $M_{v_{i+1}v_{i}}, i=1,\ldots, n-1$ whose distributions are as in Lemma~\ref{lem:Muv-rev1}-2.(a). Since this is the unique shortest trail from $v$ to $u$, all edges on it belong to different tournaments and the vector $(M_{v_{i+1}v_{i}}, i=1,\ldots, n-1)$ contains independent variables by definition. Then 
		\begin{equation} \label{eqn:case2a1}
			\P\Big(\prod_{i=1}^{n-1}M_{v_{i+1}v_{i}}=\frac{1}{c_{p(v,u)}}\Big)=
			\prod_{i=1}^{n-1}\P\Big(M_{v_{i+1}v_{i}}=\frac{1}{c_{v_{i}v_{i+1}}}\Big)
			=\prod_{i=1}^{n-1}c_{v_{i}v_{i+1}}=c_{p(v,u)}.
		\end{equation}
		For the zero atom we have
		\begin{equation} \label{eqn:case:2a2}
			\P\left(\prod_{i=1}^{n-1}M_{v_{i+1}v_{i}}=0\right)
			=1-\prod_{i=1}^{n-1}\P(M_{v_{i+1}v_{i}}>0)
			=1-\prod_{i=1}^{n-1}\P\left(M_{v_{i+1}v_{i}}
			=\frac{1}{c_{v_{i}v_{i+1}}}\right)
			=1-c_{p(v,u)}.
		\end{equation}
		The expressions in \eqref{eqn:case2a1} and \eqref{eqn:case:2a2} represent indeed the distribution of $A_{uv}$ in Lemma~\ref{lem:Auv_rev1}-2.(a).
		\smallskip
		
		\emph{Case~2.(b).} -- If $v$ is not the source in $\tau$, consider a random variable $M_{rv}$ with distribution as in Lemma~\ref{lem:Muv-rev1}-2.(b) and a random variable $A_{ur}$ constructed as in~2.(a) here above, i.e., as the product $\prod_{i=2}^{n-1}M_{v_{i+1}v_i}$. By construction $M_{rv}$ is independent from $A_{ur}$ with the same argument as above. We have
		\begin{align*}
			\mathcal{L}(A_{ur}M_{rv})
			&=\left(c_{p(r,u)}\delta_{\left\{\frac{1}{c_{p(r,u)}}\right\}}
			+(1-c_{p(r,u)})\delta_{\{0\}}\right)
			\\&\otimes
			\left(\sum_{j\in \An(v)} b_{rj}
			\delta_{\left\{\frac{c_{p(j,v)}}{c_{p(j,r)}}\right\}}
			+\sum_{j\in \An(r)\setminus \An(v)}b_{rj}\delta_{\{0\}}\right)
		\end{align*}
		which gives non-zero atoms $c_{p(j,v)}/(c_{p(j,r)}c_{p(r,u)}), j\in \An(v)$ with masses $b_{rj}c_{p(r,u)}, j\in \An(v)$. To show the probability of the zero atom, consider 
		\begin{align*}
			&\P(A_{ur}M_{rv}=0)
			=1-\P(A_{ur}>0) \cdot \P(M_{rv}>0)
			=1-c_{p(r,u)}\sum_{j\in \An(v)}b_{rj}
			\\&=\sum_{j\in \An(u)}b_{uj}-\sum_{j\in \An(v)}c_{p(j,r)}b_{jj}c_{p(r,u)}
			=\sum_{j\in \An(u)}b_{uj}-\sum_{j\in \An(v)}b_{uj}
			=\sum_{j\in \An(u)\setminus \An(v)}b_{uj},
		\end{align*}
		which is what we need to confirm $A_{uv}=A_{ur}M_{rv}$ where $A_{uv}$ is as in Lemma~\ref{lem:Auv_rev1}-2.(b).
		\smallskip

		\underline{\emph{Case 3.}} In the three cases that follow let the unique shortest trail from $u$ to $v$ be given by two paths $p(r,u)$ and $p(r,v)$. Let the trail be on nodes $\{u, \ldots, m,r, n, \ldots, v\}$. Let also $\tau_m, \tau_n$ be two tournaments with $r,m\in \tau_m$ and $r,n\in \tau_n$.
		\smallskip
		
		\emph{Case~3.(a).} -- Let $r$ be source in both $\tau_m$ and $\tau_n$.
		Consider random variables $A_{rv}$ as in Lemma~\ref{lem:Auv_rev1}-1.(a) and $A_{ur}$ as in Lemma~\ref{lem:Auv_rev1}-2.(a). Above we have shown in cases~1.(a) and~2.(a) that $A_{rv}$ and $A_{ur}$ are factorizable in independent multiplicative increments. By construction $A_{rv}$ and $A_{ur}$ are independent from each other, because the multiplicative increments are independent. We have
		\begin{align*}
			\P\Big(A_{ur}A_{rv}=\frac{c_{p(r,v)}}{c_{p(r,u)}}\Big)
			=\P\Big(A_{ur}=\frac{1}{c_{p(r,u)}}\Big)\P(A_{rv}=c_{p(r,v)})
			=c_{p(r,u)}.
		\end{align*}
		For the probability of the zero atom we have
		\[
		\P(A_{ur}A_{rv}=0)=P(A_{ur}=0)=(1-c_{p(r,u)}). 
		\]
		The two displays above represent indeed the distribution of $A_{uv}$ in Lemma~\ref{lem:Auv_rev1}-3.(a).
		\smallskip

		\emph{Case~3.(b).} -- Let $r$ be source in $\tau_m$, but not source in $\tau_n$. 
		Consider three random variables $A_{ur}, M_{rn}, A_{nv}$ with distributions as in Lemma~\ref{lem:Auv_rev1}-2.(a), Lemma~\ref{lem:Muv-rev1}-1.(b) and Lemma~\ref{lem:Auv_rev1}-1.(a) respectively. For $A_{ur}$ and $A_{nv}$ we have shown in cases~2.(a) and~1.(a) in this proof that they are factorizable in independent multiplicative increments. By construction $M_{rn}$ is independent from the increments in $A_{ur}$ and $A_{nv}$. Then 
		\begin{align*}
			\mathcal{L}(A_{ur}M_{rn}A_{nv})
			&=\left(c_{p(r,u)}\delta_{\left\{\frac{1}{c_{p(r,u)}}\right\}}
			+(1-c_{p(r,u)})\delta_{\{0\}}\right)
			\otimes 
			\left(\sum_{j\in \An(r)}b_{rj}
			\delta_{\left\{\frac{c_{p(j,n)}}{c_{p(j,r)}}\right\}}\right)
			\otimes 
			\delta_{\{c_{p(n,v)}\}}
			\\&=\sum_{j\in \An(r)}b_{rj}c_{p(r,u)}
			\delta_{\left\{\frac{c_{p(j,n)}c_{p(n,v)}}{c_{p(j,r)}c_{p(r,u)}}\right\}}
			+(1-c_{p(r,u)})\delta_{\{0\}}.
		\end{align*}
		Note that 
		\begin{align*}
			\sum_{j\in \An(u)\setminus \An(r)}b_{uj}
			=\sum_{j\in\An(u)}b_{uj}-\sum_{j\in \An(r)}b_{uj}
			=1-\sum_{j\in \An(r)} c_{p(j,r)}c_{p(r,u)}b_{jj}
			&=1-c_{p(r,u)}\sum_{j\in \An(r)} b_{rj}
			\\&=1-c_{p(r,u)}.
		\end{align*}
		This shows that the distribution of $A_{ur}M_{rn}A_{nv}$ is the one of $A_{uv}$ in Lemma~\ref{lem:Auv_rev1}-3.(b). 
		\smallskip

		\emph{Case~3.(c).} -- Let $r$ be source in $\tau_n$, but not in $\tau_m$. Consider variables $A_{um}, M_{mr}, A_{rv}$ with distributions as in Lemma~\ref{lem:Auv_rev1}-2.(a), Lemma~\ref{lem:Muv-rev1}-2.(b) and Lemma~\ref{lem:Auv_rev1}-1.(a) respectively. The variables $A_{um}$ and $A_{rv}$ have been shown to factorize in independent increments in cases~2.(a) and~1.(a) of this proof respectively, hence they are independent from each other too. By construction $M_{mr}$ is independent from $A_{um}$ and $A_{rv}$. Then we have
		\begin{align*}
			\mathcal{L}(A_{um}M_{mr}A_{rv})
			&=\left(c_{p(m,u)}\delta_{\left\{\frac{1}{c_{p(m,u)}}\right\}}
			+(1-c_{p(m,u)})\delta_{\{0\}}\right)
			\\&	\otimes 
			\left(\sum_{j\in \An(r)}b_{mj}
			\delta_{\left\{\frac{c_{p(j,r)}}{c_{p(j,m)}}\right\}}
			+\sum_{j\in \An(m)\setminus \An(r)}b_{mj}\delta_{\{0\}}\right)
			\otimes 
			\delta_{\{c_{p(r,v)}\}}.
		\end{align*}
		The non-zero atoms are $c_{p(j,r)}c_{p(r,v)}/(c_{p(j,m)}c_{p(m,u)})$ for $j\in \An(r)$ with masses $c_{p(m,u)}b_{mj}=c_{p(j,m)}c_{p(m,u)}b_{jj}=b_{uj}$ for $j\in \An(r)$. The probability of the zero atom is given by
		\begin{align*}
			\P(A_{um}M_{mr}A_{rv}=0)
			&=1-\P(A_{um}>0)\P(M_{mr}>0)
			=1-c_{p(m,u)}\sum_{j\in \An(r)}b_{mj}
			\\&=1-\sum_{j\in \An(r)} c_{p(j,m)}c_{p(m,u)}b_{jj}
			=\sum_{j\in\An(u)}b_{uj}-\sum_{j\in\An(r)}b_{uj}
			=\sum_{j\in \An(u)\setminus \An(r)}b_{uj}.
		\end{align*} 
		Hence the distribution of $A_{um}M_{mr}A_{rv}$ is the one of $A_{uv}$ in Lemma~\ref{lem:Auv_rev1}-3.(c). This completes the proof that the statement in (i) implies (ii).	
		
		The statement in (iii) holds trivially from (ii).
		
		Next we prove that (iii) implies (i) by contraposition: we assume that $\T$ has at least two sources and we will show that it is not possible to obtain the factorization in \eqref{eqn:Auvprod}. If $\T$ has at least two sources, then by Lemma~\ref{lem:oned-1}-2 there is at least one v-structure, say on nodes $1,2,3$ and involving edges $(1,3), (2,3)\in E$. Consider the nodes $1,2$. For every $u\in V$ we have two possibilities:
		\begin{compactenum}
			
			\item[(a)] the v-structure belongs to only one of the trails $t(u, 1)$ and $t(u, 2)$: w.l.o.g. $(1,3), (2,3)\in t(u,2)$ and $(1,3), (2,3)\notin t(u,1)$;
			\item[(b)] each trail $t(u, 1)$ and $t(u, 2)$ contains one edge of the v-structure: 
			w.l.o.g. $(1,3)\in t(u,1)$ and $(2,3)\in t(u,2)$.
		\end{compactenum}
		If $u \in \{1, 2\}$, then we are in case~1, while if $u = 3$, we are in case~2. If $u \not\in \{1,2,3\}$, then node~3 must belong to at least one of the two trails $t(u, 1)$ or $t(u, 2)$, because otherwise the skeleton graph would have a cycle connecting nodes $u,1,2,3$ and passing through more than one block. The latter is impossible according to property~\ref{ttt:prop2}. The two possibilities are illustrated in Figure~\ref{fig:twocases}.
		\smallskip
		
		\begin{figure}
			\begin{subfigure}[b]{1\textwidth}
				\centering
				\begin{tikzpicture}
					\node[] (3) at (0,-1)  {$3$};
					\node[] (1) at (-1,0)  {$1$};
					\node[] (2) at (1,0)  {$2$};
					\path[->] (1) edge (3)  ;
					\path[->] (2) edge (3)  ;
					\node[] (u) at (-9,0)  {$u$};
					\node[] (u1) at (-7.5,0)  {$v_{n-1}$};
					\node[] (u2) at (-6,0)  {};
					\node[] (u3) at (-5,0)  {};	
					\node[] (u4) at (-4,0)  {};	
					\node[] (u5) at (-2.5,0)  {$v_2$};	
					\path[->] (1) edge (u5)  ;
					\path[<-] (u5) edge (u4)  ;
					\path[->] (u4) edge (u3)  ;
					\path[->] (u3) edge (u2)  ;
					\path[<-] (u2) edge (u1)  ;
					\path[->] (u1) edge (u)  ;
				\end{tikzpicture}
				\caption{When the v-structure belongs to only one of the two trails $t(u,1)$ or $t(u,2)$.}
			\end{subfigure}
			\vspace{0.2cm}
			
			\begin{subfigure}[b]{1\textwidth}
				\centering
				\begin{tikzpicture}
					\node[] (3) at (0,-1)  {$3$};
					\node[] (1) at (-1,0)  {$1$};
					\node[] (2) at (1,0)  {$2$};
					\path[->] (1) edge (3)  ;
					\path[->] (2) edge (3)  ;
					\node[] (u) at (-8,-1)  {$u$};
					\node[] (u1) at (-6.5,-1)  {$v_{n-1}$};
					\node[] (u2) at (-5,-1)  {};
					\node[] (u3) at (-4,-1)  {};	
					\node[] (u4) at (-3,-1)  {};	
					\node[] (u5) at (-1.5,-1)  {$v_2$};	
					\path[->] (3) edge (u5)  ;
					\path[<-] (u5) edge (u4)  ;
					\path[->] (u4) edge (u3)  ;
					\path[->] (u3) edge (u2)  ;
					\path[<-] (u2) edge (u1)  ;
					\path[->] (u1) edge (u)  ;
				\end{tikzpicture}
				\caption{When each node of the v-structure belongs to one of the two trails $t(u,1)$ and $t(u,2)$.}
			\end{subfigure}
			\caption{The two possible configurations of the trails $t(u,1)$ and $t(u,2)$ when nodes $1,2,3$ form a v-structure.}
			\label{fig:twocases}
		\end{figure}
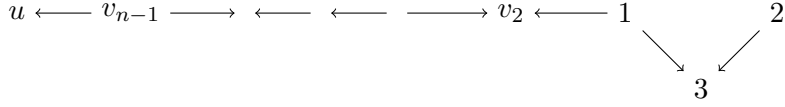
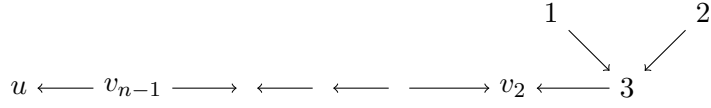
		
		\emph{Case 3.(c-i).}
		Consider first the case when, w.l.o.g., the v-structure belongs to $t(u,2)$ but not to $t(u,1)$, see Figure~\ref{fig:twocases}(a). Let the trail from $1$ to $u$ be on nodes $\{v_1=1, v_2, \ldots, v_n=u\}$. We can have any direction on the edges of $t(1,u)$.
		Recall the distribution of $A_{u2}$:
		\[
		\mathcal{L}(A_{u2})=\sum_{j\in\An(u)}b_{uj}\delta_{\{b_{2j}/b_{uj}\}}.
		\]	
		We have $b_{2j}=0$ for all $j\notin \An(2)$. 
		We claim that $\An(u)\cap \An(2)=\varnothing$. 
		According to property~\ref{ttt:prop2} of a ttt, $\T$ does not contain an undirected cycle involving several tournaments. This means that it is impossible to find a node from which there leave paths to $u$ and to $2$.  Also it is not possible to find a path passing through $3$ and going to $2$, because otherwise there would 
		be either an undirected cycle involving several tournaments, or a cycle within a tournament. Both are impossible for a ttt. 
		This leads to the conclusion that $A_{u2}$ is degenerate at zero. 
		Now we look at the variables $(M_{v_{i+1}v_i}, i=n-1, \ldots, 1; M_{13}, M_{32})$ which we take by construction to be independent as they belong to different tournaments. Each of them is one of the variables in Lemma~\ref{lem:Muv-rev1}, and none of these is degenerate at zero. Hence their product cannot be degenerate at zero too. 
		\smallskip
		
		\emph{Case 3.(c-ii).}
		Next we consider the second case, when w.l.o.g. $(1,3)\in t(u,1)$ and $(2,3)\in t(u,2)$, see Figure~\ref{fig:twocases}(b). 
		Let the trail from node $3$ to $u$ be on nodes $\{v_1=3, v_2, \ldots, v_n=u\}$. First we consider the case when we have at least one $i=1, \ldots, n-1$ for which $(v_{i+1},v_i)\in E$, i.e., we have at least one edge with direction from $u$ to $3$.
		Because $t(u,3)$ is a shortest trail, the edges incident to the nodes on the trail belong to different tournaments.  
		The distribution of $(A_{u1}, A_{u2})$ is given by	
		\[
		\mathcal{L}(A_{u1}, A_{u2})=
		\sum_{j\in\An(u)}b_{uj}
		\delta_{\left\{\frac{b_{1j}}{b_{uj}},\frac{b_{2j}}{b_{uj}}\right\}}, 
		\]
		where $b_{1j}=0$ and $b_{2j}=0$ if $j\notin \An(1)$ and $j\notin \An(2)$ respectively. When for some $i=1, \ldots, n-1$ we have $(v_{i+1}, v_i)\in E$ then necessarily $\An(1) \cap \An(u)=\varnothing$ and $\An(2) \cap \An(u)=\varnothing$. 
		There cannot be a path from $\An(1)$ or $\An(2)$ to any of the nodes $\{v_2,\ldots, v_n=u\}$, because otherwise there would be a cycle involving several tournaments in contradiction to the definition of a ttt. Because of the edge $(v_{i+1}, v_i)\in E$ all nodes in $\An(1)\cup\An(2)$ are not ancestors of $u$. And also because of the edges $(1,3), (2,3)\in E$ all nodes in $\An(u)$ cannot be ancestors of nodes $1$ or $2$. Thus when for some $i=1, \ldots, n-1$ there is a directed edge $(v_{i+1}, v_i)\in E$ we have $\mathcal{L}(A_{u1},A_{u2})=\delta_{\{0,0\}}$. We have found a node $v \in V$ such that $A_{uv} = 0$ almost surely, but then the factorisation \eqref{eqn:LXvXu}--\eqref{eqn:Auvprod} cannot hold, because these never degenerate at zero. 
		
		Now let the trail from node $3$ to $u$ be actually a path. Let also nodes $1$ and $2$ be sources with respect to the tournaments shared with node $3$, say $1,3\in V_{\tau_1}$ and $2,3\in V_{\tau_2}$. It is always possible to choose $1$ and $2$ in such a way they are the sources of $\tau_1$ and $\tau_2$. This is because node $3$ obviously is not a source in $\tau_1$ and $\tau_2$, so the sources of these must point to $3$. We can decompose $\An(u)$ into three disjoint sets, $\An(1), \An(2)$ and the rest, $\An(u)\setminus \{\An(1)\cup \An(2)\}$. For the distribution of $(A_{u1}, A_{u2})$ we have
		\begin{align*}
			\mathcal{L}(A_{u1}, A_{u2})
			&=\sum_{j\in \An(u)}b_{uj}
			\delta_{\left\{\frac{b_{1j}}{b_{uj}}, \frac{b_{2j}}{b_{uj}}\right\}}
			\\&=\sum_{j\in \An(1)}b_{uj}
			\delta_{\left\{\frac{b_{1j}}{b_{uj}}, \frac{b_{2j}}{b_{uj}}\right\}}
			+\sum_{j\in \An(2)}b_{uj}
			\delta_{\left\{\frac{b_{1j}}{b_{uj}}, \frac{b_{2j}}{b_{uj}}\right\}}
			+\sum_{j\in\An(u)\setminus \{\An(1)\cup \An(2)\}}b_{uj}
			\delta_{\left\{\frac{b_{1j}}{b_{uj}}, \frac{b_{2j}}{b_{uj}}\right\}}.
		\end{align*}
		For the atoms in the first summation we have 
		\[
		\frac{b_{1j}}{b_{uj}}=\frac{c_{p(j,1)}b_{jj}}{c_{p(j,u)}b_{jj}}
		=\frac{c_{p(j,1)}}{c_{p(j,1)}c_{13}c_{p(3,u)}}
		=\frac{1}{c_{13}c_{p(3,u)}}
		\]
		and $b_{2j}/b_{uj}=0$ as $b_{2j}=0$ for all $j\in \An(1)$. Hence we have an atom that does not depend on $j\in \An(1)$, i.e., $\big(1/(c_{13}c_{p(3,u)}),0\big)$ and its mass is 
		\[
		\sum_{j\in \An(1)}b_{uj}=\sum_{j\in \An(1)}c_{p(j,1)}c_{13}c_{p(3,u)}b_{jj}=c_{13}c_{p(3,u)}=c_{p(1,u)}.
		\] 
		In a similar way, from the second summation in the last display we have an atom $\big(0,1/(c_{23}c_{p(3,u)})\big)$ with mass $c_{23}c_{p(3,u)}=c_{p(2,u)}$. In the third summation term the atom is $(0,0)$ as $b_{1j}=b_{2j}=0$ for all $j\in\An(u)\setminus \{\An(1)\cup \An(2)\}$ and its mass is $1-c_{13}c_{p(3,u)}-c_{23}c_{p(3,u)}=1-c_{p(3,u)}(c_{13}+c_{23})$. 
		Consider now the multiplicative increments $(M_{31}; M_{32}, M_{v_{i+1}v_i} i=1, \ldots,n-1)$ which are mutually independent since they belong to different tournaments. Because node $1$ is a source node in the tournament $\tau_1$ the distribution of $M_{31}$ is $c_{13}\delta_{\{1/c_{13}\}}+(1-c_{13})\delta_{\{0\}}$ by Lemma~\ref{lem:Muv-rev1}-2.(a). Similarly for $M_{32}$. We have
		\begin{equation}  \label{eqn:M31M32}
			\begin{split} 
				&	\P\left(M_{31}\prod_{i=1}^{n-1}M_{v_{i+1}v_i}=0, M_{32}\prod_{i=1}^{n-1}M_{v_{i+1}v_i}=0\right)
				=1-\prod_{i=1}^{n-1}\P(M_{v_{i+1}v_i}>0)
				\\& +\P(M_{31}=0)\P(M_{32}=0)
				-\left(1-\prod_{i=1}^{n-1}\P(M_{v_{i+1}v_i}>0)\right)
				\P(M_{31}=0)\P(M_{32}=0).
			\end{split}
		\end{equation}
		After some rearranging of the expression above we obtain
		\begin{equation} \label{eqn:1-prod}
			1-\prod_{i=1}^{n-1}\P(M_{v_{i+1}v_i}>0)(c_{13}+c_{23}-c_{13}c_{23}).
		\end{equation}
		
		There are two further sub-cases: either all nodes $v_1,\ldots,v_{n-1}$ are source nodes with respect to the tournament involving the next node in the sequence, or not. In the first sub-case, namely when all nodes in $\{v_1=3, v_2, \ldots, v_{n-1}\}$ are source nodes with respect to the tournament involving the next node in the sequence, then $\P(M_{v_{i+1}v_i}>0)=c_{v_{i}v_{i+1}}$ for $i=1, \ldots, n-1$. This means that the probability in \eqref{eqn:1-prod} and accordingly in \eqref{eqn:M31M32} equals $1-c_{p(3,u)}(c_{13}+c_{23}-c_{13}c_{23})$, which is different than $\P(A_{u1}=0, A_{u2}=0)=1-c_{p(3,u)}(c_{13}+c_{23})$. 
		In the second sub-case, i.e., if at least one node from $\{v_1=3, v_2, \ldots, v_{n-1}\}$ is not source with respect to the tournament involving the next node in the sequence then the possible values for $M_{31}\prod_{i=1}^{n-1}M_{v_{i+1}v_i}$ are not only $\{0,1/c_{p(1,u)}\}$, 
		which are the only possible values of $A_{u1}$ as we showed in the previous paragraph.
		Let $i \in \{1,\ldots,n-1\}$ be such that node $v_i$ is not the source node in the tournament shared with $v_{i+1}$, say $\tau_i$. This is depicted in the following graph:
		
		\begin{center} 
			\begin{tikzpicture}
				\node[] (3) at (0,-1)  {$3$};
				\node[] (1) at (-1,0)  {$1$};
				\node[] (2) at (1,0)  {$2$};
				\path[->] (1) edge (3)  ;
				\path[->] (2) edge (3)  ;
				\node[] (u) at (-7,-1)  {$u$};
				\node[] (u1) at (-6,-1)  {};
				\node[] (u2) at (-5,-1)  {};
				\node[] (u3) at (-4,-1)  {$v_{i+1}$};	
				\node[] (u4) at (-2.5,-1)  {$v_{i}$};	
				\node[] (u5) at (-1,-1)  {};
				\node[] (s) at (-3.25,0)  {$s$};	
				\path[->] (3) edge (u5)  ;
				\path[->] (u5) edge (u4)  ;
				\path[->] (u4) edge (u3)  ;
				\path[->] (u3) edge (u2)  ;
				\path[->] (u2) edge (u1)  ;
				\path[->] (u1) edge (u)  ;
				\path[->] (s) edge (u3)  ;
				\path[->] (s) edge (u4)  ;
				\node[] (tau) at (-4.5,-0.1)  {$\tau_i$};
				\begin{scope}[dashed]
					\draw[color=black] (-3.25,-0.75) circle (1.1cm);
				\end{scope}
			\end{tikzpicture}
		\end{center}
		Recall the distribution of $M_{v_{i+1}v_i}$ from Lemma~\ref{lem:Muv-rev1}-2.(b):
		\[
		\mathcal{L}(M_{v_{i+1}v_i})
		=\sum_{j\in \An(v_{i})}b_{v_{i+1}j}\delta_{\{b_{v_ij}/b_{v_{i+1}j}\}}
		+\sum_{j\in \An(v_{i+1})\setminus \An(v_i)}\delta_{\{0\}}.
		\]   
		Take for instance a node, say $s$, a parent of $v_i$ and accordingly in $\An(v_i)$. 
		Then 
		\[
		\frac{b_{v_is}}{b_{v_{i+1}s}}
		=\frac{c_{sv_i}}{c_{sv_{i+1}}}
		\]
		is a possible value of $M_{v_{i+1}v_i}$ with positive probability, namely at least $b_{v_{i+1}s}$. Another possible positive value is for $j=v_i\in \An(v_i)$, namely 
		\[
		\frac{b_{v_iv_i}}{b_{v_{i+1}v_i}}
		=\frac{1}{c_{v_iv_{i+1}}}
		\]
		with probability at least $b_{v_{i+1}v_i}$. The criticality assumption on edge weights guarantees $\frac{c_{sv_i}}{c_{sv_{i+1}}}\neq 1/c_{v_iv_{v+1}}$. This means that the product $M_{31}\prod_{i=1}^{n-1}M_{v_{i+1}v_i}$ has at least two different positive values - one involving $\frac{c_{sv_i}}{c_{sv_{i+1}}}$ and another $1/c_{v_iv_{i+1}}$. However $A_{u1}$ has only one possible positive value. 
%
	\end{proof}

	\subsubsection*{Proof of Proposition~\ref{prop:markov_ml}}
	
	\begin{proof} 
		\emph{Sufficiency.} Assume $\T$ has a unique source.
		We need to show that, for any disjoint and nonempty sets $A, B, S$, we have $X_A\indep X_B\mid X_S$, whenever $S$ is a separator of $A$ and $B$ in the skeleton $T$ of $\T$. We would like to use Theorem~5.15 in 
		\citet*{amendola2022conditional}, by which we need to show $A\perp_*B\mid S$ in $\mathcal{D}_S^*$, that is, there are no $*$-connecting paths between any pair of nodes in $A$ and $B$ in the \emph{conditional reachability} DAG $\mathcal{D}_S^*$. We will explain these notions further.
		
		Let $A, B,S\subset V$ be nonempty disjoint node sets, such that $S$ is a separator of $A$ and $B$ in the skeleton $T$. Consider Figure~\ref{fig:star-connect}. According to Definition~5.4 of \citet{amendola2022conditional}, a $*$-connecting path between $a \in A$ and $b \in B$ is one of the five configurations therein. Our goal is to show that for $a\in A$ and $b\in B$ it is impossible to find a $*$-connecting path in a certain graph $\mathcal{D}_S^*$, which is not $\T$, neither $T$, but constructed under particular rules given in \citet[Definition~5.1]{amendola2022conditional}.  
		
		\begin{figure}[h] 
			\centering
			\begin{tikzpicture} 
				\node[] (i) at (0,0) {$a$};
				\node[] (j) at (0,-1.5) {$b$};
				\path[->] (i) edge (j) ;
			\end{tikzpicture}
			\hspace{0.5cm}
			\begin{tikzpicture} 
				\node[] (i') at (0,0) {$a'$};
				\node[] (j) at (-0.75,-1.5) {$b$};
				\node[] (i) at (+0.75,-1.5) {$a$};
				\path[->] (i') edge (j) ;
				\path[->] (i') edge (i) ;
			\end{tikzpicture}
			\hspace{0.5cm}
			\begin{tikzpicture} 
				\node[] (i) at (0,0) {$a$};
				\node[] (j) at (1.5,0) {$b$};
				\node[color=red] (k) at (+0.75,-1.5) {$s$};
				\path[->] (i) edge (k) ;
				\path[->] (j) edge (k) ;
			\end{tikzpicture}
			\hspace{0.5cm}
			\begin{tikzpicture} 
				\node[] (i') at (0,0) {$a'$};
				\node[] (j) at (1.5,0) {$b$};
				\node[color=red] (k) at (+0.75,-1.5) {$s$};
				\node[] (i) at (-0.75,-1.5) {$a$};
				\path[->] (i') edge (k) ;
				\path[->] (j) edge (k) ;
				\path[->] (i') edge (i) ;
			\end{tikzpicture}
			\hspace{0.5cm}
			\begin{tikzpicture} 
				\node[] (i') at (0,0) {$a'$};
				\node[] (j') at (1.5,0) {$b'$};
				\node[color=red] (k) at (+0.75,-1.5) {$s$};
				\node[] (i) at (-0.75,-1.5) {$a$};
				\node[] (j) at (2.25,-1.5) {$b$};
				\path[->] (i') edge (k) ;
				\path[->] (j') edge (k) ;
				\path[->] (i') edge (i) ;
				\path[->] (j') edge (j) ;
			\end{tikzpicture}
			\caption{According to Definition~5.4 of \cite{amendola2022conditional}, a $*$-connected path between $a$ and $b$ relative to $S$ is one of the five configurations above. In the last three graphs we have $s\in S$.}
			\label{fig:star-connect}
		\end{figure}
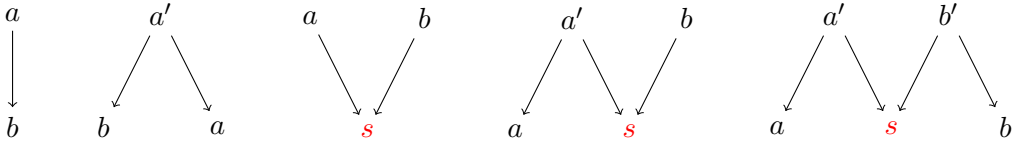
		
		According to this definition, the conditional reachability graph $\mathcal{D}_S^*$ is on the same vertex set, $V$. 
		Between two nodes $i$ and $j$ in $V$ there is an edge $(i, j)$ in $\mathcal{D}_S^*$ if and only if there is a directed path from $i$ to $j$ in $\T$ such that no node on that path belongs to $S$, except possibly for $i$ and $j$ themselves.
		
		Consider Figure~\ref{fig:star-connect}. We need to show that in the conditional reachability graph $\mathcal{D}_S^*$, there is no $*$-connecting path between a node $a \in A$ and a node $b \in B$.
		
		To obtain the first configuration in $\mathcal{D}_S^*$, there must be, in the skeleton $T$, nodes $a\in A$ and $b\in B$ such that no node on the path from $a$ to $b$ passes through $S$. But this is impossible, because we assumed that $S$ is a separator of $A$ and $B$ in $T$. Similarly for the second configuration.
		
		For the other three configurations in Figure~\ref{fig:star-connect} consider Figure~\ref{fig:star_connect}. 
		
		In Figure~\ref{fig:star_connect}, the left-hand and right-hand trails in the original graph $\T$ are the only possible one that give rise to the middle path in Figure~\ref{fig:star-connect} with respect to the graph $\mathcal{D}_S^*$: both the left-hand and right-hand graphs in Figure~\ref{fig:star_connect} show existing trails between $a$ and $b$ in $\T$, trails composed of a directed path from $a$ to $s$, and a directed path from $b$ to $s$. The only node on these trails which belongs to $S$ is $s$. Hence in $\mathcal{D}_S^*$ we put a directed edge from $a$ to $s$ and from $b$ to $s$. This gives the third $*$-connecting path in Figure~\ref{fig:star-connect}. But this configuration cannot occur, for the following reason.
		On the left-hand trail in Figure~\ref{fig:star_connect}, the separator node $s$ has parents $u_r$ and $v_q$ in different tournaments. This leads to a v-structure between the nodes $u_r,s,v_q$, in contradiction to Lemma~\ref{lem:oned-1}-2 and the hypothesis that $\T$ has a unique source.
		
		On the right-hand trail in Figure~\ref{fig:star_connect}, the node $s$ shares a tournament with its parents $u_r$ and $v_q$, but only $s$ belongs to $S$; on the trail $\{a=u_1, u_2, \ldots, u_r, v_q, \ldots, v_2,v_1=b\}$ none of the nodes are in $S$. In $T$, this means that there is a path between $A$ and $B$ that does not pass through $S$. This is in contradiction to the assumption that $S$ separates $A$ and $B$ in $T$.
		

		To show that the fourth type of $*$-connecting path in Figure~\ref{fig:star-connect} cannot occur, we can use the reasoning used for the third one by setting $a=a'$ in Figure~\ref{fig:star_connect}. Then either $s$ has parents from two different tournaments or there is a non-directed path from $a'$ to $b$ which does not pass through $S$. The first case is excluded by Lemma~\ref{lem:oned-1}-2 and the assumption that $\T$ has a unique source, and the second one by the assumption that $S$ is a separator of $A$ and $B$ in $T$. The impossibility of the fifth $*$-connected configuration follows analogously.
		\smallskip
		
		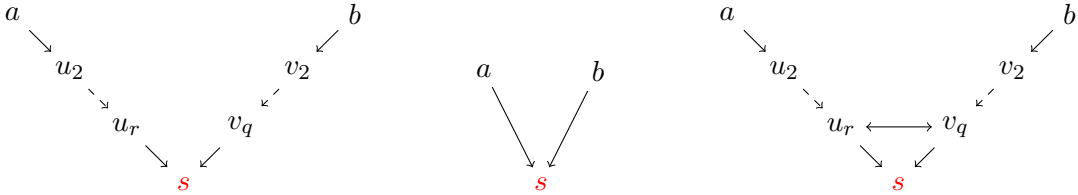
\begin{figure}[h]
			\centering
			\begin{tikzpicture} 
				\node[] (a) at (0,0) {$a$};
				\node[] (a1) at (0.75,-0.75) {$u_2$};
				\node[] (ar) at (1.5,-1.5) {$u_r$};
				\node[color=red] (s) at (2.25,-2.25) {$s$};
				\node[] (bq) at (3,-1.5) {$v_q$};
				\node[] (b2) at (3.75,-0.75) {$v_2$};
				\node[] (b) at (4.5,0) {$b$};
				\path[->] (a) edge (a1)  ;
				\path[->, dashed] (a1) edge (ar);
				\path[->] (ar) edge (s)  ;
				\path[->] (b) edge (b2)  ;
				\path[->, dashed] (b2) edge (bq);
				\path[->] (bq) edge (s)  ;
			\end{tikzpicture}
			\hspace{1cm}
			\begin{tikzpicture} 
				\node[] (a) at (0,0) {$a$};
				\node[color=red] (s) at (0.75,-1.5) {$s$};
				\node[] (b) at (1.5,0) {$b$};
				\path[->] (a) edge (s)  ;
				\path[->] (b) edge (s)  ;
			\end{tikzpicture}
			\hspace{1cm}
			\begin{tikzpicture} 
				\node[] (a) at (0,0) {$a$};
				\node[] (a1) at (0.75,-0.75) {$u_2$};
				\node[] (ar) at (1.5,-1.5) {$u_r$};
				\node[color=red] (s) at (2.25,-2.25) {$s$};
				\node[] (bq) at (3,-1.5) {$v_q$};
				\node[] (b2) at (3.75,-0.75) {$v_2$};
				\node[] (b) at (4.5,0) {$b$};
				\path[->] (a) edge (a1)  ;
				\path[->, dashed] (a1) edge (ar);
				\path[->] (ar) edge (s)  ;
				\path[->] (b) edge (b2)  ;
				\path[->, dashed] (b2) edge (bq);
				\path[->] (bq) edge (s)  ;
				\path[<->] (bq) edge (ar)  ;
			\end{tikzpicture}
			\caption{The left and right trails in the original graph $\T$ are the only possible trails that give rise to the middle path in the graph $\mathcal{D}_S^*$.}
			\label{fig:star_connect}
		\end{figure}
		
		\emph{Necessity.} We will show that if $\T$ has multiple source nodes, there is a triple of disjoint, non-empty sets $A,B,S\subset V$ such that $S$ is a separator of $A$ and $B$ in $T$, but $X_A$ and $X_B$ are conditionally dependent given $X_S$. In case $\T$ has at least two sources, we have at least one v-structure in $\T$ by Lemma~\ref{lem:oned-1}-2. Take a triple of nodes in a v-structure, say $u, v, w$, with $u$ and $w$ being parents of $v$. Then node $v$ separates nodes $u$ and $w$ in $T$, i.e., $S = \{v\}$ is a separator of $A = \{u\}$ and $B = \{w\}$ in $T$. All references below are from \citet{amendola2022conditional}.
		
		To show $X_u\notindep X_w\mid X_v$ 	we will use Theorem~6.18 (Context free completeness) of \citet{amendola2022conditional}. We need to show that there is an \emph{effective} $*$-connecting path in the critical DAG $\mathcal{D}^*_S(\theta)$ between nodes $u$ and $w$ as in their Definitions~5.2 and~6.5.
		
		The subgraph on nodes $u,v,w$ of $\mathcal{D}^*_S(\theta)$ is a v-structure, $u\longrightarrow \textcolor{red}{v}\longleftarrow w$, according to the definition of $\mathcal{D}^*_S(\theta)$. According to Definition~6.4, the $|S| \times |S|$ substitution matrix of $(u,v)\in E$ relative to $S=\{v\}$ is zero, because $S$ is a singleton and by definition all diagonal entries of the substitution matrix are zero, i.e., $\Xi^{vu}_S=0$. Similarly, $\Xi^{vw}_S=0$.
		Because the edges $(u,v), (w,v)$ form a $*$-connecting path between $u,w$ in $\mathcal{D}^*_S(\theta)$, the substitution matrix of this path relative to $S$, say $\Xi_S$, is zero too:
		\[
		\Xi_S=\max(\Xi^{vu}_S, \Xi^{vw}_S)=0.
		\]
		To find out if the edges $(u,v), (w,v)$ form an \emph{effective} $*$-connecting path between $u,w$, we need to compute the tropical eigenvalue of $\max(\Gamma_{SS},\Xi_S)$ where $\Gamma$ is as in Equation~(2.3) in \cite{amendola2022conditional} and $\Gamma_{SS}$ is the $vv$-element of $\Gamma$, i.e., $\{\Gamma\}_{vv}$. Because $\{\Gamma\}_{ij}>0$ if and only if there is a directed path from $j$ to $i$, we have $\Gamma_{SS}=\{\Gamma\}_{vv}=0$ and so 
		\[
		\max(\Gamma_{SS}, \Xi_S)=0.
		\]
		The tropical eigenvalue \citep[Equation~(2.7)]{amendola2022conditional} of the above matrix is trivially equal to zero and thus smaller than one. By Definition~6.5 in the cited reference, there is indeed an \emph{effective} $*$-connecting path between $u,w$ in $\mathcal{D}^*_{S}(\theta)$. In view of their Theorem~6.18, we conclude $X_u \notindep X_w \mid X_v$.
		
	\end{proof}

	\section{Proofs and additional results for Section~\ref{sec:ident}}
	\subsection{Auxiliary results}
	
	\begin{proof}[Proof of Lemma~\ref{lem:Hth}]
		The point masses satisfy $m_i > 0$ for all $i\in V$ because we have $m_i=0$ if and only if $c_{ii} = 0$. However $c_{ii}=0$ is impossible in view of the definition in \eqref{eq:cvv}. Therefore we cannot have undefined atoms, which would happen when $m_i=0$. This shows (i).
		
		Next we show (ii). To see why $a_i\neq a_j$ for $i\neq j$, let $i, v \in V$ and recall $b_{vi}$ in \eqref{eq:bvi-rev1}. From the line below \eqref{eq:Hth}, recall that we also have $b_{vi} = m_i a_{vi}$ for $i, v \in V$.
		Thanks to the assumption $\theta\in \mathring{\Theta}_*$, 
		we have \eqref{eq:equivs}. 
We also have for any DAG \eqref{eq:Descdistinct}. 
		
		The combination of the last two equations implies that in \eqref{eq:Hth}, all vectors $a_i$ for $i \in V$ are different and thus that $H_\theta$ has $|V|$ distinct atoms. Also, for every node $i \in V$, we can find out which of the $|V|$ atoms of $H_\theta$ is $a_i$ because it is the unique one that satisfies $\Desc(i) = \{ v \in V : a_{vi} > 0 \}$. Note that similarly, among the $|V|$ vectors in the set $\cB_\theta = \{ (b_{vj})_{v \in V} : j \in V \}$, the vector $b_i$ is the unique one such that $\Desc(i) = \{ v \in V : b_{vi} > 0 \}$.	 
		
		Finally consider (iii). By the criticality assumption, every edge is critical, because it is the shortest path between any pair of adjacent nodes. Since $(i, v)\in E$ is critical, we have $b_{vi} = b_{ii} c_{iv}$  and thus $c_{iv} = b_{vi} / b_{ii} = a_{vi} / a_{ii}$.
		
		In summary, the angular measure $H_\theta$ possesses $|V|$ distinct atoms that can be uniquely matched to the nodes. As a consequence, we can reconstruct the matrix $(b_{vi})_{i,v \in V}$. Thanks to (iii), this matrix allows us to recover all edge weights $c_{vi}$.
	\end{proof}
	
	
	\begin{lema1} \label{prop:properties2}
		Let $\T= (V, E)$ be a ttt as in Definition~\ref{def:ttt} and let $\T$ have a unique source, $u_0$. Let $U \subset V$ be non-empty and suppose that $\bar{U} = V \setminus U$ satisfies conditions~\ref{ident1} and \ref{ident2}. 
			For every $\bar{u} \in \bar{U}$ there exists $s \in \desc(\bar{u}) \cap U$ such that $\pi(\bar{u}, s)$ is a singleton and the unique path $p$ from $\bar{u}$ to $s$ satisfies the following two properties:
			\begin{compactenum}
				\item all nodes on $p$ except for $s$ are in $\bar{U}$;
				\item all nodes on $p$ except possibly for $\bar{u}$ have only one parent.
			\end{compactenum}  
		As a consequence, any path with destination $s$ must either start in one of the nodes of $p$ or contain $p$ as a sub-path.
		\end{lema1}
	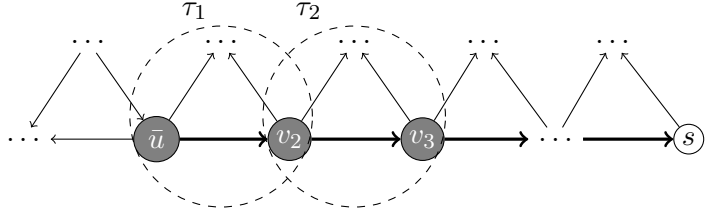
\begin{figure} 
		\centering
	\begin{tikzpicture}		
		\node[hollow node, fill=gray, minimum size=0.6cm] (ubar) at (0,0) {$\textcolor{white}{\bar{u}}$};
		\node[] (v1) at (-0.9,1.3) {$\ldots$};
		\node[] (v2) at (-1.75,0) {$\ldots$};
		\node[hollow node, fill=gray] (u2) at (1.75,0) {$\textcolor{white}{v_2}$};
		\node[] (w1) at (0.85,1.3) {$\ldots$};
		\node[hollow node, fill=gray] (u3) at (3.5,0) {$\textcolor{white}{v_3}$};
		\node[] (w2) at (2.6,1.3) {$\ldots$};
		\node[] (u4) at (5.25,0) {$\ldots$};
		\node[] (w3) at (4.3,1.3) {$\ldots$};
		\node[] (w4) at (6,1.3) {$\ldots$};
		\node[hollow node] (u5) at (7,0) {$s$};
		\path[->, line width=1.2pt] (ubar) edge (u2)  ;
		\path[->, line width=1.2pt] (u2) edge (u3)  ;
		\path[->, line width=1.2pt] (u3) edge (u4)  ;
		\path[->, line width=1.2pt] (u4) edge (u5)  ;
		\path[->] (v1) edge (ubar)  ;
		\path[->] (ubar) edge (v2)  ;
		\path[->] (v1) edge (v2)  ;
		\path[->] (ubar) edge (w1)  ;
		\path[->] (u2) edge (w1)  ;
		\path[->] (u2) edge (w2)  ;
		\path[->] (u3) edge (w2)  ;
		\path[->] (u3) edge (w3)  ;
		\path[->] (u4) edge (w3)  ;
		\path[->] (u4) edge (w4)  ;
		\path[->] (u5) edge (w4)  ;
		\node[] (tau) at (0.5,1.7)  {$\tau_1$};
		\node[] (tau) at (2,1.7)  {$\tau_2$};
		\begin{scope}[dashed]
			\draw[color=black] (0.85,0.3) circle (1.2cm);
			\draw[color=black] (2.6,0.3) circle (1.2cm);
		\end{scope}
	\end{tikzpicture}
\caption{A unique path on nodes $\{\bar{u}=v_1, v_2, v_3, \ldots, s = v_n\}$ under Lemma~\ref{prop:properties2}. Each of the nodes $\bar{u}, v_2, \ldots, v_{n-1}$ belongs to $\bar{U}$. Each of the nodes $v_2, \ldots, v_{n-1}, s$ has a unique parent. The node $\bar{u}\in \bar{U}$ may have parents as illustrated here, but then there is at least one tournament with respect to which it is a source node, e.g., $\tau_1$. Let $v_2$ be the node with unique parent $\bar{u}$ in $\tau_1$. When $v_2$ belongs to $\bar{U}$, it must participate in at least one another tournament, say $\tau_2$. In $\tau_2$ the node with unique parent $v_2$ is $v_3$. In this principle the path continues until we find a node in $U$, which is $v_n = s$ in this case.}
\label{fig:unipath}
\end{figure}	
			\begin{proof}
			Let $\bar{u}\in \bar{U}$ and suppose $\bar{u}$ has no parents, so $\bar{u}=u_0$. Take a node whose unique parent is $\bar{u}$, say $v_2$. By \citet[Corollary~5.a]{harari1966the} such a node exists in every tournament in which $\bar{u}$ takes part. If $v_2\in U$ then $s=v_2$ and we are done. If $v_2\in \bar{U}$ then by \ref{ident2} $v_2$ must be a source of at least one another tournament. 
			In each of these, there is a node whose only parent is $v_2$. Take such a node, say $v_3$. If $v_3\in U$ then $v_3=s$ and we are done; otherwise continue in the same way until we find a node which is in $U$. Because the graph is finite and because of condition \ref{ident2} such a node must exist. It is clear that the path constructed in this way has the stated properties.
			
			Next suppose that $\bar{u}$ belongs to $\bar{U}$ and that $\bar{u}$ has at least one parent. By~\ref{ident2} it must be a source of at least one tournament 
			In each of these tournaments there is a node with single parent $\bar{u}$. Take one of them, say $v_2$, and if $v_2\in U$ then we are done, otherwise repeat the same procedure as above until we find a node which is in $U$. Because the graph is finite and because of condition \ref{ident2}
			such a node must exist. It is clear that the path constructed in this way has the stated properties too.
			\end{proof}

			\begin{lema1} \label{prop:properties2-1}
		Let $\T= (V, E)$ be a ttt as in Definition~\ref{def:ttt} and let $\T$ have a unique source, $u_0$. Let $U \subset V$ be non-empty and suppose that $\bar{U} = V \setminus U$ satisfies conditions~\ref{ident1} and \ref{ident2}. 
		Let $i, j \in V$ be two distinct nodes. Upon switching the roles of $i$ and~$j$ if needed, the equality $\Desc(i) \cap U = \Desc(j) \cap U$ implies the following properties:
			\begin{compactenum}
				\item $i \in \bar{U}$;
				\item $\desc(i) = \Desc(j)$;
				\item $\{i\} = \pa(j)$;
				\item there exists $u \in V$ such that $i, j \in \pa(u)$;
				\item for $k \in V \setminus \{i, j\}$, the set $\Desc(k) \cap U$ is different from $\Desc(i) \cap U = \Desc(j) \cap V$;
				\item $|\Desc(i)\cap U|=|\Desc(j)\cap U|\geq 2$.
			\end{compactenum}
	\end{lema1}
	
	
	
	\begin{proof}

		1. Note first that $\Desc(i) \cap U$ cannot be empty, for otherwise, we would have $\Desc(i) \subseteq \bar{U}$, but this is impossible, since $\Desc(i)$ contains at least one leaf node (a node without children), in contradiction to \ref{ident1}.
		
		Since $\Desc(i) \cap U = \Desc(j) \cap U$ and since this set is non-empty, the intersection $\Desc(i) \cap \Desc(j)$ is not empty too. 
		In relation to 
		Lemma~\ref{lem:oned-1}-3 this means either $\Desc(i) \subseteq \Desc(j)$ or $\Desc(j) \subseteq \Desc(i)$.	
		In the remainder of the proof, we suppose $\Desc(j) \subseteq \Desc(i)$. Then we must have $i \not\in \Desc(j)$, since otherwise also $\Desc(i) \subseteq \Desc(j)$ and thus $\Desc(i) = \Desc(j)$, which is impossible since $i$ and $j$ are distinct; see \eqref{eq:Descdistinct}.
		From $\Desc(j)\subseteq\Desc(i)$ and $\Desc(i) \cap U = \Desc(j) \cap U$ it follows that 
		\[ 
		\Desc(i) \setminus \Desc(j) \subseteq \bar{U}.
		\]
		Because $i \not\in \Desc(j)$ we get $i\in \bar{U}$.
		\smallskip
		
		2--4. 
		First we show that all elements in $\Desc(i) \setminus \Desc(j)$ are ancestors of $j$. Let $v \in \Desc(i) \setminus \Desc(j)$. Because $j$ and $v$ are two different nodes, Lemma~\ref{lem:oned-1}-3 implies that one of three cases must occur: $\Desc(v) \subset \Desc(j)$; $\Desc(j) \subset \Desc(v)$; or $\Desc(j) \cap \Desc(v) = \varnothing$.
		The first case, $\Desc(v) \subset \Desc(j)$, is impossible, since $v \not\in \Desc(j)$.
		The third case, $\Desc(j) \cap \Desc(v) = \varnothing$, is impossible too, since it would imply that $\Desc(v) \subseteq \Desc(i) \setminus \Desc(j) \subseteq \bar{U}$, but this cannot happen since $\Desc(v)$ contains at least one leaf node while $\bar{U}$ does not contain any.
		Only the second case, $\Desc(j) \subset \Desc(v)$, remains.
		As a consequence, $v$ is an ancestor of $j$, and so all nodes of $\Desc(i)\setminus \Desc(j)$ are ancestors of $j$. By the proof of point~1, we get
		\begin{equation}
			\label{eq:Descij}
			\Desc(i) \setminus \Desc(j) \subseteq \an(j) \cap \bar{U}.
		\end{equation}
		
		Let again $v \in \Desc(i) \setminus \Desc(j)$. We show that there exists a unique path from $v$ to $j$ and that $j$ has only a single parent. Since $v \in \bar{U}$, there exists, by Lemma~\ref{prop:properties2}, a node $s(v) \in U$ such that there is only directed path $p(v, s(v))$ from $v$ to $s(v)$; moreover, this path satisfies properties~1 and~2 of the statement. Necessarily, 
		\[ 
		s(v) \in \Desc(v) \cap U \subseteq \Desc(i) \cap U = \Desc(j) \cap U.
		\]
		As $s(v)$ is a descendant of $j$ while $j$ is a descendant of $v$, the path $p(v, s(v))$ passes by $j$. As a consequence, there is a unique path $p(v, j)$ from $v$ to $j$; otherwise, there would be more than one path from $v$ to $s(v)$. Moreover, by Lemma~\ref{prop:properties2}-2, all nodes on the path $p(v, s(v))$, except possibly for $v$, have only one parent. In particular, $j$ has only one parent.
		
		
		
		Take again any $v \in \Desc(i) \setminus \Desc(j)$. By \ref{ident1}, $v$ has at least two children. They cannot both be ancestors of $j$, since then there would be two paths from $v$ to $j$, in contradiction to the previous paragraph. Let $u$ be a child of $v$ that is not an ancestor of $j$; then $u \in \Desc(j)$ because of \eqref{eq:Descij}. This means there are two paths from $v$ to $u$: the edge $(v, u)$ and a path passing through $j$. These paths must belong to the same tournament, as the skeleton of $\T$ is a block graph. But then $v$ and $j$ are adjacent, and thus $v$, which we already knew to be an ancestor of $j$, is actually a parent of $j$. But $j$ has only one parent, and so the set $\Desc(i) \setminus  \Desc(j)$ must be a singleton. As this set obviously contains node $i$, we get $v = i$ and thus $\Desc(i) \setminus \Desc(j) = \pa(j) = \{i\}$. Since $\Desc(i) = \{i\} \cup \desc(i)$ and $\Desc(j) \subset \Desc(i)$, it follows that $\desc(i) = \Desc(j) = \Desc(i) \setminus \{i\}$. \smallskip

		5. 
		%
		Let $k \in V \setminus \{i,j\}$ be such that $\Desc(k) \cap U = \Desc(i) \cap U = \Desc(j) \cap U$. By point~3 we have $\{i\}=\pa(j)$. From  $\Desc(k) \cap U = \Desc(i) \cap U$ we can have either $\{k\}=\pa(i)$ or $\{i\}=\pa(k)$, while from $\Desc(k) \cap U = \Desc(j) \cap U$ we have either $\{k\}=\pa(j)$ or $\{j\}=\pa(k)$. Because already $\{i\}=\pa(j)$, we cannot also have $\{k\}=\pa(j)$, whence we must have $\{j\}=\pa(k)$. But then $\{i\}=\pa(k)$ is impossible, so that necessarily $\{k\}=\pa(i)$. From $\{i\}=\pa(j)$, $\{j\}=\pa(k)$, and $\{k\}=\pa(i)$ we get a cycle between the three nodes $i,j,k$ which is a contradiction to the definition of a DAG.
		\smallskip
		
		6. We already know that $\Desc(i) \cap U = \Desc(j) \cap U$. We need to show that this set contains at least two elements. Consider the triple $\{i,j,u\}$ from point~4 that forms a triangle with directed edges $(i,j)$, $(i,u)$, and $(j,u)$. 
		By point~1 we have also $i\in \bar{U}$. There are four cases, according to whether $j$ and $u$ belong to $U$ or not.
		\begin{compactitem}
			\item 
			If $j,u\in U$, they are two distinct elements of $\Desc(j) \cap U$. 
			\item
			If $j \in U$ but $u \in \bar{U}$, then take $r \in \Desc(u) \cap U$ [which is non-empty by \ref{ident1}: all leaf nodes in $\Desc(u)$ are in $U$], and note that $j$ and $r$ are two distinct elements in $\Desc(j) \cap U$.
			\item 
			If $j \in \bar{U}$ and $u \in U$, then, as in Lemma~\ref{prop:properties2}, let $s \in U$ be such that there is unique path $p(j, s)$ from $j$ to $s$, this path satisfying properties~1-2 in the same lemma. Then $u$ does not belong to that path (since $u \in U$ and $u$ has at least two parents, $i$ and $j$), so that $s$ is different from $u$, and both are members of $\Desc(j) \cap U$.
			\item
			If $j, u \in \bar{U}$, then we can find by Lemma~\ref{prop:properties2} nodes $s\in \desc(j)\cap U$ and $r\in \desc(u)\cap U$ with paths $p(j,s)$ and $p(u,r)$ which satisfy the characteristics in this lemma. Nodes $s,r$ clearly belong to $\Desc(j)\cap U$. Moreover, they are distinct: node $u$, having at least two parents, cannot belong to the unique path $p(j, s)$ between $j$ and $s$, while by construction, there is a directed path from $j$ to $r$ that passes along $u$. \qedhere
		\end{compactitem}	 
	\end{proof}

	\begin{lema1} 
		\label{lem:bzz}
		Let $X$ be a max-linear model with respect to a ttt with unique source. The coefficient $b_{vv}$ depends only on the edge weights of the tournament shared by node $v\in V$ and its parents, and it is given by
		\begin{equation} \label{eqn:bvv_expression}
			b_{vv}=1+\sum_{u\in \pa(v)} \sum_{p\in \pi(u,v) }
			(-1)^{|p|}c_p.
		\end{equation}
	\end{lema1}	
	\begin{proof} 
Consider the node $v$. If $v$ has no parents we have $\an(v)=\varnothing$ and by \eqref{eq:cvv} we have $b_{vv}=1$. If $v$ has at least one parent then there is a tournament which contains the parents, say, $\tau=(V_{\tau}, E_{\tau})$. Let the nodes in $\tau$ be labelled according to their in-/out-degree ordering in $\tau$: the node with $|V_{\tau}|-1$ children in $\tau$ (the source of $\tau$) has index~1, the node with $|V_{\tau}|-2$ children in $\tau$ has index~2, and so on. 
 We can partition the set $\an(v)$ into $\An(1)$ and $\pa(v)\setminus \{1\}$. For $i\in \An(1)$ the shortest path from $i$ to $v$ passes necessarily through $1$, so $b_{vi}=c_{p(i,1)}c_{1v}b_{ii}$. Then we have by~\eqref{eq:bvi} and~\eqref{eq:cvv}
\begin{align} \label{eqn:bvv_expression0}
	b_{vv}=1-\sum_{i\in \an(v)}b_{vi}=
	1-\sum_{i\in \An(1)}c_{p(i,1)}c_{1v}b_{ii}
	-\sum_{i\in \pa(v)\setminus 1}b_{vi}
	=
	1-c_{1v}-\sum_{i\in \pa(v)\setminus 1}c_{iv}b_{ii}.
\end{align}			
		Let $C=\{c_{ij}\}_{i,j\in V_\tau, i<j}$ be the matrix of edge weights within $\tau$: it is lower triangular and has zero diagonal. Let $I_{m}$ denote the $m \times m$ identity matrix, write $\bm{b}=(1,b_{22},\ldots,b_{|V_\tau|,|V_\tau|})^\top$ (a column vector) and let $1_{|V_\tau|}$ be a column vector of ones of length $|V_\tau|$. Consider the system of linear equations 
		\begin{equation} \label{eqn:syseqnb}
		(I_{|V_{\tau}|}+C) \, \bm{b}=1_{|V_\tau|}.
		\end{equation}
		For $v\geq 2$ the expression in \eqref{eqn:bvv_expression0} is equivalent to the $v$-th equation in~\eqref{eqn:syseqnb}.
		A solution for $\bm{b}$ is
		\[ 
			\bm{b} 
			= (I_{|V_{\tau}|}+C)^{-1}1_{|V_\tau|}
			= \left( I_{|V_{\tau}|}-(-C) \right)^{-1} \, 1_{|V_\tau|}.
		\]
	From the equality
	\[
	\big(I_{|V_{\tau}|}+(-C)+(-C)^2+\cdots+(-C)^k\big)
	\big(I_{|V_{\tau}|}-(-C)\big)
	=
	I_{|V_{\tau}|}-(-C)^{k+1}
	\] 
	and the fact that for $|V_{\tau}|$-square lower triangular matrix with zero diagonal powers of $k\geq |V_{\tau}|$ are zero matrices we obtain
	\[
	\big(I_{|V_{\tau}|}+(-C)+(-C)^2+\cdots+(-C)^{|V_{\tau}|-1}\big)
	=
	\big(I_{|V_{\tau}|}-(-C)\big)^{-1}.
	\]
		If the matrix on the left is denoted by $K$ we have as solution $\bm{b}=K1_{|V_{\tau}|}$. For all $b_{vv}, v\geq 2,$ it can be shown that \eqref{eqn:bvv_expression} equals the $vv$-th element of this solution for $\bm{b}$. For $b_{11}$ consider the corresponding solution when the tournament $\tau$ is the one which node 1 shares with its parents. If node~1 has no parents in the ttt, then we have the solution $b_{11}=1$ which is indeed the case.
	\end{proof}

	\begin{lema1} \label{lem:vtau-max:2}
		Let $X$ follow a max-linear model as in Assumption~\ref{ass:max-mod} with respect to a ttt $\T$ consisting of a single tournament $\tau=(V,E)$. If the node $v\in V$ has at least one parent, then the parameter vector $\theta=(c_e)_{e\in E} \in \mathring{\Theta}_*$ is not identifiable from the distribution of $X_{V \setminus v}$.
		Specifically, there exists $\theta' = (c_e')_{e \in E} \in \mathring{\Theta}_*$ such that $\theta' \ne \theta$ and the distribution of $X_{V \setminus v}$ is the same under $\theta'$ as under $\theta$.
	\end{lema1}
	
	\begin{proof}
		Let $n = |V|$ denote the number of nodes. For convenience, rename the nodes to $V = \{1, \ldots, n\}$ in the ordering induced by the DAG, i.e., node $i$ has $i-1$ parents, for $i \in V$. The number of edges is $|E| = n(n-1)/2 =: m$, and the parameter set $\mathring{\Theta}_*$ is an open subset of $\mathbb{R}^E$. The distribution of $X$ is max-linear and is given by
		\begin{equation}
			\label{eq:XjbjiZi}
			X_{j} = \bigvee_{i=1}^j b_{ji} Z_i, \qquad j \in V, 
		\end{equation}
		where $b_{11} = 1$, $b_{jj} = 1 - \sum_{i=1}^{j-1} b_{ji}$ for $j \in V \setminus 1$, and where the $m$ coefficients $b = (b_{ji} : 1 \le i < j \le n)$ are determined by the edge parameters $\theta = (c_{ij} : 1 \le i < j \le n)$.
		
		Discarding the variable $X_v$ for some $v \in V \setminus 1$ yields the vector $X_{V \setminus v}$, the distribution of which is determined by the $m - (v-1)$ coefficients $(b_{ji} : 1 \le i < j \le n, j \ne v)$. For convenience, identify $\mathbb{R}^E$ with $\mathbb{R}^m$. Let $\pi : \mathbb{R}^m \to \mathbb{R}^{m-v+1}$ be the projection that sends $x = (x_{ij} : 1 \le i < j \le m)$ to $\pi(x) = (x_{ij} : 1 \le i < j \le m, j \ne v)$, i.e., the effect of $\pi$ is to leave out the coordinates $(i, v)$ with $i = 1,\ldots,v-1$. 
		By \eqref{eq:XjbjiZi} with $j = v$ removed, the distribution of $X_{V \setminus v}$ is determined by $\pi(b)$.
		
		The max-linear coefficients $b$ are a function of the edge parameters $\theta$. Formally, there exists a map $f : \mathring{\Theta}_* \to \mathbb{R}^m$ such that
		\[
		b = f(\theta).
		\]
		The function $f$ can be reconstructed from \eqref{eq:bvi-rev1} with $p(i,j) = (i, j)$ for $1 \le i < j \le n$. 
		Clearly, $f$ is continuous. Since the parameter $\theta$ is identifiable from the distribution of $X$ (Lemma~\ref{lem:Hth}), the function $f$ is also injective, i.e., $\theta \ne \theta'$ implies $f(\theta) \ne f(\theta')$. By the Invariance of Domain Theorem \citep[see, e.g.][]{kulpa1998}, the image $f(\mathring{\Theta}_*)$ is therefore an open subset of $\mathbb{R}^m$. But then, for any coefficient vector $b \in f(\mathring{\Theta}_*)$, there exists another coefficient vector $b' \in  f(\mathring{\Theta}_*)$ such that $b' \ne b$ but still $b_{ji} = b_{ji}'$ for all $1 \le i < j \le n$ and $j \ne v$ 
		--- in other words, such that $\pi(b) = \pi(b')$. Since $f$ is injective, the vectors $b$ and $b'$ originate from different edge parameter vectors $\theta = f^{-1}(b)$ and $\theta' = f^{-1}(b')$ in $\mathring{\Theta}_*$. But
		\[ 
		\pi(f(\theta)) = \pi(b) = \pi(b') = \pi(f(\theta')),
		\]
		so that the edge weight vectors $\theta$ and $\theta'$ induce the same distribution of $X_{V \setminus v}$. We conclude that the parameter $\theta$ is not identifiable from the distribution of $X_{V \setminus v}$.
		%
	\end{proof}

	\subsection{Proof of Proposition~\ref{prop:ident_block}}
	
	
	
	When reading the proof, the following perspective may help. Recall the notation in equations~\eqref{eq:HthU} and~\eqref{eqn:Hmuomega}. The knowledge of the (simple max-stable) distribution of $X_U$ implies the knowledge of its angular measure $H_U$ and thus of the unordered collection of pairs of atoms and masses $(\omega_r, \mu_r)$ for $r = 1, \ldots, s$. The vector $X_U$ can itself be represented as a max-linear model with $s$ independent factors and coefficient vectors $\beta_r = \mu_r \omega_r$ for $r = 1,\ldots,s$. We first need to ensure that we can match those vectors $\beta_r$ in a unique way to the max-linear coefficient vectors $(b_{vi})_{v \in U}$ for $i \in V$; note that the coordinates $v$ of those vectors are restricted to $U$. Next, from the latter vectors, we need to recover the edge coefficients $\theta = (c_e)_{e \in E}$.

	\begin{proof}[Proof of sufficiency (if) part of Proposition~\ref{prop:ident_block}]
		We assume~\ref{ident1} and~\ref{ident2}. In the first step of the proof we show that the angular measure of $X_U$ in \eqref{eqn:Hmuomega} is composed of $|V|$ distinct atoms and that we can associate every atom in $\{\omega_r : r=1, \ldots, |V|\}$ to some node $v \in V$ and accordingly be able to associate it to one of the atoms $a_{i,U}=(b_{vi}/m_{i,U})_{v\in U}$ for $i\in V$. 
		For this, we focus on the nature of the atoms $\{a_{i,U}\}$, given the conditions~\ref{ident1} and~\ref{ident2}.
		As a consequence, the max-linear coefficient matrix $b_{U \times V} = (b_{vi})_{v \in U, i \in V}$ can be recovered from the distribution of $X_U$. In Step~2, we show how  to recover from this matrix the edge parameters $\theta = (c_e)_{e \in E}$.
		
		
		\item \underline{Step 1.} Recall the representation $H_U = \sum_{i \in V} m_{i,U} \delta_{a_{i,U}}$  in \eqref{eq:HthU} of the angular measure of $X_U$. We shall show that all $|V|$ masses $m_{i,U}$ are positive and that all $|V|$ atoms $a_{i,U}$ are distinct. Moreover, we will show how to match the atoms to the nodes, that is, given an atom $\omega \in \{ \omega_r : r = 1, \ldots, |V|\}$ how to identify the node $i \in V$ such that $\omega = a_{i,U}$.

			\item	\emph{All $|V|$ vectors $\{a_{i,U}\}$ have positive  masses $\{m_{i,U}\}$.}
		Recall $m_{i,U} = \sum_{v \in U} b_{vi}$ and recall from \eqref{eq:equivs} that $b_{vi} > 0$ if and only if $v \in \Desc(i)$.
		It follows that $m_{i,U} = 0$ if and only if $\Desc(i) \cap U = \varnothing$ or, in other words, $\Desc(i) \subseteq \bar{U}$.
		But this is impossible 
		since $\Desc(i)$ contains at least one leaf node, that is, a node without children, and such a node belongs to $U$ by \ref{ident1}.
		We conclude that $m_{i,U} > 0$ for all $i \in V$. 

		\item \emph{All $|V|$ vectors $\{a_{i,U}\}$ are distinct.} 
	By \eqref{eq:omegaiU} it follows that whenever for two different nodes $i,j\in V$ we have $\Desc(i)\cap U\neq \Desc(j)\cap U$ then we can find two atoms, say $\omega'$ and $\omega''$, within the set $\{\omega_r\}$ such that $\omega'=a_{i,U}$ and $\omega''=a_{j,U}$. Because $\Desc(i)\cap U\neq \Desc(j)\cap U$ then necessarily $a_{i,U}\neq a_{j,U}$. Suppose however for two different nodes $i,j\in V$ we have $a_{i,U}=a_{j,U}$. This means that for the $u$-th and $j$-th elements of these vectors we have
	\[
	a_{i,u;U}=a_{j,u;U}
	\Longleftrightarrow
	\frac{b_{ui}}{m_i}=\frac{b_{uj}}{m_j} 
	\qquad \text{and} \qquad
	a_{i,j;U}=a_{j,j;U}
	\Longleftrightarrow
	\frac{b_{ji}}{m_i}=\frac{b_{jj}}{m_j}
	\] 
	Considering the ratios above, we should also have 
	\begin{equation} \label{eqn:aiuU}
	\frac{a_{i,u;U}}{a_{i,j;U}}=\frac{a_{j,u;U}}{a_{j,j;U}}
	\qquad \Longleftrightarrow \qquad
	\frac{b_{ui}}{b_{ji}}=\frac{b_{uj}}{b_{jj}}.
	\end{equation}
	
Because $a_{i,U}=a_{j,U}$ necessarily 
$		\Desc(i) \cap U
		= \Desc(j) \cap U. 
$
		By Lemma~\ref{prop:properties2-1} there exists a node $u$ 
		 such that one of the edge sets $\{(i,j), (i,u), (j,u)\}$ or $\{(j,i), (i,u), (j,u)\}$ is contained in $E$. Without loss of generality, suppose this holds for the first triple. Also, by Lemma~\ref{prop:properties2-1} there cannot be another node $k$ with $\Desc(k)\cap U=\Desc(i)\cap U=\Desc(j)\cap U$.  
		
		Suppose first $j, u \in U$. From the identities
		\begin{align} \label{eqn:identities}
		b_{ji} &= c_{ij} b_{ii}, &
		b_{ui} &= c_{iu} b_{ii}, &
		b_{uj} &= c_{ju} b_{jj}, 
		\end{align}
		and the criticality requirement 
		\[
		c_{iu} > c_{ij} c_{ju}
		\]
		we have 
		\begin{equation} \label{eqn:bui}
		\frac{b_{ui}}{b_{ji}}>\frac{b_{uj}}{b_{jj}}.
		\end{equation}
		This inequality shows that \eqref{eqn:aiuU} cannot happen, hence we cannot have $a_{i,U}=a_{j,U}$. This means that all $|V|$ atoms are distinct.


%
%
%
		
		Next suppose $j, u \in \bar{U}$. By Lemma~\ref{prop:properties2}, there exist nodes $j', u'\in U$ such that $j' \in \desc(j)$ and $u' \in \desc(u)$ and the paths $p(j,j')$ and $p(u,u')$ satisfy the properties in the said lemma. Because all nodes on the path $p(j,j')$ except possibly for $j$ have a unique parent, the path $p(j,j')$ cannot pass through $u$ (which has parents $i$ and $j$), and thus $j' \neq u'$. As $i$ is a parent of $j$, the shortest (and in fact the only) path from $i$ to $j'$ is the one that concatenates the edge $(i, j)$ with $p(j, j')$ (Lemma~\ref{prop:properties2}). It follows that
		\begin{equation} \label{eqn:bjprimei} 
		b_{j'i} = c_{p(j,j')} c_{ij} b_{ii}
		\qquad \text{and} \qquad
		b_{j'j} = c_{p(j,j')} b_{jj}.
		\end{equation} 
		By a similar argument, the path that concatenates the edge $(i,u)$ with the path $p(u,u')$ is the unique shortest path from $i$ to $u'$, while the path that concatenates $(j,u)$ with $p(u, u')$ is the unique shortest path from $j$ to $u'$. It follows that
		\begin{equation} \label{eqn:buprimei}
		b_{u'i} = c_{p(u,u')} c_{iu} b_{ii}
		\qquad \text{and} \qquad
		b_{u'j} = c_{p(u,u')} c_{ju} b_{jj}.
		\end{equation}
		Combining these equalities, $c_{iu}>c_{ij}c_{ju}$ implies that we should have
		\begin{equation} \label{eqn:buprime}
		\frac{b_{u'i}}{b_{u'j}}>\frac{b_{j'i}}{b_{j'j}}.
		\end{equation}
		However from $a_{i,U}=a_{j,U}$ we have for the $u'$-th and $j'$-th elements of these vectors 
		\[
		a_{i,u';U}=a_{j,u';U}
		\Longleftrightarrow
		\frac{b_{u'i}}{m_i}=\frac{b_{u'j}}{m_j} 
		\qquad \text{and} \qquad
		a_{i,j';U}=a_{j,j';U}
		\Longleftrightarrow
		\frac{b_{j'i}}{m_i}=\frac{b_{j'j}}{m_j}
		\] 
		Considering the ratios above, we should also have 
		\begin{equation} \label{eqn:aiuprime}
			\frac{a_{i,u';U}}{a_{i,j';U}}=\frac{a_{j,u';U}}{a_{j,j';U}}
			\qquad \Longleftrightarrow \qquad
			\frac{b_{u'i}}{b_{u'j}}=\frac{b_{j'i}}{b_{j'j}}.
		\end{equation}
Because of \eqref{eqn:buprime} the equalities in \eqref{eqn:aiuprime} cannot happen, hence we cannot have $a_{i,U}=a_{j,U}$.

%
		
		The analysis of the cases $(j, s) \in U \times \bar{U}$ and $(j, s) \in \bar{U} \times U$ is similar.
This shows that all $|V|$ vectors $a_{i,U}$ for $i \in V$ are different and because of \eqref{eqn:muom_ma}, all vectors $\omega_r$ for $r \in \{1,\ldots,|V|\}$ are different too.

		\item \emph{Distinguishing all atoms $a_{i,U}$ with zeroes on the same positions.} 
		For two different nodes $i, j \in V$, the atoms $a_{i,U}$ and $a_{j,U}$ have the same supports $\{v\in U: a_{vi}>0\}=\{v\in U: a_{vj}>0\}$ when $\Desc(i) \cap U = \Desc(j)\cap U$. By Lemma~\ref{prop:properties2-1}-5 there cannot be any other node $k \in V \setminus \{i, j\}$ with the same descendants in $U$. In the representation
		\[
		H_{\theta,U} 
		= \sum_{t \in V} m_{t,U} \delta_{a_{t,U}}
		= \sum_{r = 1}^{|V|} \mu_r \delta_{\omega_r}
		\]
		there are thus exactly two atoms, $\omega$ and $\omega'$, say, with the same indices of non-zero coordinates as $a_{i,U} = (b_{vi} / m_{i,U})_{v \in V}$ and $a_{j,U} = (b_{vj} / m_{j,U})_{v \in V}$. The question is then how to know whether $\omega = a_{i,U}$ and $\omega' = a_{j,U}$ or vice versa, $\omega = a_{j,U}$ and $\omega' = a_{i,U}$. Let $\mu$ and $\mu'$ be the masses of $\omega$ and $\omega'$, respectively, and consider the vectors $\beta = \mu \omega$ and $\beta' = \mu' \omega'$. An equivalent question is then how to identify $\beta$ and $\beta'$ with the two max-linear coefficient vectors $(b_{vi})_{v \in U}$ and $(b_{vj})_{v \in U}$.
		
		By Lemma~\ref{prop:properties2-1}-2 to~4, we can suppose that $i$ is the unique parent of $j$ and that $i$ and $j$ have a common child $u$. The analysis is now to be split up into different cases, according to whether $j$ and $u$ belong to $U$ or not. Recall that $b_{vi} = c_{p(i,v)} b_{ii}$ and $b_{vj} = c_{p(j,v)} b_{jj}$ for $v \in \Desc(j) \subset \Desc(i)$.
		

		Suppose first that $j, u \in U$. From \eqref{eqn:identities} we deduce~\eqref{eqn:bui} thanks to the criticality assumption. 
		In order to make the correct assignment of the two vectors $\beta=(\beta_v)_{v\in U}$ and $\beta'=(\beta'_v)_{v\in U}$ to the nodes $i$ and $j$, we need to check the inequality \eqref{eqn:bui}. If $\beta_u / \beta_j > \beta'_u / \beta'_j$ then we assign the vector $\beta$ to the node $i$ and the vector $\beta'$ to the node $j$. If the equality is reversed, we do the assignment the other way around.
		
		Next suppose that $j, u \in \bar{U}$. According to Lemma~\ref{prop:properties2}, there exist nodes $j', u'\in U$ so that there is a unique path from $j$ to $j'$ and from $u$ to $u'$. 
		By Lemma~\ref{prop:properties2}, the paths from $i$ to $u'$ and $j'$ and from $j$ to $j$ to $u'$ are
		\begin{align*}
			p(i,u') &= \{(i,u)\} \cup p(u,u'), &
			p(i,j') &= \{(i,j)\} \cup p(j,j'), &
			p(j,u') &= \{(j,u)\} \cup p(u,u'). 
		\end{align*}
	We have the same identities in \eqref{eqn:bjprimei} and \eqref{eqn:buprimei} which, together with the criticality assumption, lead to the inequality~\eqref{eqn:buprime}.
		In order to make the correct assignment of the two vectors $\beta=(\beta_v)_{v\in U}$ and $\beta'=(\beta'_v)_{v\in U}$ we do as above for the case $j,u\in U$.

		For the cases $(j,u) \in U \times \bar{U}$ and $(j, u) \in \bar{U} \times U$, we combine methods from the cases $(j,u) \in U \times U$ and $(j,u) \in \bar{U} \times \bar{U}$.  
		
		With this we finish the proof that we can learn the structure of every atom $\{\omega_r : r = 1, \ldots, |V|\}$, i.e., for every $r=1, \ldots, |V|$ we can identify the unique node $i\in V$ such that $\omega_r=a_{i,U}=(b_{vi}/m_{i,U})_{v\in U}$. This means that we can also match every element $\beta$ in the collection of vectors $\{\beta_r : r = 1, \ldots, |V|\}$ to the correct node $i \in V$ such that $\beta = (b_{vi})_{v \in U}$.
		
		
		\item \underline{Step 2.}
		In the previous step, we have shown that the distribution of $X_U$ (together with the knowledge of the graph structure) determines the max-linear coefficient matrix $b_{U \times V} = (b_{vi})_{v \in U, i \in V}$. Here, we show that this matrix suffices to reconstruct the vector of edge coefficients $\theta = (c_e)_{e \in E}$.
		
		If $v$ is a child of $i$, then $p(i,v) = \{(i,v)\}$ and thus $b_{vi} = c_{iv} b_{ii}$. If both $i$ and $v$ belong to $U$, then, clearly, we can identify $c_{iv} = b_{vi} / b_{ii}$. 

Let $i\in U$ with child $v \in \ch(i) \cap \bar{U}$. By Lemma~\ref{prop:properties2} there exists a node $v'\in U$ such that there is a unique path from $v$ to $v'$. Rewrite $v=v_1$ and $v'=v_n$ and consider the node set $\{v_1, v_2, \ldots, v_n\}$ on that  unique path. Using the fact that if a node $\ell$ has a single parent $k$, then $b_{\ell\ell} = 1 - c_{k \ell}$ (see \eqref{eqn:unique_parent}), we find the following identities for the max-linear coefficients $b_{v_nj}$ for $j \in \{v_n,\ldots,v_2,i\}$:
	\begin{equation} \label{eqn:binin}
	\begin{split}
		b_{v_nv_n} &= 1 - c_{v_{n-1}v_n},\\
		b_{v_nv_{n-1}} &= c_{p(v_{n-1},v_n)} b_{v_{n-1}v_{n-1}}
		= c_{v_{n-1} v_n} \left( 1-c_{v_{n-2}v_{n-1}} \right),\\
		&\vdots\\
		b_{v_nv_{2}} &= c_{p(v_{2},v_n)} b_{v_2v_2} 
		= c_{v_2 v_3} \cdots c_{v_{n-1}v_n} \left( 1-c_{v_{1}v_{2}} \right),\\
			b_{v_ni}&=c_{p(i,v_n)}b_{ii}
			=c_{iv_1}c_{v_1v_2}c_{v_2v_3}\cdots c_{v_{n-1}v_n}b_{ii}.\\
	\end{split}
\end{equation} 
From the first equation we identify $c_{v_{n-1}v_n}$, from the second $c_{v_{n-2}v_{n-1}}$ and so on until we identify $c_{v_1v_2}$ from the penultimate equation. From the last equation we can identify $c_{iv_1}$ because $b_{ii}$ is available from $b_{U \times V}$ in view of $i \in U$. 

The next step of the proof is to extract the edge parameters between a node with latent variable and its children.

Let $i\in \bar{U}$. We will show that we can identify all edge weights $c_{ij}$ for $j \in \ch(i)$. Because $i$ belongs to $\bar{U}$, it should have at least two children, say $v$ and $\bar{v}$. Take $v$ to be a node whose only parent is $i$. Note that we can always find such a node by Lemma~\ref{prop:properties2}. 
Let us first assume $v,\bar{v}\in U$. Because $v$ has only one parent, we have $b_{vv} = 1-c_{iv}$ and thus $c_{iv}=1-b_{vv}$. We also know $b_{vi}=c_{iv}b_{ii}$ and from here $b_{ii}=b_{vi}/c_{iv}=b_{vi}/(1-b_{vv})$. From $b_{\bar{v} i}=c_{i\bar{v}}b_{ii}$ we deduce $c_{i\bar{v}}=b_{\bar{v} i}(1-b_{vv})/b_{vi}$. Hence we have identified all the edge parameters related to children of $i$ which are observable, provided $i$ has two or more children in $U$. 

Next assume that both $v,\bar{v}\in \bar{U}$. By Lemma~\ref{prop:properties2}, there exists a node $v'\in \desc(v) \cap U$ such that there is a unique path from $v$ to $v'$ and which has the properties in the cited statement. Let the sequence of nodes along which the path passes be denoted by $\{v_1=v, v_2, \ldots, v_n=v'\}$. 
Using again that for a node $\ell$ with single parent $k$, $b_{\ell\ell} = 1 - c_{k \ell}$ (see \eqref{eqn:unique_parent}), we find the same identities as in \eqref{eqn:binin} for the max-linear coefficients $b_{v_nj}$ for $j \in \{v_n,\ldots,v_2\}$. From the first equation we obtain $c_{v_{n-1} v_n} = 1 - b_{v_nv_n}$, from the second equation $c_{v_{n-2}v_{n-1}} = 1 - b_{v_nv_{n-1}}/(1-b_{v_nv_n})$ and so on until we obtain $c_{v_1 v_2}$ from the penultimate equation in \eqref{eqn:binin}. Because we assumed $\pa(v_1)=\{i\}$ we have $b_{v_1v_1}=1-c_{iv_1}$ and thus
		\begin{equation*}
				b_{v_nv_{1}} = c_{p(v_{1},v_n)} b_{v_1v_1} 
				= c_{v_1 v_2} c_{v_2 v_3} \cdots c_{v_{n-1} v_n } \left( 1-c_{iv_1} \right),
		\end{equation*} 
from where we identify $c_{iv_1}$. Since $v_n \in U$, all coefficients $b_{v_nj}$ for $j \in V$ are contained in the max-linear coefficient matrix $b_{U \times V}$. The procedure just described thus allows us to compute all edge coefficients $c_{iv_1}, c_{v_1v_2}, \ldots, c_{v_{n-1}v_n}$.
		
Because the path $p(v_1, v_n)$ satisfies the properties in Lemma~\ref{prop:properties2}, and in view of the same lemma, plus the fact that the only parent of $v_1=v$ is $i$, it follows that the path $\{(i,v_1)\} \cup p(v_1,v_n)$ is the unique path between $i$ and $v_n$. Hence we have also
		\begin{equation} \label{eqn:bvv}
			b_{v_ni} = c_{p(i,{v_n})}b_{ii}
			=c_{iv_1}c_{v_1 v_2}c_{v_2 v_3}\cdots c_{v_{n-1} v_n}b_{ii}.
		\end{equation}   
		As $v_n \in U$, the value of $b_{v_n i}$ is known from $b_{U \times V}$. The edge coefficients on the right-hand side were expressed in terms of $b_{U \times V}$ in the previous paragraph. From there, we obtain the value of $b_{ii}$, which will be used next.
		
		Now consider the node $\bar{v}$, renamed to $\bar{v}_1$, which was an arbitrary child of $i$. For $\bar{v}=\bar{v}_1$ too, we can find a node $\bar{v}_m \in U$ and a sequence of nodes $\{\bar{v}_2, \ldots, \bar{v}_m\}$ according to Lemma~\ref{prop:properties2} satisfying the properties stated there.
		For all $r \in \{2,\ldots,m\}$, the node $\bar{v}_r$ has a unique parent, $\bar{v}_{r-1}$. We have the following equalities, using again by $b_{\ell\ell} = 1 - c_{k\ell}$ for a node $\ell$ with unique parent $k$
		\begin{equation*}
			\begin{split}
				b_{\bar{v}_m\bar{v}_m} 
				&= 1-c_{\bar{v}_{m-1}\bar{v}_m}, \\
				b_{\bar{v}_m\bar{v}_{m-1}} 
				&= c_{p(\bar{v}_{m-1},\bar{v}_m)} b_{\bar{v}_{m-1}\bar{v}_{m-1}}
				= c_{\bar{v}_{m-1} \bar{v}_m} \, \left( 1-c_{\bar{v}_{m-2}\bar{v}_{m-1}} \right),\\
				&\vdots\\
				b_{\bar{v}_m\bar{v}_{2}}
				&= c_{p(\bar{v}_{2},\bar{v}_m)} b_{\bar{v}_2 \bar{v}_2} 
				= c_{\bar{v}_2 \bar{v}_3} \cdots c_{\bar{v}_{m-1} \bar{v}_m} \left( 1-c_{\bar{v}_{1}\bar{v}_{2}} \right),\\
				b_{\bar{v}_mi}&=c_{p(i,\bar{v}_m)}b_{ii}
				=c_{i\bar{v}_1}c_{\bar{v}_1 \bar{v}_2}c_{\bar{v}_2 \bar{v}_3}\cdots c_{\bar{v}_{m-1} \bar{v}_m}b_{ii} .
\end{split}
		\end{equation*} 
		In the last equality we used that the path $\{(i,\bar{v}_1)\}\cup p(\bar{v}_1, \bar{v}_m)$ is the unique shortest one between $i$ and $\bar{v}_m$, because of Lemma~\ref{prop:properties2} and because $p(\bar{v}_1, \bar{v}_m)$ satisfies the properties~1-2 of the same lemma.  
		Since $\bar{v}_m \in U$, the values of the left-hand sides in the previous equations are contained in the given matrix $b_{U \times V}$.
		From the first equality we obtain $c_{\bar{v}_{m-1}\bar{v}_m}$, from the second one $c_{\bar{v}_{m-2}\bar{v}_{m-1}}$ and so on until we identify $c_{\bar{v}_{1}\bar{v}_{2}}$ from the penultimate equality, i.e., all edge parameters linked to $p(\bar{v}_1,\bar{v}_m)$. In the last equation above we replace $b_{ii}$ with the expression derived from \eqref{eqn:bvv} and we obtain the parameter $c_{i\bar{v}_1}$.
		
		If some of the children of $i$ are in $U$ and some others are in $\bar{U}$, we apply a combination of the techniques used in the two cases described above -- when two children are in $U$ or when two children are in $\bar{U}$.  
		
		This concludes the proof of the sufficiency (if) part.
\end{proof}

\begin{proof}[Proof of necessity (only if) part in Proposition~\ref{prop:ident_block}]
	Let $\bar{U} \subset V$ be such that at least one of the two conditions \ref{ident1} and \ref{ident2} is not satisfied and let $\theta = (c_e)_{e \in E} \in \mathring{\Theta}_*$. We will show that there exists another parameter $\theta' = (c_e')_{e \in E} \in \mathring{\Theta}_*$ such that $\theta' \ne \theta$ but the distribution of $X_{V \setminus u}$ under $\theta'$ is the same as the one under $\theta$.
	
	We consider two cases: case~(1) \ref{ident2} does not hold, i.e., there exists $u \in \bar{U}$ which is not the source of any tournament in $\T$; case~(2) \ref{ident2} holds but not \ref{ident1}, i.e., every $u \in \bar{U}$ is the source of some tournament in $\T$ but there exists $u \in \bar{U}$ with less than two children.
	\smallskip
	
	\noindent\emph{Case~(1): there exists $u \in \bar{U}$ which is not the source of any tournament in $\T$.}
	Then $u$ belongs to only a single tournament, say $\tau = (V_\tau, E_\tau)$; indeed, a node that belongs to two different tournaments must be the source of at least one of them, because otherwise it would have parents from two different tournaments, yielding a forbidden v-structure.
		
	Let $X_{V_{\tau}}$ be a max-linear model restricted to a single tournament, $\tau$. The coefficients associated to $X_{V_{\tau}}$, denote by $b_{vi}^\tau$ for $v,i \in V_\tau$, are determined by the weights of the edges $e \in E_\tau$. 
	By Lemma~\ref{lem:vtau-max:2}, we can modify the edge weights $c_e$ for $e \in E_\tau$ in such a way that the max-linear coefficients $b_{vi}^\tau$ for $v \in V_\tau \setminus u$ and $i \in V_\tau$ remain unaffected. 
	Let $\tilde{c}_e$ for $e \in E_\tau$ denote such a modified vector of edge weights. Define $\theta' = (c_e')_{e \in E} \in \mathring{\Theta}_*$ by
	\begin{align*}
		c_e' &= 
		\begin{cases} 
			\tilde{c}_e & \text{if $e \in E_\tau$,} \\
			c_e & \text{if $e \in E \setminus E_\tau$.}
		\end{cases}
	\end{align*}
	Then $\theta' \in \mathring{\Theta}_*$ too: indeed, $c_{ii}' > 0$ for all $i \in V$ by assumption on $\theta$ 
	and the fact the vector $(\tilde{c}_e : e \in E_\tau)$ satisfies $\tilde{c}_{ii} > 0$ for $i \in V_\tau$. By construction, the distribution of $X_{V_{\tau} \setminus u}$ is the same under $\theta'$ as under $\theta$.
	
	We show that the distribution of $X_{V \setminus u}$ under $\theta'$ is the same as the one under $\theta$. We proceed by induction on the number of tournaments.
	
	If $\T$ consists of a single tournament, then $\T = \tau$ and there is nothing more to show.
	
	So suppose $\T$ consists of $m \ge 2$ tournaments. The skeleton graph of $\T$ is a block graph and thus a decomposable graph. By the running intersection property, we can order the tournaments $\tau_1,\ldots,\tau_m$ with node sets $V_1,\ldots,V_m$ in such a way that $\tau_1 = \tau$, the tournament containing $u$, and such that $V_m \cap (V_1 \cup \ldots \cup V_{m-1})$ is a a singleton, say $\{s\}$. Then $s \ne u$ since $u$ belongs to only a single tournament. 
	
	Write $W = V_1 \cup \ldots \cup V_{m-1} \setminus u$. The joint distribution of $X_{V \setminus u}$ can be factorized into two parts: first, the distribution of $X_{W}$ and second, the conditional distribution of $X_{V_m \setminus s}$ given $X_{W}$. It is sufficient to show that both parts remain the same when $\theta$ is replaced by $\theta'$. 	
	\begin{itemize}
	\item
		By the induction hypothesis, the distribution of $X_{W}$ is the same under $\theta'$ as under $\theta$.
	\item
		By the global Markov property (Proposition~\ref{prop:markov_ml}), the conditional distribution of $X_{V_m \setminus s}$ given $X_{W}$ is the same as the conditional distribution of $X_{V_m \setminus s}$ given $X_s$. But the latter is determined by the joint distribution of $X_{V_m}$, which, in turn, only depends on the weights of the edges $e$ in $\tau_m$. By construction, these edge weights are the same under $\theta$ as under $\theta'$. It follows that the conditional distribution of $X_{V_m \setminus u}$ given $X_W$ is the same under $\theta'$ as under $\theta$.
	\end{itemize}
	We conclude that the distribution of $X_{V \setminus u}$ is the same under $\theta'$ as under $\theta$. Since $\theta \ne \theta'$, the parameter is not identifiable.
	\medskip

	\noindent\emph{Case~(2): any $u \in \bar{U}$ is the source of some tournament in $\T$ but there exists $u \in \bar{U}$ with less than two children.}
	Any $u \in \bar{U}$ must have at least one child (a node without children cannot be the source of a tournament). But then there exists $u \in \bar{U}$ with exactly one child: $\ch(u) = \{w\}$. The tournament of which $u$ is the source can only consist of the nodes $u$ and $w$ and the edge $(u, w)$. Now there are two subcases, depending on whether $u$ has any parents or not.
	\smallskip
	
	\emph{Case~(2).a: $u$ has no parents.}
	Then $u$ is the source node of $\T$ with a single child $w$.
	Removing the node $u$ yields the ttt $\T_{\setminus u} := (V \setminus u, E \setminus \{(u, w)\})$ with single source $w$. The random vector $X_{V \setminus u}$ follows the recursive max-linear model \eqref{eqn:mlgm} with respect to $\T_{\setminus u}$. Its distribution is determined by the coefficients $c_e$ for $e \in E \setminus \{(u, w)\}$. The value of $c_{uw}$ can thus be chosen arbitrarily in $(0, 1)$ without affecting the distribution of $X_{V \setminus u}$.
	\smallskip
	
	\emph{Case~(2).b: $u$ has parents.}
	Any ancestor of $w$ different from $u$ must be an ancestor of $u$ too, since otherwise there would be a v-structure at $w$
; therefore,
	\begin{equation*} 
		\An(w) = \an(u) \cup \{u, w\}. 
	\end{equation*}
	Let $\lambda > 0$ and close enough to $1$ (as specified below). Define $\theta' = (c_e')_{e \in E}$ by modifying the weights of edges adjacent to $u$: specifically,
	\begin{align*}
		c_{ju}' &= \lambda c_{ju}, \qquad j \in \pa(u); \\
		c_{uw}' &= \lambda^{-1} c_{uw}; \\
		c_{e}' &= c_e, \qquad e \in E \setminus [\{(j,u) : j \in \pa(u)\} \cup \{(u, w)\}].
	\end{align*}
	In words, $\theta'$ coincides with $\theta$ for edges $e$ that do not involve $u$, and $\theta' = \theta$ if and only if $\lambda = 1$. Since the parameter space $\mathring{\Theta}_*$ is open, $\theta'$ belongs to $\mathring{\Theta}_*$ for $\lambda$ sufficiently close to $1$. We claim that the distribution of $X_{V \setminus u}$ is invariant under $\lambda$. Hence, for $\lambda$ different from but sufficiently close to $1$, we have found a parameter $\theta' \neq \theta$ producing the same distribution of $X_{V \setminus u}$ as $\theta$.
	
	Under $\theta'$, the random vector $X_{V \setminus u}$ follows the max-linear model
	\[
		X_v = \bigvee_{i \in \An(v)} b_{vi}' Z_i, \qquad v \in V \setminus u,
	\]
	where $(Z_i)_{i \in V}$ is a vector of independent unit-Fréchet random variables and where the coefficients $b_{vi}'$ are given by equations \eqref{eq:bvi}, \eqref{eq:sumibvi} and \eqref{eq:cvv} with $c_e$ replaced by $c_e'$. 
	
	First, suppose $v \in V \setminus u$ is not a descendant of $u$. Then for any $i \in \An(v)$, the coefficient $b_{vi}'$ is a function of edge weights $c_e'$ for edges $e \in E$ different from $(u, w)$ and from $(j, u)$ for $j \in \pa(u)$. It follows that $c_e' = c_e$ for such edges and thus $b_{vi}' = b_{vi}$ for $v \in V \setminus \Desc(u)$ and $i \in \An(v)$.
	
	Second, suppose $v \in \desc(u)$. Then necessarily $v \in \Desc(w)$ too and for any $i \in \An(u)$, the path $p(i, v)$ passes by (or arrives in) $w$. Furthermore, any ancestor of $v$ not in $\An(w)$ is a descendant of $w$:
	\begin{equation}
	\label{eq:Anv}
		\An(v) = \an(u) \cup \{u, w\} \cup [\desc(w) \cap \An(v)].
	\end{equation}
	It follows that
	\[
		X_v = \bigvee_{i \in \an(u)} b_{vi}' Z_i
		\vee \left( b_{vu}' Z_u \vee b_{vw}' Z_w \right)
		\vee \bigvee_{i \in \desc(w) \cap \An(v)} b_{vi}' Z_i,
	\]
	where the last term on the right-hand side is to be omitted if $v = w$. We treat each of the three terms on the right-hand side separately.
	\begin{itemize}
	\item
		For $i \in \an(u)$, we have
		\[
			b_{vi}' = c_{ii}' c_{p(i,v)}' = c_{ii}' c_{p(i,u)}' c_{uw}' c_{p(w,v)}'
		\]
		where $c_{p(w,v)}' = 1$ if $v = w$. The coefficients $c_{ii}'$ and $c_{p(w, v)}'$ only involve weights $c_e'$ for edges $e \in E$ different from $(u, w)$ and $(j, u)$ for $j \in \pa(u)$; it follows that $c_{ii}' = c_{ii}$ and $c_{p(w,v)}' = c_{p(w,v)}$. Further, given $i \in \an(u)$ there exists $j \in \pa(u)$ such that $p(i, u)$ passes by $j$ right before reaching $u$ (with $i = j$ if $i \in \pa(u)$), and then
		\[
			c_{p(i,u)}' c_{uw}' 
			= c_{p(i,j)}' c_{ju}' c_{uw}'
			= c_{p(i,j)}' (\lambda c_{ju}) (\lambda^{-1} c_{uw}).
		\]
		Since $p(i,j)$ does not involve edges meeting $u$, we find that $c_{p(i,j)}' = c_{p(i,j)}$, so that the above expression does not depend on $\lambda$. We conclude that $b_{vi}' = b_{vi}$ for $i \in \an(u)$.
	\item
		If $v \in \desc(w)$ and $i \in \desc(w) \cap \An(v)$, the coefficient $b_{vi}'$ is
		\[
			b_{vi}' = c_{ii}' c_{p(i,v)}'.
		\]
		The path $p(i,v)$ does not involve edges touching $u$, so $c_{p(i,v)'} = c_{p(i,v)}$.
		By Lemma~\ref{lem:bzz}, the coefficient $c_{ii}' = b_{ii}'$ is a function of the edge weights $c_e'$ for edges $e$ in the tournament shared by $i$ and its parents. Since $i \in \desc(w)$ and since $w$ is the only child of $u$, none of these edges touches $u$, and thus $c_{e}' = c_{e}$ for all such edges.
		It follows that $c_{ii}' = c_{ii}$ too. We conclude that $b_{vi}' = b_{vi}$ for $v \in \desc(w)$ and $i \in \desc(w) \cap \An(v)$.
	\item
		The random variable $b_{vu}' Z_u \vee b_{vw}' Z_w$ is independent of all other variables $Z_i$ for $i \in V \setminus \{u,w\}$ and its distribution is equal to $(b_{vu}' + b_{vw}') Z$ for $Z$ a unit-Fréchet variable.
		We will show that $b_{vu}' + b_{vw}'$ does not depend on $\lambda$. 
		Since $1 = \sum_{i \in \An(v)} b_{vi}'$, the partition \eqref{eq:Anv} yields
		\[
			b_{vu}' + b_{vw}'
			= 1 - \sum_{i \in \an(u)} b_{vi}' - \sum_{i \in \desc(w) \cap \An(v)} b_{vi}'.
		\]
		(The last sum on the right-hand side is zero if $v$ is not a descendant of $w$.)
		In the two previous bullet points, we have already shown that the coefficients $b_{vi}'$ for $i$ in $\an(u)$ or $\desc(w) \cap \An(v)$ do not depend on $\lambda$. 
		By the stated identity, the sum $b_{vu}' + b_{vw}'$ then does not depend on $\lambda$ either.
	\end{itemize}

	We have thus shown that if $\bar{U}$ does not satisfy \ref{ident1}--\ref{ident2}, then we can find $u \in \bar{U}$ such that the distribution of $X_{V \setminus u}$ is the same under $\theta'$ as under $\theta$. As $\theta' \ne \theta$ by construction, this means that the parameter $\theta$ is not identifiable from the distribution of $X_{V \setminus u}$. But as $U = V \setminus \bar{U} \subseteq V \setminus \{u\}$, the parameter $\theta$ is not identifiable from the distribution of $X_U$ either. This confirms the necessity of \ref{ident1}--\ref{ident2} for the identifiability of $\theta$ from the distribution of $X_U$.
\end{proof}

\section*{Acknowledgements}

The comments and suggestions by two anonymous Reviewers have been greatly helpful in the preparation of the final version of the manuscript.
Stefka Asenova is particularly grateful to Eugen Pircalabelu and Ngoc Tran for their availability and precious help.

	\bibliography{bibliography_pr3}

\end{document}